\documentclass{article}
\usepackage{graphicx}
\usepackage{subfig}
\usepackage{amsfonts}
\usepackage{amsmath}
\usepackage{bm}
\usepackage{booktabs}
\usepackage{array}
\usepackage{amssymb}
\usepackage{pifont}
\usepackage{url}
\usepackage{fancyhdr}
\usepackage{indentfirst}
\usepackage{enumerate}
\usepackage{needspace}
\usepackage{longtable}
\usepackage{float} 
\usepackage{placeins}

\usepackage{dsfont}
\usepackage{setspace}
\usepackage[colorlinks=true,citecolor=blue]{hyperref}
\usepackage{amsthm}
\usepackage{color}
\usepackage{natbib}
\usepackage{comment}
\usepackage{capt-of}
\usepackage{multirow}

\def\d{\mathrm{d}}

\newcommand{\ES}{\mathrm{ES}}

\newcommand{\X}{\mathcal {X}}

\newcommand{\VaR}{\mathrm{VaR}}
\newcommand{\VaRp}{\mathrm{VaR}^+}
\newcommand{\RVaR}{\mathrm{RVaR}}

\newcommand{\R}{\mathbb{R}}

\newcommand{\N}{\mathbb{N}}
\newcommand{\p}{\mathbb{P}}

\newcommand{\M}{\mathcal{M}}

\newcommand{\U}{\mathrm{U}}
\newcommand{\id}{\mathds{1}}

\renewcommand{\ge}{\geqslant}
\renewcommand{\le}{\leqslant}

\renewcommand{\epsilon}{\varepsilon}

\renewcommand{\cdots}{\dots}

\theoremstyle{plain}
\newtheorem{theorem}{Theorem}

\newtheorem{lemma}{Lemma}
\newtheorem{proposition}{Proposition}
\theoremstyle{definition}

\newtheorem{example}{Example}

\theoremstyle{remark}
\newtheorem{remark}{Remark}

\newcommand{\cet}{\begin{center}}
	\newcommand{\ecet}{\end{center}}

\usepackage{refcount}
\setrefcountdefault{1}

\newcounter{saveexample}

\allowdisplaybreaks[4]

\begin{document}
\sloppy

\title{Robust distortion riskmetrics under Wasserstein ambiguity}

\author{	
       Yang Liu\thanks{\scriptsize 
       School of Science and Engineering, The Chinese University of Hong Kong (Shenzhen), Shenzhen, Guangdong 518172, China. Email: \texttt{yangliu16@cuhk.edu.cn}}		
\and
       Qiuqi Wang\thanks{\scriptsize  
		Maurice R.~Greenberg School of Risk Science, Robinson College of Business, Georgia State University, Atlanta, Georgia 30303, U.S.A. Email: \texttt{qwang30@gsu.edu}}		
\and
       Yihan Wang\thanks{\scriptsize  
		Corresponding author.
		School of Science and Engineering, The Chinese University of Hong Kong (Shenzhen), Shenzhen, Guangdong 518172, China. Email: \texttt{yihanwang6@link.cuhk.edu.cn}}
}

\maketitle

\begin{abstract}
Risk evaluation under distributional ambiguity is central to decision
making in finance, economics, and operations research. Wasserstein balls
provide a natural way to describe uncertainty around a reference
distribution.
We solve a natural yet open problem of robust optimization for the class of distortion riskmetrics with Wasserstein distance as the sole ambiguity constraint.
This chosen objective class does not require convexity, monotonicity, and continuity of distortion functions, encompassing many common risk measures and deviation measures.
First, we
characterize conditions under which direct convexification preserves the worst-case value.
Second, we develop a constructive method for exact worst-case evaluation when the direct convexification conditions fail.
Third, we construct explicit approximate worst-case distributions and
provide computable error bounds to assess their accuracy without
solving the exact problem. We apply these results to distributionally
robust portfolio selection and use numerical experiments to assess
approximation accuracy and the resulting portfolio decisions.
\vskip 10pt


\noindent
{\bf JEL Classifications}: C61, D81, G11

\vskip 10pt

\noindent
{\bf Keywords:} Distributionally robust optimization, risk measures, Wasserstein distance, convexification, robust portfolio selection
\end{abstract}

\section{Introduction}\label{sec:intro}

Risk measures assign numerical values to uncertain outcomes and thereby
provide criteria for assessing risk, comparing alternatives, and expressing
preferences. They are used across finance, insurance, economics, statistics,
and operations research. Measures of risk and variability capture different
aspects of a distribution, such as the severity of losses and their
dispersion; see \citet{ADEH99} and \citet{RUZ06}.

In decision problems, risk measures appear as objectives or constraints that
guide the choice of portfolios, insurance allocations, and operational
plans. Their use requires a probability model for the
uncertain outcomes, but this model is rarely known precisely. Distributionally robust
optimization (DRO) addresses this ambiguity by evaluating a decision under
the least favorable distribution in a prescribed ambiguity set.
Moment constraints \citep{P07,DY10} and probability constraints
\citep{WKS14} are common specifications of ambiguity sets; more general
formulations and their connections to risk aversion are surveyed by
\citet{RM22}.
DRO has been studied for expected utility \citep{NSU10,CHZ11}. The two most common risk measures,
Value-at-Risk (VaR) \citep{EOO03,NPS08} and Expected Shortfall (ES)
\citep{ZF09}, have also been widely studied under distributional ambiguity.
Under moment constraints, worst-case bounds have been obtained for
coherent risk measures \citep{L18}, Range Value-at-Risk (RVaR)
\citep{LSWY18}, and general distortion risk measures \citep{ZS18,SZ24}.
For the latter class, \citet{CLM25} develop tractable reformulations
under moment and support constraints, whereas \citet{PWW25} show a unified framework to optimize distortion riskmetrics with a general ambiguity set closed under concentration.

When a reference distribution is available, Wasserstein distance provides
a natural way to describe nearby alternatives. It measures the cost of
transporting probability mass and allows the alternative distributions to
have supports different from that of the reference model.
\citet{PW07} maximize worst-case expected portfolio returns over
Wasserstein balls under risk constraints.
\citet{MK18} compute worst-case expected costs over Wasserstein balls
around empirical distributions using convex programming.
\citet{BM19} derive dual formulas for worst-case expectations under
Wasserstein and more general transport constraints.
\citet{GK23} characterize distributions attaining worst-case expectations
over Wasserstein balls.
These expectation objectives are linear in the probability distribution.
For risk measures, \citet{W14} derives closed-form robust convex risk measures under Kantorovich/Wasserstein ambiguity. \citet{LMWW22} obtain explicit worst-case values for
coherent distortion risk measures and VaR over Wasserstein balls.
\citet{CHLW23} derive sharp RVaR bounds under $W_1$ ambiguity with
support information. \citet{BLLW25} obtain sharp bounds for robust VaR and RVaR aggregation problems with full dependence uncertainty. 

A natural and fundamental question is how to solve the corresponding
DRO problem for a nonconvex risk measure under a $p$-Wasserstein ($W_p$) ball alone, which is the question we study.
Specifically, let $\mathbf X\in\R^d$ be the random loss,
$P_0$ be the reference distribution, and $\bm\omega\in\mathcal W$
be a feasible decision. For a risk measure $\rho$, the robust problem is
\begin{equation}
\inf_{\bm\omega\in\mathcal W}
\sup_{F_{\mathbf X}\in\mathcal M_{r,p}^d(P_0,\epsilon)}
\rho\bigl(\bm\omega^\top\mathbf X\bigr),
\label{eq:dro_multi}
\end{equation}
where $\mathcal M_{r,p}^d(P_0,\epsilon)$ is the $p$-Wasserstein ball of
radius $\epsilon>0$ and order $1\le p<\infty$, with transportation
cost induced by the $\ell^r$-norm.
Despite its simple formulation, this question has not been fully
addressed.
The challenge is to determine how a single transportation budget is
allocated among changes in the mean, dispersion, and shape of the
distribution. Unlike moment constraints, which only restrict distributional summaries such as the mean and the variance, the transportation cost depends on the full reference distribution. This makes the natural problem in \eqref{eq:dro_multi} highly nontrivial to solve technically.

\citet{BPV24} and \citet{LVX25} study distortion objectives under a $W_2$
constraint together with fixed mean and variance. Fixing the second
moment makes the squared $W_2$ constraint linear in the quantile
function. In the above sense, their problem is essentially different from ours with only
Wasserstein ambiguity, where moment constraints provide extra essential distributional information; \citet{CHLW23} also emphasize this distinction.
\citet[Section~4.2]{BPV24} discuss the Wasserstein-only extension of their framework but consider only concave distortions.

To accommodate a broad range of nonconvex criteria, we choose our objective risk measure $\rho$ as a \emph{distortion riskmetric} introduced by \citet{WWW20a} and \citet{WWW20b}, which is defined as a signed Choquet integral. This class of risk measures connects
$L$-functionals in statistics \citep{HR09}, Yaari's dual utilities in
decision theory \citep{Y87}, distorted insurance premium principles
\citep{WYP97}, and spectral risk measures in finance \citep{A02}.
It includes VaR, ES, and RVaR as special cases.
Moreover, they include inverse-S-shaped distortion risk measures, describing probability weighting in cumulative prospect theory \citep{TK92}. Distortion riskmetrics allow nonmonotone distortions and thus include
common deviation measures \citep{RUZ06}, such as inter-quantile differences, Gini
deviations, and mean-median deviations \citep[see e.g.,][]{WWW20b}. A more important advantage of allowing nonmonotonicity is that we are able to solve the best-case problem from the worst-case one, noting that $\inf \rho_{h}=-\sup\rho_{-h}$ (see Remark \ref{rem:bc}).

\begin{table}[!t]
\centering
\small
\setstretch{1.0}
\setlength{\tabcolsep}{2.8pt}
\renewcommand{\arraystretch}{1.25}
\caption{Comparing literature of distributionally robust optimization problems for risk measures.}
\label{tab:literature}
\begin{tabular}{@{}>{\centering\arraybackslash}m{1.80cm}
>{\centering\arraybackslash}m{3.10cm}
>{\centering\arraybackslash}m{2.15cm}
>{\centering\arraybackslash}m{2.50cm}
>{\centering\arraybackslash}m{1.80cm}
>{\centering\arraybackslash}m{2.25cm}
>{\centering\arraybackslash}m{1.75cm}@{}}
\toprule
Literature & Ambiguity set & {\footnotesize Wasserstein ambiguity alone} & Objective risk measure & {\footnotesize Nonconvexity allowed} & {\footnotesize Nonmonotonicity allowed} & {\footnotesize Discontinuity of $h$ allowed} \\
\midrule
EOO03, NPS08, CHZ11, LSWY18 & Moment constraints & \ding{53} & VaR, ES, or RVaR & --- & --- & --- \\
\addlinespace[4pt]
NSU10, L18 & Moment constraints & \ding{53} & Convex risk measures & \ding{53} & \ding{53} & --- \\
\addlinespace[4pt]
ZS18, SZ24, CLM25 & Moment constraints & \ding{53} & Distortion risk measures & \ding{52} & \ding{53} & \ding{52} \\
\addlinespace[4pt]
ZF09 & Probability constraints & \ding{53} & ES & --- & --- & --- \\
\addlinespace[4pt]
PWW25 & Concentration-closed sets & \ding{53} & Distortion riskmetrics & \ding{52} & \ding{52} & \ding{52} \\
\addlinespace[4pt]
CHLW23 & $W_1$ balls or moment constraints + support information & \ding{53} & RVaR & --- & --- & --- \\
\addlinespace[4pt]
BPV24 & $W_2$ balls + moment constraints & \ding{53} & Distortion risk measures & \ding{52} & \ding{53} & \ding{53} \\
\addlinespace[4pt]
LVX25 & $W_2$ balls + moment constraints & \ding{53} & Distortion riskmetrics & \ding{52} & \ding{52} & \ding{52} \\
\addlinespace[4pt]
W14 & $W_p$ balls & \ding{52} & Convex risk measures & \ding{53} & \ding{53} & --- \\
\addlinespace[4pt]
LMWW22 & $W_p$ balls & \ding{52} & Distortion risk measures & \ding{53} & \ding{53} & \ding{53} \\
\midrule
\textbf{This paper} & \textbf{$W_p$ balls} & \ding{52} & \textbf{Distortion riskmetrics} & \ding{52} & \ding{52} & \ding{52} \\
\bottomrule
\end{tabular}
\par\vspace{5pt}
\begin{minipage}{0.98\textwidth}
\footnotesize
\setstretch{1.05}
\textit{Notes.}
EOO03: \citet{EOO03}; NPS08: \citet{NPS08}; CHZ11: \citet{CHZ11}; LSWY18: \citet{LSWY18}; NSU10: \citet{NSU10}; 
L18: \citet{L18}; 
ZS18: \citet{ZS18}; SZ24: \citet{SZ24}; CLM25: \citet{CLM25};
ZF09: \citet{ZF09}; PWW25: \citet{PWW25}; CHLW23: \citet{CHLW23}; BPV24: \citet{BPV24}; LVX25: \citet{LVX25}; W14: \citet{W14};
LMWW22: \citet{LMWW22}. We only record convexity, monotonicity, and continuity assumptions for general classes of risk measures. For specific ones, $\VaR$ is nonconvex and has discontinuous distortion function; $\ES$ is convex and has continuous distortion function; $\RVaR_{\alpha,\beta}$ is nonconvex for $0<\alpha\le\beta<1$ and has continuous distortion function for $0<\alpha<\beta\le 1$; $\VaR$, $\ES$, and $\RVaR$ are all monotone.
\end{minipage}
\end{table}

We solve the DRO problem under Wasserstein ambiguity for
distortion riskmetrics, without requiring convexity or monotonicity of objective risk measures, or absolute continuity of distortion functions. To our best knowledge, such a problem has not been addressed even for general distortion risk measures. Table~\ref{tab:literature} compares our formulation with existing literature on DRO
problems in terms of their objective classes, ambiguity sets, and
required assumptions.

Our main contribution is three-fold.
First, we determine conditions under which the common idea of direct convexification \citep[see e.g.,][]{ZKR13,CLM25} can
solve the original problem. Building on the concentration principle of \citet{PWW25},
Theorem~\ref{thm:flatness} characterizes those conditions when replacing a risk measure by its
convex counterpart (i.e., replacing the general nonconcave distortion function by its concave envelope) preserves the worst-case value. This clarifies cases when
existing convex methods give a sharp risk assessment and when such a
replacement can be overly conservative.
Second, we obtain the exact worst-case solution as the limit of solutions
to a sequence of solvable problems with regularized distortions
(Theorem~\ref{thm:regularization}).
This provides a constructive method for evaluating the original risk measure when direct convexification conditions fail. We also give a discrete implementation with convergence
guarantees.
Third, we provide explicit approximations for settings where calculating the exact worst-case values by taking the limit would be costly and hence unavailable. Theorem~\ref{thm:radius_bound} gives
feasible distributions with computable risk values and bounds on their
approximation error. These bounds show how accuracy depends on the size
of the ambiguity set and allow a decision maker to assess the approximation
without first solving the exact problem. This offers a practical tradeoff
between computational cost and guaranteed accuracy when risk evaluation
is part of a larger decision problem.

We apply these three approaches to robust portfolio selection under
elliptical benchmarks, obtaining exact and approximate solutions
and bounds on the resulting decision error. Spherical benchmarks admit
explicit portfolio weights. Numerical experiments assess portfolio
performance and approximation accuracy.

The remainder of the paper is organized as follows.
Section~\ref{sec:setting} introduces the setup and preliminary results.
Section~\ref{sec:convexification} characterizes the conditions under which direct convexification
preserves the worst-case value.
Section~\ref{sec:regularization} solves the worst-case problem through
regularization and presents a discrete implementation.
Section~\ref{sec:gap_bounds} constructs explicit approximate worst-case
distributions and bounds their approximation error.
Section~\ref{sec:portfolio} applies the results to portfolio selection,
and Section~\ref{sec:numerics} presents the numerical experiments.
Concluding remarks discuss the implications for robust risk evaluation
and decision making.

\section{Preliminaries and problem setup}\label{sec:setting}

\subsection{Wasserstein distances, ambiguity, and dimension reduction}
\label{subsec:wasserstein}

Let $(\Omega,\mathcal F,\mathbb P)$ be an atomless probability space. Let $\X$ be a set of single-dimensional random variables and $\M$ be a set of distributions on $\R$. For $p\in[1,\infty)$ and $d\in\N$, let $\mathcal M_p^d$ be the set of distributions on $\mathbb R^d$
with finite $p$-th moment in each component and let $L^p$ be the set of random variables with finite $p$-th moment. Let $L^\infty$ and $\M^d_\infty$ be those with bounded support. For
$d=1$, write $\mathcal M_p=\mathcal M_p^1$. For $x,y\in\mathbb R$, write $x\vee y=\max\{x,y\}$ and $x_+=x\vee0$. Throughout, we adopt the convention $0/0=0$. All terms of ``increasing" and ``decreasing" are in the nonstrict sense.

For $F\in\mathcal M_1$, define its left- and right-quantile functions,
respectively, by
\[
F^{-1}(u)=\inf\{x\in\mathbb R:F(x)\ge u\},
\qquad
F^{-1+}(u)=\inf\{x\in\mathbb R:F(x)>u\},
\qquad u\in[0,1],
\]
with the convention $\inf\emptyset=\infty$. Both quantiles are finite
for $0<u<1$; their endpoint values may be infinite.
For a random variable $X$, write $F_X$ as its distribution function.

For $F,G\in\mathcal M_p$, the \emph{one-dimensional $p$-Wasserstein distance} is defined as
\[
W_p(F,G)
=
\left(\int_0^1|F^{-1}(u)-G^{-1}(u)|^p\,\mathrm d u\right)^{1/p}.
\]
For a benchmark $G\in\mathcal M_p$ and a radius $\epsilon\ge0$, the Wasserstein distance induces the ambiguity described by a \emph{one-dimensional $p$-Wasserstein ball}, defined as
\[
\mathcal M_p(G,\epsilon)
=
\{F\in\mathcal M_p:W_p(F,G)\le\epsilon\}.
\]
For the multivariate problem, fix $r\ge1$ and let $\|\cdot\|_r$ be the
$\ell^r$-norm on $\mathbb R^d$. The associated \emph{$p$-Wasserstein distance} is
\[
W_{r,p}^d(F,G)
=
\inf_{F_{\mathbf X}=F,\,F_{\mathbf Y}=G}
\left(\mathbb E\|\mathbf X-\mathbf Y\|_r^p\right)^{1/p},
\qquad F,G\in\mathcal M_p^d.
\]
For $P_0\in\mathcal M_p^d$ and $\epsilon\ge0$, the corresponding \emph{$p$-Wasserstein ball} is
\[
\mathcal M_{r,p}^d(P_0,\epsilon)
=
\{F\in\mathcal M_p^d:W_{r,p}^d(F,P_0)\le\epsilon\}.
\]


Fix a benchmark random vector $\mathbf X_0\in \X^d$ with distribution $P_0$ and a nonempty
decision set $\mathcal W\subseteq\mathbb R^d$. A decision
$\bm\omega\in\mathcal W$ produces the aggregate loss
$\bm\omega^\top\mathbf X$ from the random vector $\mathbf X\in\mathbb R^d$ (e.g., the future outcome of assets). Let
$\rho:\mathcal M_p\to\R$ be a general objective functional
of the aggregate loss distribution, which is not necessarily convex. Define
\begin{equation}\label{eq:wc_multi}
V_\rho(\bm\omega)
=
\sup_{F_{\mathbf X}\in\mathcal M_{r,p}^d(P_0,\epsilon)}
\rho\bigl(F_{\bm\omega^\top\mathbf X}\bigr).
\end{equation}
The dependence on $\epsilon$ is suppressed in $V_\rho$.
The main decision problem~\eqref{eq:dro_multi} is therefore
$\inf_{\bm\omega\in\mathcal W}V_\rho(\bm\omega)$.

Let the aggregated reference distribution be $G_{0,\bm\omega}=F_{\bm\omega^\top\mathbf X_0}$ and the radius be $\epsilon_{\bm\omega}=\|\bm\omega\|_{r'}\epsilon$.
Theorem~5 of \citet{MWW25} gives an exact linear projection of a Wasserstein ball \citep[see also its use in Example~12 of][]{
PWW25}:
\begin{align}
\left\{
F_{\bm\omega^\top\mathbf X}:
F_{\mathbf X}\in\mathcal M_{r,p}^d(P_0,\epsilon)
\right\}=
\mathcal M_p\bigl(G_{0,\bm\omega},\epsilon_{\bm\omega}\bigr).
\label{eq:linear_projection}
\end{align}
The result \eqref{eq:linear_projection} shows that for every $\bm\omega\in\mathcal W$ for which the supremum is well defined, the ambiguity characterized by a multi-dimensional Wasserstein ball boils down to a one-dimensional Wasserstein ball with reference distribution $G_{0,\bm\omega}$ and radius $\epsilon_{\bm\omega}$. Consequently, the original worst-case functional in \eqref{eq:wc_multi} becomes
\begin{equation}\label{eq:wc_one}
V_\rho(\bm\omega)
=
\sup_{F\in\mathcal M_p(G_{0,\bm\omega},\epsilon_{\bm\omega})}\rho(F),
\end{equation}
and the original multivariate problem in \eqref{eq:dro_multi} is exactly
\[
\inf_{\bm\omega\in\mathcal W}
\sup_{F\in\mathcal M_p(G_{0,\bm\omega},\epsilon_{\bm\omega})}\rho(F).
\]

\subsection{Main robust optimization problem with distortion riskmetrics}
\label{subsec:riskmetrics}
\label{subsec:envelopes}

We now specialize the general objective functional $\rho$ to a specific general class of risk measures. 
Let
\[
\mathcal H
=
\{h:[0,1]\to\mathbb R:h\text{ is of bounded variation and }h(0)=0\}.
\]
For a \emph{distortion function} $h\in\mathcal H$, a \emph{distortion riskmetric} \citep{WWW20a,WWW20b} is a signed Choquet integral $\rho_h:\X\to\R$ given by
\[
\rho_h(X)
=
\int_0^\infty h\bigl(\mathbb P(X>x)\bigr)\,\mathrm d x
+
\int_{-\infty}^0
\left[h\bigl(\mathbb P(X>x)\bigr)-h(1)\right]\,\mathrm d x,
\]
whenever the two integrals are finite.
Because $\rho_h(X)=\rho_h(Y)$ whenever $X\sim F_Y$, we use the shorthand $\rho_h(F_Y)=\rho_h(Y)$ for a random variable $Y$.
Throughout, whenever a Wasserstein order $p$ is fixed, we assume $L^p\subseteq \X$. As a consequence, every distortion
riskmetric appearing in an optimization problem is assumed to be real-valued
on $\mathcal M_p$, unless stated otherwise.

Adopting distortion riskmetrics as our specific objective functionals, the inner problem \eqref{eq:wc_one} becomes
\begin{equation}\label{eq:wc_decision}
    V_h(\bm\omega):=V_{\rho_h}(\bm\omega)
=
\sup_{F\in\mathcal M_p(G_{0,\bm\omega},\epsilon_{\bm\omega})}\rho_h(F).
\end{equation}
For each fixed decision $\omega$, we suppress the dependence of the projected benchmark $G_{0, \omega}$ and radius $\varepsilon_\omega$ on $\omega$, writing them as $G_0$ and $\varepsilon$, respectively, and use
\begin{equation}\label{eq:wc_distortion}
S_h(\epsilon):=
\sup_{F\in\mathcal M_p(G_{0},\epsilon)}\rho_h(F).
\end{equation}

\begin{remark}[Best-case distortion riskmetrics]\label{rem:bc}
Importantly, distortion riskmetrics are not necessarily monotone, and we have $\rho_{-h}=-\rho_h$. As a direct consequence, we can also solve the best-case problem of distortion riskmetrics by the solution to \eqref{eq:wc_distortion} with distortion function $-h$. Namely,
$$\inf_{F\in\mathcal M_p(G_{0},\epsilon)}\rho_{h}(F)=-\left\{\sup_{F\in\mathcal M_p(G_{0},\epsilon)}\rho_{-h}(F)\right\}=-S_{-h}(\epsilon).$$
In this sense, all results in the worst-case problem of distortion riskmetrics throughout the paper can naturally induce their best-case counterparts whenever $-h$ also satisfies the assumptions in the results.
\end{remark}

Our main analysis focuses on solving or approximating the inner problem
$S_h(\epsilon)$ in \eqref{eq:wc_distortion}. Once this value is available, the outer problem amounts to
minimizing the decision-dependent value
$V_h(\bm\omega)$ in \eqref{eq:wc_decision} over $\bm\omega\in\mathcal W$.
Section~\ref{sec:portfolio} develops the three approaches for the outer
portfolio problem under an elliptical benchmark.

\subsection{Properties of distortion riskmetrics and concave envelopes}

The following assertions, proved by \cite{WWW20a}, are useful for subsequent discussions: $\rho_h$ is convex if and only if $h$ is concave. Moreover, if $h$ is absolutely continuous, define
$\psi_h(u)=h'(1-u)$ for Lebesgue-a.e.~$u\in(0,1)$.
We have the following quantile representation:
\begin{equation}
\rho_h(X)
=
\int_0^1F_X^{-1}(u)\psi_h(u)\,\mathrm d u,~~X\in\X.
\label{eq:quantile_density}
\end{equation}

\begin{remark}
\label{rem:finiteness}
The finiteness assumption implies
\begin{equation}
\lim_{t\downarrow0}h(t)=h(0)=0
\qquad\mbox{and}\qquad
\lim_{t\uparrow1}h(t)=h(1).
\label{eq:h_endpoints}
\end{equation}
An endpoint jump would make a Choquet integral infinite for an unbounded
$X\in L^p$ of the appropriate sign. If the distortion function $h$ is concave, then
endpoint continuity implies absolute continuity. Moreover, because
$\psi_{h}$ is nondecreasing, it is clear that $\psi_{h}\in L^{p'}(0,1)$ by the finiteness of $\rho_{h}$ and $L^p$-duality; for $a>1$, denote its conjugate as $a'$ with $1/a+1/a'=1$ and set $a'=\infty$ when $a=1$.
\end{remark}

Because $h\in\mathcal H$ is of bounded variation, its left- and right-limits
exist. Let $h(t+)=\lim_{x\downarrow t}h(x)$, $t\in[0,1)$, and $h(t-)=\lim_{x\uparrow t}h(x)$, $t\in(0,1]$, be the right- and left-continuous versions of $h$, respectively. Write $h(1+)=h(1)$ and $h(0-)=h(0)$.
We define the upper-semicontinuous version
of $h$ as
\[
\widehat h(t)=h(t+)\vee h(t-)\vee h(t),\quad t\in(0,1),
\quad \widehat h(0)=0,\quad \widehat h(1)=h(1).
\]
Thus $\widehat h(t)=h(t)$ at every continuity point of $h$, and no
modification is made at the endpoints.
For $t\in[0,1]$, define the \emph{concave envelope} of $h$ by
\[
h^*(t)
=
\inf\{\ell(t):\ell\in\mathcal H,\ \ell\ge h,\
\ \ell\text{ is concave on }[0,1]\}.
\]
Equivalently, $h^*$ is the smallest concave distortion function dominating
$h$. By Proposition~1 of \citet{PWW25},
$h^*=(\widehat h)^*\ge\widehat h\ge h$.
Moreover, $\{t\in[0,1]:\widehat h(t)\ne h^*(t)\}$ is a union of disjoint open
intervals, and $h^*$ is linear on each such interval. We collect these intervals in the following set:
\[
\mathcal I_h
=
\left\{
(1-b,1-a):0\le a<b\le1,\
\widehat h\ne h^*\text{ on }(a,b),\
\ \widehat h(a)=h^*(a),\
\ \widehat h(b)=h^*(b)
\right\}.
\]

Distortion riskmetrics include many common risk measures and deviation measures \citep[see e.g.,][]{WWW20a}. Especially, if $h$ is increasing and $h(1)=1$, then $\rho_h$ is a distortion risk measure, which includes Value-at-Risk and Expected Shortfall, the most popular risk measures in finance and insurance.
For a random loss $X$ and a probability level $\alpha$, define
\emph{Value-at-Risk} (VaR) and \emph{upper Value-at-Risk} ($\VaR^+$) by
\[
\VaR_\alpha(X)=F_X^{-1}(\alpha),\quad \alpha\in(0,1],
\qquad
\VaRp_\alpha(X)=F_X^{-1+}(\alpha),\quad \alpha\in[0,1),
\]
and, for integrable $X$, define \emph{Expected Shortfall} (ES) \citep{RU02} by
\[
\ES_\alpha(X)
=
\frac{1}{1-\alpha}\int_\alpha^1\VaR_u(X)\,\mathrm d u,
\qquad \alpha\in(0,1).
\]
For $\alpha\in(0,1)$, let $h_\alpha(t)=\id_{(1-\alpha,1]}(t)$, $t\in[0,1]$.
By definition, we have
$$\widehat h_\alpha(t)=\id_{[1-\alpha,1]}(t)~~\mbox{and}~~ h_\alpha^*(t)=(\widehat h_\alpha)^*(t)=\left(\frac{t}{1-\alpha}\right)\wedge1,~~t\in[0,1].$$
We can directly check that $\rho_{h_\alpha}=\VaR_\alpha$, $\rho_{\widehat h_\alpha}=\VaRp_\alpha$, $\rho_{h_\alpha^*}=\ES_\alpha$, 
and $\mathcal I_{h_\alpha}=\mathcal I_{\widehat h_\alpha}=\{(\alpha,1)\}$.

For $0<\alpha<\beta\le1$, define \emph{Range Value-at-Risk} (RVaR) \citep{CDS10,ELW18} by
\[
\RVaR_{\alpha,\beta}(F)
=\frac1{\beta-\alpha}\int_\alpha^\beta F^{-1}(u)\,\mathrm d u.
\]
Its distortion and quantile density are
\[
h_{\alpha,\beta}(t)
=\left(\frac{(t-1+\beta)_+}{\beta-\alpha}\right)\wedge1,
\quad 0\le t\le1,
\qquad
\psi_{h_{\alpha,\beta}}(u)
=\frac{\mathbf1_{(\alpha,\beta)}(u)}{\beta-\alpha},
\quad 0<u<1.
\]
For $\beta<1$, $h_{\alpha,\beta}^*=h_\alpha^*$ and
$\mathcal I_{h_{\alpha,\beta}}=\{(\alpha,1)\}$.
At the endpoints, $\RVaR_{\alpha,1}=\ES_\alpha$ and
$\RVaR_{\alpha,\beta}(F)\to\VaRp_\alpha(F)$ as $\beta\downarrow\alpha$.

\section{Worst-case distortion riskmetrics by direct convexification}\label{sec:convexification}

A common way to handle a nonconvex worst-case objective $\rho_h$ is to equivalently replace it by the worst case of its
convex counterpart $\rho_{h^*}$. In this section, we examine the conditions when such an equivalence holds under Wasserstein ambiguity.

Let
$\mathcal I$ be a countable collection of pairwise disjoint, non-degenerate intervals in $[0,1]$, and let $\lambda$ denote Lebesgue measure. For
$F\in\mathcal M_1$, define the \emph{$\mathcal I$-concentration} of $F$ as the distribution $F^{\mathcal I}$ whose left-quantile
function is the left-continuous modification of
\[
F^{-1}(u)
\id_{\{u\notin\bigcup_{C\in \mathcal I}C\}}
+
\sum_{C\in \mathcal I}
\frac{\int_CF^{-1}(v)\,\mathrm d v}{\lambda(C)}
\id_{\{u\in C\}},
\qquad 0<u<1.
\]
An ambiguity set $\mathcal M\subset\mathcal M_1$ is \emph{closed under
concentration within $\mathcal I$} if $F^\mathcal I\in\mathcal M$ for every
$F\in\mathcal M$.
Under such closeness under concentration, \cite{PWW25} showed that there is an equivalence between the worst-case distortion riskmetric $\rho_h$ and its convex counterpart $\rho_{h^*}$. We present the result as the following lemma, which is useful for subsequent discussions.
\begin{lemma}[Theorem 1 of \citealp{PWW25}]
\label{lem:concentration}
Let $\mathcal M\subset\mathcal M_1$ be an ambiguity set and
$h\in\mathcal H$. Suppose that $h=\widehat h$ and $\mathcal M$ is closed
under concentration within $\mathcal I_h$. Then
\begin{equation}
\sup_{F\in\mathcal M}\rho_h(F)
=
\sup_{F\in\mathcal M}\rho_{h^*}(F).
\label{eq:concentration_equal}
\end{equation}
Moreover, if the supremum on the right-hand side of
\eqref{eq:concentration_equal} is attained at some
$F^*\in\mathcal M$, then $(F^*)^{\mathcal I_h}$ attains both suprema in
\eqref{eq:concentration_equal}.
\end{lemma}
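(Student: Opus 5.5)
The plan is to prove the two inequalities in \eqref{eq:concentration_equal} separately and then read off the attainment claim. The inequality $\sup_{F\in\mathcal M}\rho_h(F)\le\sup_{F\in\mathcal M}\rho_{h^*}(F)$ is immediate. Since $h\le h^*$ pointwise, the definition of $\rho_h$ as a signed Choquet integral gives $\rho_h(F)\le\rho_{h^*}(F)$ for every $F$. This works because both integrands are monotone in $h$, and the constant $h(1)=h^*(1)$ cancels in the negative part; the latter holds because $\widehat h(1)=h(1)$ and $h^*(1)=\widehat h(1)$ by endpoint continuity \eqref{eq:h_endpoints}.

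The main step is the reverse inequality. I will show that for every $F\in\mathcal M$,
\[
\rho_{h^*}(F)\le\rho_{h^*}(F^{\mathcal I_h})=\rho_h(F^{\mathcal I_h}).
\]
Closedness under concentration then gives $F^{\mathcal I_h}\in\mathcal M$, and taking suprema finishes the argument.

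\emph{The inequality.} By Remark~\ref{rem:finiteness}, $h^*$ is absolutely continuous, so \eqref{eq:quantile_density} gives $\rho_{h^*}(F)=\int_0^1F^{-1}\psi_{h^*}$. Moreover $h^*$ is linear on each interval $(a,b)$ with $(1-b,1-a)\in\mathcal I_h$, so $\psi_{h^*}$ is constant on each $C\in\mathcal I_h$. Replacing $F^{-1}$ on $C$ by its average therefore leaves $\int_C F^{-1}\psi_{h^*}$ unchanged. This actually gives equality, $\rho_{h^*}(F)=\rho_{h^*}(F^{\mathcal I_h})$, which is enough.

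\emph{The equality $\rho_{h^*}(G)=\rho_h(G)$ for $G=F^{\mathcal I_h}$.} This is where I expect the main obstacle. I will use the Choquet form $\int h(\mathbb P(X>x))\,\d x$. The quantile function of $G$ is constant on each $C\in\mathcal I_h$, so $G$ has an atom spanning $C$. Hence the survival function $t=1-G(x)$ never takes values in the open interval $(a,b)$, except possibly on a Lebesgue-null set of $x$. Outside these intervals $h=\widehat h=h^*$, using the hypothesis $h=\widehat h$. Hence $h(\mathbb P(G>x))=h^*(\mathbb P(G>x))$ for a.e.\ $x$, and the two Choquet integrals coincide.

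The delicate points are the endpoints $a,b$, where we need $\widehat h=h^*$; this holds by the definition of $\mathcal I_h$. One also needs that the survival function equals exactly $1-u$ at the boundary of an atom. The hypothesis $h=\widehat h$ (upper semicontinuity) is precisely what handles the left/right-continuity mismatch between $\mathbb P(X>x)$ and the quantile convention. I also need to handle the case of countably many intervals, and check that $G\in\mathcal M_1$ so that all integrals are finite. The latter follows from Jensen on each $C$.

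\emph{Attainment.} If $F^*$ attains $\sup\rho_{h^*}$, then
\[
\rho_h\bigl((F^*)^{\mathcal I_h}\bigr)=\rho_{h^*}\bigl((F^*)^{\mathcal I_h}\bigr)=\rho_{h^*}(F^*).
\]
This common value equals both suprema, so $(F^*)^{\mathcal I_h}$ attains both.
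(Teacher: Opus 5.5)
Your proposal is correct. It is essentially the standard argument behind Theorem~1 of \citet{PWW25}, which the paper cites without reproving: linearity of $h^*$ on each interval gives $\rho_{h^*}(F)=\rho_{h^*}(F^{\mathcal I_h})$, and flatness of $(F^{\mathcal I_h})^{-1}$ on each $C\in\mathcal I_h$ makes the survival function skip the corresponding component of $\{\widehat h\ne h^*\}$, off which $h=\widehat h=h^*$.

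Two small refinements. First, the survival function of $F^{\mathcal I_h}$ avoids each open interval $(a,b)$ for every $x$, not just almost every $x$, so no boundary-of-atom analysis is needed. Second, your appeal to Remark~\ref{rem:finiteness} for absolute continuity of $h^*$ is genuinely essential. If $h$ jumped at $t=0$, the resulting atom of $h^*$ would contribute a multiple of $F^{-1}(1)$, and concentration on an interval of $\mathcal I_h$ ending at $1$ changes that term.
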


For $F\in\mathcal M_1$, we say that $F^{-1}$ is
\emph{flat on $\mathcal I_h$} if, for every $C\in\mathcal I_h$,
there exists $c_C\in\mathbb R$ such that
$F^{-1}(u)=c_C$ for Lebesgue-a.e.~$u\in C$. Example~11 of
\citet{PWW25} shows that benchmark flatness on $\mathcal I_h$ is
sufficient for $\mathcal M_p(G_0,\epsilon)$ to be closed under concentration
within $\mathcal I_h$. Therefore, if $h=\widehat h$ and $G^{-1}_0$ is flat on
$\mathcal I_h$, then
Lemma~\ref{lem:concentration} gives
\begin{equation}
S_h(\epsilon)=S_{h^*}(\epsilon).
\label{eq:flatness_equal}
\end{equation}
If the concave-envelope supremum is attained at some
$F^*\in\mathcal M_p(G_0,\epsilon)$, then $(F^*)^{\mathcal I_h}$ attains both
$S_h(\epsilon)$ and $S_{h^*}(\epsilon)$.
The implication in \eqref{eq:flatness_equal} is not a new result of this
paper. It is the direct specialization of the concentration theorem of
\citet{PWW25} to a one-dimensional Wasserstein ball.

Inspired by \eqref{eq:flatness_equal}, we characterize the conditions under which such a conversion holds under Wasserstein ambiguity, so that the original inner problem \eqref{eq:wc_distortion} can be solved by established convex optimization methods.

\begin{theorem}
\label{thm:flatness}
Let $1\le p<\infty$, $G_0\in\mathcal M_p$, and $h\in\mathcal H$ with
$h=\widehat h$ and $h^*\not\equiv0$.
\begin{enumerate}[(i)]
\item If $p>1$, then, for every $\epsilon>0$,
\begin{equation}
S_h(\epsilon)=S_{h^*}(\epsilon)
\quad\Longleftrightarrow\quad
G^{-1}_0\text{ is flat on }\mathcal I_h.
\label{eq:flatness_iff}
\end{equation}
\item If $p=1$, flatness of $G_0^{-1}$ on $\mathcal I_h$ implies
$S_h(\epsilon)=S_{h^*}(\epsilon)$ for every $\epsilon>0$.
If $G_0^{-1}$ is not flat on $\mathcal I_h$, then
$S_h(\epsilon)<S_{h^*}(\epsilon)$ for all sufficiently small $\epsilon>0$.
\end{enumerate}
\end{theorem}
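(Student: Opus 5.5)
The sufficiency direction in both (i) and (ii) is already settled: if $G_0^{-1}$ is flat on $\mathcal I_h$, then by Example~11 of \citet{PWW25} the ball $\mathcal M_p(G_0,\epsilon)$ is closed under concentration within $\mathcal I_h$, and Lemma~\ref{lem:concentration} gives \eqref{eq:flatness_equal}. The plan is therefore to prove necessity. I will show that non-flatness forces $S_h(\epsilon)<S_{h^*}(\epsilon)$: for every $\epsilon>0$ when $p>1$, and for small $\epsilon$ when $p=1$.

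\textbf{Structure of the concave-envelope problem.} Since $h^*$ is concave and nontrivial, $\rho_{h^*}(F)=\int_0^1F^{-1}\psi_{h^*}$, with $\psi_{h^*}$ increasing and in $L^{p'}$ (Remark~\ref{rem:finiteness}). By H\"older duality, for $p>1$ the unique maximizer has quantile
\[
F_*^{-1}=G_0^{-1}+\epsilon\,\frac{\phi}{\|\phi\|_{p'}},\qquad \phi:=\operatorname{sgn}(\psi_{h^*}-m)\,|\psi_{h^*}-m|^{p'-1}.
\]
Here $m$ is the constant that must be subtracted whenever $h^*(1)\ne0$, since adding a constant shifts the value by $h^*(1)$. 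I would actually handle the mean direction via the Lagrangian $\sup_F\{\int F^{-1}\psi_{h^*}-\lambda\|F^{-1}-G_0^{-1}\|_p^p\}$. The key facts I need are the following. The maximizer exists, is unique (by strict convexity of $\|\cdot\|_p^p$ for $p>1$), and equals $G_0^{-1}$ plus a function of $\psi_{h^*}$. On each $C\in\mathcal I_h$, $h^*$ is linear, so $\psi_{h^*}$ is constant there. Hence $F_*^{-1}=G_0^{-1}+\text{const}$ on $C$, and $F_*^{-1}$ is non-flat on $C$ exactly when $G_0^{-1}$ is non-flat on $C$. The candidate is monotone because $G_0^{-1}$ is increasing and $\phi$ is an increasing function of $\psi_{h^*}$.

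\textbf{Strict gap.} Suppose $S_h(\epsilon)=S_{h^*}(\epsilon)$. Because $\rho_h\le\rho_{h^*}$, I would take a maximizing sequence $F_n$ for $\rho_h$, which is then also maximizing for $\rho_{h^*}$. By uniform convexity of $L^p$, $p>1$, we get $F_n^{-1}\to F_*^{-1}$ in $L^p$. The gap $\rho_{h^*}(F)-\rho_h(F)=\int_0^1(h^*-h)(t)\,\d(\cdot)$ can be expressed as $\int (h^*-h)(1-u)\,\d F^{-1}(u)$ (Choquet integral by parts). This is a lower-semicontinuous functional of $F^{-1}$ under weak convergence of the measures $\d F^{-1}$ on compact subintervals, since $h^*-h\ge0$ is lower semicontinuous by $h=\widehat h$. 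So $0=\liminf(\rho_{h^*}-\rho_h)(F_n)\ge\int(h^*-h)(1-u)\,\d F_*^{-1}(u)$. The integrand is strictly positive on the interior of each $C\in\mathcal I_h$, and $F_*^{-1}$ increases on some such $C$, which gives a contradiction. \textbf{The main obstacle} is making this lower semicontinuity rigorous when the $L^p$ convergence of quantiles is only local (weak convergence of $\d F_n^{-1}$ on compact subsets of $C$). I would restrict to a compact subinterval $[a',b']\subset C$ on which $G_0^{-1}$ strictly increases and $h^*-h\ge\delta>0$, and then argue directly that $F_n^{-1}(b')-F_n^{-1}(a')$ stays bounded away from zero.

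\textbf{Case $p=1$.} The dual space is $L^\infty$ and maximizers need not be unique, so I would argue for small $\epsilon$ directly. The optimal perturbation moves mass only where $\psi_{h^*}$ attains $\operatorname{ess\,sup}$ (or handles the mean), i.e., it acts near the top quantiles. I would show that a budget $\epsilon$ of $W_1$ can only flatten $G_0^{-1}$ on a non-flat $C$ at some cost $\kappa>0$ in $\rho_{h^*}$. Concretely, for any $F$ in the ball, $(\rho_{h^*}-\rho_h)(F)\ge\delta\,(F^{-1}(b')-F^{-1}(a'))$ minus a term controlled by $\epsilon$. For $\epsilon$ small relative to $G_0^{-1}(b')-G_0^{-1}(a')$, most of that increment survives: $W_1$-closeness controls the averages of $F^{-1}$ on small windows around $a'$ and $b'$. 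This yields a uniform positive gap, while $S_{h^*}(\epsilon)-S_h(\epsilon)$ is otherwise realized by near-optimal $F$ for $h^*$. For large $\epsilon$ this can fail, which is why (ii) is stated only for small radii.
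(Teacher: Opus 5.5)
Your proposal is correct. Part (i) is essentially the paper's argument. A maximizing sequence for $\rho_h$ is also maximizing for $\rho_{h^*}$, and uniform convexity of $L^p$ forces $L^p$-convergence to the unique envelope optimizer. That optimizer's quantile differs from $G_0^{-1}$ by a constant on each $C\in\mathcal I_h$, so it is non-flat there. Lower semicontinuity of $h^*-h$ then makes the limiting gap strictly positive.

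The paper closes with Fatou's lemma applied to $\int_{\mathbb R}(h^*-h)(\mathbb P(Y_n>x))\,\mathrm dx$ under weak convergence. Your remedy does the same job more directly: for $[a',b']\subset C$,
\[
\rho_{h^*}(F)-\rho_h(F)\ge\delta\bigl(F^{-1}(b')-F^{-1}(a')\bigr),\qquad \delta=\min_{t\in[1-b',1-a']}(h^*-h)(t)>0.
\]
Choose $a'<b'$ as continuity points of $F_*^{-1}$ with $F_*^{-1}(a')<F_*^{-1}(b')$, or use window averages. Do not ask $G_0^{-1}$ to be strictly increasing on $[a',b']$; this need not happen, for instance if $G_0^{-1}$ has a single jump inside $C$.

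One slip needs correcting: there is no constant $m$. With Wasserstein ambiguity alone there is no mean constraint. The H\"older bound $\epsilon\|\psi_{h^*}\|_{p'}$ is attained by the unshifted $r^*$ of Proposition~\ref{prop:wc_hstar}, which is automatically nondecreasing. A parallel shift is just one more perturbation already priced by $W_p$. The normalization should also be $\|\phi\|_p=\|\psi_{h^*}\|_{p'}^{p'-1}$, not $\|\phi\|_{p'}$. Since you only use that the optimizer minus $G_0^{-1}$ is a nondecreasing function of $\psi_{h^*}$, nothing downstream breaks.

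Part (ii) takes a genuinely different route. The paper argues by contradiction: if equality held along $\epsilon_n\downarrow0$, near-optimal $P_n$ would converge weakly to $G_0$. Fatou's lemma with lower semicontinuity would then give $0<\rho_{h^*}(G_0)-\rho_h(G_0)\le\liminf_n(\rho_{h^*}-\rho_h)(P_n)=0$.

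You instead bound the gap uniformly over the ball. Take $u_1<u_2$ in $C$ with $G_0^{-1}(u_1)<G_0^{-1}(u_2)$, and set $a'=u_1-w$, $b'=u_2+w$ with $w$ small. Monotonicity gives $F^{-1}(b')\ge\frac1w\int_{b'-w}^{b'}F^{-1}$ and $F^{-1}(a')\le\frac1w\int_{a'}^{a'+w}F^{-1}$. Hence every $F\in\mathcal M_1(G_0,\epsilon)$ satisfies $F^{-1}(b')-F^{-1}(a')\ge G_0^{-1}(u_2)-G_0^{-1}(u_1)-2\epsilon/w$, and therefore
\[
S_{h^*}(\epsilon)-S_h(\epsilon)\ge\delta\Bigl(G_0^{-1}(u_2)-G_0^{-1}(u_1)-\frac{2\epsilon}{w}\Bigr)>0
\]
for small $\epsilon$.

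This is valid, and it gives an explicit radius threshold and an explicit lower bound on the convexification gap, which the compactness argument does not. It also works for every $p\ge1$, because $W_1\le W_p$. The paper's version is shorter and reuses the Fatou machinery of part (i). Your heuristics about the optimal perturbation acting ``near the top quantiles'' or ``handling the mean'' are not needed and can be dropped.
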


Theorem~\ref{thm:flatness} shows that one can only convert the original worst-case distortion riskmetric problem to its equivalent concave-envelope counterpart when the reference quantile of the Wasserstein ball is flat on $\mathcal I_h$ for $p>1$ (flatness remains sufficient for $p=1$). Thus the sufficient condition discussed by \cite{PWW25} for the equivalence to hold is also necessary. This indicates that the method of solving the nonconvex worst-case distortion riskmetrics by direct convexification is limited in practice.

\begin{example}[VaR, RVaR, and ES]\label{ex:vares}
Fix $0<\alpha<\beta<1$
and define
\begin{equation}\label{eq:wc_vares}
\begin{aligned}
S_{\VaR_\alpha}&:=\sup_{F\in\mathcal M_p(G_0,\epsilon)}\VaR_\alpha(F),
& S_{\VaRp_\alpha}&:=\sup_{F\in\mathcal M_p(G_0,\epsilon)}\VaRp_\alpha(F),\\
S_{\RVaR_{\alpha,\beta}}&:=\sup_{F\in\mathcal M_p(G_0,\epsilon)}\RVaR_{\alpha,\beta}(F),
& S_{\ES_\alpha}&:=\sup_{F\in\mathcal M_p(G_0,\epsilon)}\ES_\alpha(F).
\end{aligned}
\end{equation}
Taking $h=\widehat h_\alpha$ and $h=h_{\alpha,\beta}$ in
Theorem~\ref{thm:flatness}(i) gives, for $p>1$ and every $\epsilon>0$,
\begin{equation}
S_{\VaRp_\alpha}=S_{\ES_\alpha}
\quad\Longleftrightarrow\quad
S_{\RVaR_{\alpha,\beta}}=S_{\ES_\alpha}
\quad\Longleftrightarrow\quad
G^{-1}_0\text{ is flat on }(\alpha,1).
\label{eq:var_es_flatness}
\end{equation}
Moreover, $S_{\VaR_\alpha}=S_{\VaRp_\alpha}$ for every $p\ge1$
\citep[Proposition~4]{LMWW22}.
For $p=1$, Theorem~\ref{thm:flatness}(ii) gives the same sufficient
condition, and both gaps from $S_{\ES_\alpha}$ are strictly positive for all
sufficiently small radii when flatness fails. Equality at a single radius
need not imply flatness: with $G_0=\U[0,1]$, $\alpha=1/2$, and
$\epsilon=1/8$, all four worst-case values equal $1$ for every
$\beta\in(1/2,1)$, although $G_0^{-1}(u)=u$ is not flat on $(1/2,1)$.
\end{example}

For increasing and normalized concave distortion functions $h^*$, the Wasserstein worst-case
$\rho_{h^*}$, a coherent distortion risk measure, is already solved in the literature \citep[see e.g.,][]{LMWW22}.
Our distortion riskmetric $\rho_{h^*}$ differs slightly from coherent distortion risk measures because it needs not satisfy monotonicity and normalization. We present the following result for the worst-case value $S_{h^*}(\epsilon)$ of the distortion riskmetric $\rho_{h^*}$ for theoretical completeness.

\begin{proposition}
\label{prop:wc_hstar}
Let $1\le p<\infty$, $G_0\in\mathcal M_p$, $\epsilon>0$, and
$h\in\mathcal H$. We have
\begin{equation}
S_{h^*}(\epsilon)=\rho_{h^*}(G_0)+\epsilon\|\psi_{h^*}\|_{p'}.
\label{eq:bench_hstar}
\end{equation}
If $h^*\not\equiv0$, the supremum is attained if and only if $p>1$, or
$p=1$ and $\{|\psi_{h^*}|=\|\psi_{h^*}\|_\infty\}$ has positive Lebesgue measure.
In these cases, an optimizer $F^*$ is given by the left-continuous
modification of $(F^*)^{-1}(u)=G_0^{-1}(u)+r^*(u)$ for $0<u<1$, where
\begin{equation}
r^*(u)=
\begin{cases}
\displaystyle
\epsilon\frac{\operatorname{sgn}(\psi_{h^*}(u))|\psi_{h^*}(u)|^{p'-1}}
{\|\psi_{h^*}\|_{p'}^{p'-1}}, & p>1,\\[3mm]
\displaystyle
\epsilon\frac{\operatorname{sgn}(\psi_{h^*}(u))
\mathbf 1_{\{|\psi_{h^*}(u)|=\|\psi_{h^*}\|_\infty\}}}
{\int_0^1\mathbf 1_{\{|\psi_{h^*}(v)|=\|\psi_{h^*}\|_\infty\}}\,\mathrm d v}, & p=1.
\end{cases}
\label{eq:bench_shift}
\end{equation}
It satisfies $W_p(F^*,G_0)=\epsilon$ and is unique when $p>1$.
\end{proposition}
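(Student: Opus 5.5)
The plan is to work entirely with quantile functions and treat the problem as maximizing a linear functional over a ball in $L^p$ intersected with the cone of increasing functions. By Remark~\ref{rem:finiteness}, $h^*$ is concave and absolutely continuous, so \eqref{eq:quantile_density} applies with $\psi_{h^*}(u)=(h^*)'(1-u)$ nondecreasing and in $L^{p'}(0,1)$. For $F\in\mathcal M_p(G_0,\epsilon)$, write $F^{-1}=G_0^{-1}+r$ with $\|r\|_p=W_p(F,G_0)\le\epsilon$. Then
\[
\rho_{h^*}(F)=\rho_{h^*}(G_0)+\int_0^1 r(u)\psi_{h^*}(u)\,\mathrm d u\le \rho_{h^*}(G_0)+\epsilon\|\psi_{h^*}\|_{p'}
\]
by H\"older's inequality, which gives the upper bound in \eqref{eq:bench_hstar}.

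For the lower bound when $p>1$, take $r^*$ as in \eqref{eq:bench_shift}. The key observation is that $x\mapsto \operatorname{sgn}(x)|x|^{p'-1}$ is increasing, and $\psi_{h^*}$ is nondecreasing, so $r^*$ is nondecreasing. Hence $G_0^{-1}+r^*$ is nondecreasing, and its left-continuous modification is a genuine quantile function of some $F^*\in\mathcal M_p$. This modification changes the function only on a countable set, so it does not affect the integrals. A direct computation gives $\|r^*\|_p=\epsilon$ and $\int r^*\psi_{h^*}=\epsilon\|\psi_{h^*}\|_{p'}$, because this is the H\"older equality case. Uniqueness for $p>1$ follows from strict convexity of $L^p$: equality in H\"older with $\|r\|_p\le\epsilon$ and $\psi_{h^*}\ne0$ a.e.-nontrivial (using $h^*\not\equiv0$) forces $r=r^*$ a.e., hence $F=F^*$. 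If $h^*\equiv0$, everything is trivial.

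For $p=1$ I will use $\|\psi\|_\infty$ and split into two cases. Let $A=\{|\psi_{h^*}|=\|\psi_{h^*}\|_\infty\}$. Since $\psi_{h^*}$ is monotone, $A$ is a union of at most two intervals, one at each end: near $0$ where $\psi=-\|\psi\|_\infty$, and near $1$ where $\psi=+\|\psi\|_\infty$.

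If $\lambda(A)>0$, then $r^*$ in \eqref{eq:bench_shift} is $-c$ on the left piece, $+c$ on the right piece, and $0$ in between. It is therefore nondecreasing, it has $\|r^*\|_1=\epsilon$, and it attains the bound.

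If $\lambda(A)=0$, approximate by $r_\delta=\epsilon\,\operatorname{sgn}(\psi)\mathbf 1_{\{|\psi|>\|\psi\|_\infty-\delta\}}/\lambda(\cdot)$. The set $\{|\psi|>\|\psi\|_\infty-\delta\}$ has positive measure by the definition of the essential supremum, and $r_\delta$ is again nondecreasing. It yields values at least $\rho_{h^*}(G_0)+\epsilon(\|\psi\|_\infty-\delta)$, so the supremum is as claimed. Nonattainment in this case follows because equality $\int r\psi=\|r\|_1\|\psi\|_\infty$ with $r\neq0$ requires $r$ to be supported a.e.\ on $A$, which is null. If instead $r=0$, the value $\rho_{h^*}(G_0)$ falls strictly below the bound, since $\|\psi\|_\infty>0$.

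The main obstacle I expect is the monotonicity and admissibility bookkeeping. Specifically, the H\"older extremizer must be nondecreasing so that $G_0^{-1}+r^*$ is a valid quantile. The case $\|\psi\|_{p'}=\infty$ cannot occur here because of Remark~\ref{rem:finiteness}. The endpoint values of the quantiles are irrelevant since they form a null set.
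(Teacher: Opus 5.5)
Your proposal is correct and follows essentially the same route as the paper. You use H\"older's inequality for the upper bound, and monotonicity of $r^*$ (inherited from $\psi_{h^*}$) for admissibility. Attainment and uniqueness come from the H\"older equality case, and for $p=1$ the equality condition forces the shift to be supported on $\{|\psi_{h^*}|=\|\psi_{h^*}\|_\infty\}$. The only difference is minor: in the non-attained $p=1$ case you use truncated indicators of $\{|\psi_{h^*}|>\|\psi_{h^*}\|_\infty-\delta\}$, whereas the paper uses the power-weighted shifts $|\psi_{h^*}|^n$ from Remark~\ref{rem:unique_hstar}. Both are nondecreasing and reach the same limit.
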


\begin{remark}
\label{rem:unique_hstar}
For $p=1$, uniqueness need not hold for Proposition \ref{prop:wc_hstar}, and attainment
can fail: $h(t)=h^*(t)=t-t^2/2$ gives $\psi_{h^*}(u)=u$, which never attains its
essential supremum on $(0,1)$.
For $p=1$ and $h^*\not\equiv0$, let $P_n$, $n\ge1$, be the distributions
whose left-quantile functions are the left-continuous modifications of
\begin{equation}
P_n^{-1}(u)=G_0^{-1}(u)
+\epsilon\frac{\operatorname{sgn}(\psi_{h^*}(u))|\psi_{h^*}(u)|^n}
{\int_0^1|\psi_{h^*}(v)|^n\,\mathrm d v},\qquad 0<u<1.
\label{eq:bench_sequence}
\end{equation}
Each $P_n$ has $W_1(P_n,G_0)=\epsilon$ and
$\rho_{h^*}(P_n)\to S_{h^*}(\epsilon)$ as $n\to\infty$.
\end{remark}

\section{Exact worst-case value of distortion riskmetrics}
\label{sec:regularization}

\subsection{Absolutely continuous distortions and regularization}

We first solve the worst-case problem for absolutely continuous distortions
and then extend the analysis to general distortions by taking their limits.
The following result gives a direct dual formulation for every
$1\le p<\infty$ and a continuous min--max characterization for $p>1$.

\begin{proposition}
\label{prop:isotonic}
Let $1\le p<\infty$, $G_0\in\mathcal M_p$, and $\epsilon>0$.
Suppose that $h\in\mathcal H$ is absolutely continuous and
$\psi_h\in L^{p'}(0,1)$.
The following hold.
\begin{enumerate}[(i)]
\item The value $S_h(\epsilon)$ is finite and
\begin{equation}
S_h(\epsilon)
=\min_{\lambda\ge0}
\left\{
\frac{\lambda\epsilon^p}{p}
+\sup_{F\in\M_p}
\left[\rho_h(F)-\frac{\lambda}{p}W_p^p(F,G_0)\right]
\right\}.
\label{eq:reg_dual}
\end{equation}
An optimal distribution for $S_h(\epsilon)$ exists if $p>1$, or if $p=1$ and
\begin{equation}
\lim_{\delta\downarrow0}
\left\|\psi_h\mathbf1_{(0,\delta)\cup(1-\delta,1)}\right\|_\infty=0.
\label{eq:psi_endpoints}
\end{equation}
A distribution $F\in\M_p(G_0,\epsilon)$ is optimal if and only if it
attains the inner supremum at some $\lambda^*\ge0$ satisfying
$\lambda^*\bigl[W_p^p(F,G_0)-\epsilon^p\bigr]=0$.
Denote any such optimal distribution by $P_{\lambda^*}$.
If $h(1)\ne0$, every optimal pair satisfies $\lambda^*>0$ and
$W_p(P_{\lambda^*},G_0)=\epsilon$.

\item For $p>1$ and $\lambda>0$, the inner maximizer $P_\lambda$ is unique.
Its quantile is the left-continuous version of
\begin{equation}
P_\lambda^{-1}(u)
=\sup_{0\le a<u}\ \inf_{u<b\le1}z_{a,b}(\lambda),
\qquad 0<u<1,
\label{eq:reg_quantile}
\end{equation}
where $z_{a,b}(\lambda)$ is the unique root of
\begin{equation}
\lambda\int_a^b|z-G_0^{-1}(v)|^{p-2}(z-G_0^{-1}(v))\,\mathrm d v
=\int_a^b\psi_h(v)\,\mathrm d v,~~0\le a<b\le1.
\label{eq:block_root}
\end{equation}
If $h(1)\ne0$, the optimal pair
$(P_{\lambda^*},\lambda^*)$ is unique.
\end{enumerate}
\end{proposition}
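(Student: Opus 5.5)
We work throughout in quantile space. Identify $F\in\M_p$ with its left quantile $q=F^{-1}$, which ranges over the closed convex cone $\mathcal Q_p$ of increasing functions in $L^p(0,1)$. By \eqref{eq:quantile_density}, $\rho_h(F)=\int_0^1 q\,\psi_h$ is a continuous linear functional on $L^p$, since $\psi_h\in L^{p'}$. The constraint is $\|q-G_0^{-1}\|_p\le\epsilon$. So $S_h(\epsilon)$ is the supremum of a linear functional over the intersection of a convex cone with a norm ball.

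\emph{Finiteness and duality.} Finiteness follows from Hölder's inequality:
\[
\rho_h(F)\le\rho_h(G_0)+\epsilon\|\psi_h\|_{p'}.
\]
For \eqref{eq:reg_dual} I would apply Lagrangian duality for the convex program
\[
\sup\Bigl\{\int q\psi_h:\ q\in\mathcal Q_p,\ \tfrac1p\|q-G_0^{-1}\|_p^p\le\tfrac{\epsilon^p}p\Bigr\}.
\]
The objective is concave (linear), the constraint is convex, and Slater's condition holds because $q=G_0^{-1}$ is strictly feasible when $\epsilon>0$. Standard convex duality in $L^p$ then gives strong duality, and the dual minimum over $\lambda\ge0$ is attained.

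The characterization of optimal distributions is then the usual saddle-point argument. A feasible $F$ is optimal if and only if there is a $\lambda^*$ such that $F$ maximizes the Lagrangian at $\lambda^*$ and complementary slackness holds.

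If $h(1)\ne0$ and $\lambda^*=0$, the inner supremum is $\sup_q\int q\psi_h$. Adding constants $c$ to $q$ changes this by $c\,h(1)$, so it equals $+\infty$, which is a contradiction. Hence $\lambda^*>0$ and the constraint is active.

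\emph{Existence.} For $p>1$, the feasible set is convex, closed, and bounded in the reflexive space $L^p$, and monotonicity is preserved under weak limits (pass to a.e.-convergent convex combinations). It is therefore weakly compact, and the linear objective attains its maximum.

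For $p=1$ there is no reflexivity. Here I would use the endpoint condition \eqref{eq:psi_endpoints}. Take a maximizing sequence $q_n$. Since the $q_n$ are monotone, Helly's selection theorem gives pointwise convergence on $(0,1)$ along a subsequence. Fatou's lemma keeps the limit feasible. The objective terms on $[\delta,1-\delta]$ converge by bounded convergence, because quantiles are locally bounded. The tail contributions are controlled by
\[
\|\psi_h\mathbf 1_{\text{tails}}\|_\infty\cdot\sup_n\|q_n\|_1,
\]
which vanishes as $\delta\to0$. I expect this tightness step for $p=1$ to be the main obstacle, because one must prevent mass from escaping to the tails, and that is exactly where \eqref{eq:psi_endpoints} is used.

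\emph{Part (ii).} For $p>1$ and $\lambda>0$, the inner objective
\[
\int q\psi_h-\frac\lambda p\|q-G_0^{-1}\|_p^p
\]
is strictly concave and coercive on $\mathcal Q_p$, so the maximizer $P_\lambda$ is unique. Its first-order condition describes a generalized isotonic regression, that is, a projection onto monotone functions under the separable convex loss
\[
\phi_v(z)=\frac\lambda p|z-G_0^{-1}(v)|^p-z\psi_h(v).
\]
The min–max formula \eqref{eq:reg_quantile} is the continuous analogue of the classical max–min representation of isotonic regression. On each maximal block $(a,b)$ where the optimal quantile is constant, the constant is the block minimizer $z_{a,b}(\lambda)$, which is the unique root of \eqref{eq:block_root} because the left-hand side is strictly increasing and continuous in $z$.

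I would prove \eqref{eq:reg_quantile} by verifying the variational inequality
\[
\int(\phi_v'(q^*(v)))(q-q^*)\,\d v\ge0\quad\text{for all }q\in\mathcal Q_p,
\]
for the candidate $q^*$. The argument integrates by parts against the cumulative function
\[
\Phi(u)=\int_0^u\phi_v'(q^*(v))\,\d v.
\]
One checks that $\Phi\le0$ with $\Phi=0$ at the endpoints of each block, and that monotone test directions $q-q^*$ are handled by layer-cake decomposition.

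Finally, uniqueness of $(P_{\lambda^*},\lambda^*)$ when $h(1)\ne0$ follows from two facts. The function $\lambda\mapsto W_p(P_\lambda,G_0)$ is strictly decreasing wherever $P_\lambda\ne G_0$; this comes from strict concavity together with a comparison of the optimality conditions at two values of $\lambda$. Also, every optimal pair satisfies $W_p(P_{\lambda^*},G_0)=\epsilon$ with $\lambda^*>0$, so there is only one $\lambda^*$ that hits the radius exactly.
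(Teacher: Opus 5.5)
\textbf{Where the proposal matches the paper.} Your part (i) follows the paper: H\"older for finiteness, Slater-based Lagrange duality with dual attainment, the saddle-point characterization, $\lambda^*>0$ via constant shifts when $h(1)\ne0$, weak compactness for $p>1$, and Helly--Fatou plus \eqref{eq:psi_endpoints} for $p=1$. Your argument for uniqueness of the inner maximizer $P_\lambda$ also follows the paper.

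\textbf{The gap: the min--max formula.} The gap is the derivation of \eqref{eq:reg_quantile}. You define $q^*$ by the sup--inf and propose to verify the variational inequality for it. The decisive step is that $\Phi\le0$ everywhere and $\Phi=0$ wherever $q^*$ increases. You only assert it (``one checks''), yet it is the entire content of the formula, and it is not routine here.
\begin{itemize}
\item In the continuum $q^*$ need not be piecewise constant. So $\Phi$ must vanish on the whole support of $\mathrm d q^*$. On stretches where $q^*$ is strictly increasing, this means $\lambda|q^*-G_0^{-1}|^{p-2}(q^*-G_0^{-1})=\psi_h$ a.e., which is a differentiation statement about the sup--inf.
\item Take a block $(a_0,b_0)$ where $q^*\equiv c$ and $\Phi(a_0)=0$. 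Then $\Phi\le0$ inside the block is equivalent to $c\le z_{a_0,u}(\lambda)$ for all $u\in(a_0,b_0)$. This requires pooling inequalities for the nonlinear roots of \eqref{eq:block_root}; for $p\ne2$ there is no greatest-convex-minorant shortcut.
\item Before integrating by parts you also need $q^*\in L^p$, $\Phi(1)=0$, and vanishing boundary terms.
\item $q-q^*$ is not monotone, so the layer-cake argument must be applied to $q$ and $q^*$ separately.
\end{itemize}

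\textbf{How to close it.} Argue in the opposite direction, as the paper does. You already have existence and uniqueness of $P_\lambda$, so use its own variational inequality with $D_\lambda=\lambda|P_\lambda^{-1}-G_0^{-1}|^{p-2}(P_\lambda^{-1}-G_0^{-1})-\psi_h$. Test it against two monotone perturbations:
\begin{itemize}
\item $P_\lambda^{-1}+(t\wedge(z-P_\lambda^{-1})_+)\mathbf 1_{(a,1)}$;
\item $P_\lambda^{-1}-(t\wedge(P_\lambda^{-1}-z)_+)\mathbf 1_{(0,b]}$.
\end{itemize}
Dividing by $t$ and letting $t\downarrow0$ gives $\int_a^{P_\lambda(z-)}D_\lambda\ge0$ and $\int_{P_\lambda(z)}^bD_\lambda\le0$. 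Since $P_\lambda^{-1}<z$ on the first interval and $P_\lambda^{-1}>z$ on the second, monotonicity of $x\mapsto|x|^{p-2}x$ yields $z_{a,P_\lambda(z-)}(\lambda)\le z\le z_{P_\lambda(z),b}(\lambda)$. Letting $z\uparrow P_\lambda^{-1}(u)$ and $z\downarrow P_\lambda^{-1}(u)$ at continuity points $u$ then gives \eqref{eq:reg_quantile}, with no block analysis at all.

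\textbf{A smaller point on uniqueness of the pair.} Your claim that $\lambda\mapsto W_p(P_\lambda,G_0)$ is strictly decreasing wherever $P_\lambda\ne G_0$ is false in general. Take $p=2$, $G_0^{-1}(u)=\kappa(u-\tfrac12)$ and $\psi_h(u)=c(\tfrac12-u)$ with $\kappa,c>0$, so $h(1)=0$. Then $P_\lambda$ is the point mass at $0$ for every $\lambda\in(0,c/\kappa]$, and the distance is constant there.

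The claim is true when $h(1)\ne0$. Adding the two optimality inequalities at $\lambda_1<\lambda_2$ shows that equal distances force $P_{\lambda_1}=P_{\lambda_2}=:P$, by uniqueness of the inner maximizer. Testing with $P^{-1}\pm1$ then gives $\lambda_i\int_0^1|P^{-1}-G_0^{-1}|^{p-2}(P^{-1}-G_0^{-1})=h(1)$ for $i=1,2$, a contradiction. The paper uses this constant-shift identity directly: the primal optimizer is unique, and the identity then determines $\lambda^*$.
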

For $p=2$, \eqref{eq:block_root} gives
\begin{equation}
z_{a,b}(\lambda)
=\frac1{b-a}\int_a^b
\left(G_0^{-1}(v)+\frac{\psi_h(v)}{\lambda}\right)\mathrm d v.
\label{eq:l2_block}
\end{equation}
Thus the $p=2$ optimal quantile is the so-called isotonic projection of
$G_0^{-1}+\psi_h/\lambda$, as in \citet{P22}. If $h(1)\ne0$, the multiplier
is chosen to meet the Wasserstein radius. Absolute continuity alone does not ensure
attainment for $p=1$; Remark~\ref{rem:unique_hstar} gives a counterexample.

We now allow interior discontinuities and a singular continuous component
of $h$.
Fix $h\in\mathcal H$, $1\le p<\infty$, $G_0\in\M_p$, and $\epsilon>0$.
Enumerate the interior discontinuities of $h$ as $1-\xi_k$, $k\in\N$,
so that the distinct $\xi_k\in(0,1)$ are the corresponding quantile
levels. All sums over $k$ run over these discontinuities. Set
\[
c_k^-=h((1-\xi_k)+)-h(1-\xi_k),
\qquad
c_k^+=h(1-\xi_k)-h((1-\xi_k)-).
\]
It follows that
\begin{equation}
\rho_h(F)
=
\int_{(0,1)}F^{-1}(u)\,\nu^c(\mathrm d u)
+\sum_k\left[c_k^-F^{-1}(\xi_k)+c_k^+F^{-1+}(\xi_k)\right],
\label{eq:quantile_bv}
\end{equation}
where $\nu^c$ is the reflected Lebesgue--Stieltjes measure of the continuous
part of $h$. It is non-atomic, and $|\nu^c|$ denotes its total variation measure.
The theorem below uses the endpoint integrability condition
\begin{equation}
\int_{(0,1)}(u\wedge(1-u))^{-1/p}\,|\nu^c|(\mathrm d u)
+\sum_k(|c_k^-|+|c_k^+|)(\xi_k\wedge(1-\xi_k))^{-1/p}<\infty.
\label{eq:endpoint_integral}
\end{equation}
For $\eta>0$ and $v\in[0,1]$, define
\begin{equation}
\begin{aligned}
\psi_\eta(v)
=
\int_{(0,1)}
\frac{\mathbf 1_{(u-\delta_\eta(u),u]}(v)}{\delta_\eta(u)}
\,\nu^c(\mathrm d u)
+
\sum_k
\frac{c_k^-}{\delta_{k,\eta}^-}
\mathbf 1_{(\xi_k-\delta_{k,\eta}^-,\xi_k]}(v)
+
\sum_k
\frac{c_k^+}{\delta_{k,\eta}^+}
\mathbf 1_{(\xi_k,\xi_k+\delta_{k,\eta}^+]}(v),
\end{aligned}
\label{eq:reg_density}
\end{equation}
where
\begin{equation}
\delta_\eta(u)=\eta\wedge\frac{u}{2},~~u\in[0,1],
\qquad
\delta_{k,\eta}^-=\eta\wedge\frac{\xi_k}{2},
\qquad
\delta_{k,\eta}^+=\eta\wedge\frac{1-\xi_k}{2}.
\label{eq:reg_widths}
\end{equation}
Let $h_\eta(t)=\int_{1-t}^1\psi_\eta(v)\,\mathrm d v$, $0\le t\le 1$.

Intuitively, the idea of the construction in \eqref{eq:reg_density} is to replace the original distortion $h$ by its averages over short intervals nearby to eliminate singular and discontinuous points of $h$. It is clear that $h_\eta\to h$ pointwise as $\eta\downarrow 0$.
Under \eqref{eq:endpoint_integral}, $\psi_\eta\in L^{p'}(0,1)$ and
$h_\eta(1)=h(1)$. When $p=1$, $\psi_\eta$ also satisfies
\eqref{eq:psi_endpoints}. Hence Proposition~\ref{prop:isotonic} applies
to $h_\eta$ for every $\eta>0$; write
$(P_{\lambda_\eta^*},\lambda_\eta^*)$ for an optimal pair.
The following theorem recovers the exact worst-case value of $\rho_h$ and, under upper
semicontinuity, its exact worst-case distribution, by taking limits of the worst-case $\rho_{h_\eta}$. We call this method a \emph{regularization}.

\begin{theorem}
\label{thm:regularization}
Suppose that \eqref{eq:endpoint_integral} holds.
\begin{enumerate}[(i)]
\item For $1\le p<\infty$,
\begin{equation}
S_h(\epsilon)
=\lim_{\eta\downarrow0}S_{h_\eta}(\epsilon)
=\lim_{\eta\downarrow0}
\max_{F\in\M_p(G_0,\epsilon)}\rho_{h_\eta}(F)<\infty.
\label{eq:reg_value}
\end{equation}
\item Suppose that $h=\widehat h$.
For every sequence $\eta\downarrow0$, there exists a subsequence $\eta_n\downarrow 0$, such that
\[
P^*_n\xrightarrow{\mathrm{w}} F^*\in\M_p(G_0,\epsilon)~~\mbox{as}~n\to\infty,
\qquad \mbox{and}\qquad\rho_h(F^*)=S_h(\epsilon),
\]
where we write $P^*_n=P_{\lambda^*_{\eta_n}}$.\footnote{Here, $\xrightarrow{\mathrm{w}}$ denotes weak convergence (convergence in distribution).}
\item Under the assumptions of (ii), if $h(1)\ne0$, then $W_p(F^*,G_0)=\epsilon$ and $W_p(P^*_n,F^*)\to0$ as $n\to\infty$.
\item Under the assumptions of (iii), if $p>1$, the worst-case distribution
$F^*$ is unique and $W_p(P_{\lambda^*_\eta},F^*)\to0$ for the full sequence as $\eta\downarrow 0$.
\end{enumerate}
\end{theorem}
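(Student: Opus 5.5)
Throughout we compare $\rho_{h_\eta}$ with $\rho_h$ uniformly over the ball $\mathcal M_p(G_0,\epsilon)$, using the quantile representation \eqref{eq:quantile_bv}. By construction,
\[
\rho_{h_\eta}(F)=\int_0^1F^{-1}(v)\psi_\eta(v)\,\mathrm dv .
\]
Substituting \eqref{eq:reg_density} and using Fubini gives
\[
\rho_{h_\eta}(F)=\int_{(0,1)}\bar F_\eta(u)\,\nu^c(\mathrm du)+\sum_k\Big[c_k^-\,\bar F^-_{k,\eta}+c_k^+\,\bar F^+_{k,\eta}\Big].
\]
Here $\bar F_\eta(u)$ is the average of $F^{-1}$ over $(u-\delta_\eta(u),u]$. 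Similarly, $\bar F^-_{k,\eta}$ and $\bar F^+_{k,\eta}$ are the averages over $(\xi_k-\delta^-_{k,\eta},\xi_k]$ and $(\xi_k,\xi_k+\delta^+_{k,\eta}]$.

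Monotonicity of $F^{-1}$ gives
\[
F^{-1}(u-\delta)\le \bar F_\eta(u)\le F^{-1}(u),\qquad F^{-1+}(\xi_k)\le \bar F^+_{k,\eta}\le F^{-1}(\xi_k+\delta).
\]
Each averaged term is therefore squeezed between nearby quantiles and converges to the corresponding term of \eqref{eq:quantile_bv}. The left averages converge to $F^{-1}(u)$ by left-continuity, and the right averages converge to $F^{-1+}(\xi_k)$.

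For uniform control over the ball we use the elementary bound $|F^{-1}(u)|\le |G_0^{-1}|$-type estimates. For every $F$ with $W_p(F,G_0)\le\epsilon$,
\[
|F^{-1}(u)|\le C\,(u\wedge(1-u))^{-1/p},
\]
with $C$ depending only on $\|G_0\|_p+\epsilon$. This follows from Markov's inequality applied to $F^{-1}$ in $L^p$, since $\|F^{-1}\|_p\le \|G_0^{-1}\|_p+\epsilon$. The widths $\delta_\eta(u)\le u/2$ (and their analogues at the atoms) keep the averaging windows within a constant factor of $u$. Together these give the domination needed for \eqref{eq:endpoint_integral} to justify dominated convergence, and they also show $S_h(\epsilon)<\infty$.

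\textbf{Step 1: the value (i).} Fix a small $\delta_0$ and split each integral and sum into a tail part near $0,1$ and a middle part. The tail contribution is at most a tail of \eqref{eq:endpoint_integral} times $C$, uniformly in $F$ and $\eta$, so it is small.

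On the middle part, uniformity of the convergence is the genuine obstacle, because quantiles of different $F$ can jump at different places. For the $\nu^c$ part, I would avoid pointwise convergence and use the squeeze instead. The difference $\bar F_\eta(u)-F^{-1}(u)$ has constant sign and is bounded by $F^{-1}(u)-F^{-1}(u-\eta)$. Integrating this against $|\nu^c|$ on $[\delta_0,1-\delta_0]$ is bounded by the $|\nu^c|$-mass of windows of length $\eta$, times the total increase of $F^{-1}$ there. Since $\nu^c$ is non-atomic, its modulus of continuity $\sup_u|\nu^c|((u-\eta,u])\to0$, and the total increase $F^{-1}(1-\delta_0)-F^{-1}(\delta_0)\le 2C\delta_0^{-1/p}$. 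This yields $\sup_F|\rho_{h_\eta}(F)-\rho_h(F)|\to0$ for the $\nu^c$ and tail parts.

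The atom parts do not converge uniformly. They only satisfy one-sided inequalities, and the atoms with $c_k^\pm<0$ are the problematic ones. I would therefore prove the two inequalities separately, using finitely many atoms plus a tail controlled by summability.

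For $\limsup_\eta S_{h_\eta}\le S_h$: take near-optimizers $F_\eta$ of $\rho_{h_\eta}$. Replace each $F_\eta$ by a modified $\tilde F_\eta$ that is locally flattened near the finitely many relevant $\xi_k$, using the isotonic/concentration idea. The flattening should keep the distribution in the ball up to an $o(1)$ radius loss, which is absorbed by scaling towards $G_0$, and it should make $\rho_h(\tilde F_\eta)\ge\rho_{h_\eta}(F_\eta)-o(1)$.

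For $\liminf_\eta S_{h_\eta}\ge S_h$: fix $F$ and use pointwise convergence $\rho_{h_\eta}(F)\to\rho_h(F)$, which holds by dominated convergence.

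The attainment of the maxima in \eqref{eq:reg_value} comes from Proposition~\ref{prop:isotonic}. The flattening step in the upper bound is the main difficulty. As a fallback, I would use compactness from (ii) to obtain the upper bound via upper semicontinuity when $h=\widehat h$, and reduce the general case by showing $S_h=S_{\widehat h}$ through approximating quantile jumps.

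\textbf{Step 2: existence of the limit (ii).} The family $P^*_n$ lies in the ball, so it is tight with uniformly integrable $|x|^{p-\kappa}$ for $\kappa>0$. Helly's theorem gives a subsequence with quantiles converging at continuity points, hence weakly, to some $F^*$. Lower semicontinuity of $W_p$ under weak convergence gives $F^*\in\M_p(G_0,\epsilon)$.

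Upper semicontinuity of $\rho_h$ along this sequence uses $h=\widehat h$. At an atom, $c_k^-\ge0$ forces left-type behaviour, and the limsup of averaged quantiles is at most $F^{*-1+}(\xi_k)$ or $F^{*-1}(\xi_k)$ with the correct sign choice; upper semicontinuity of $h$ is exactly what makes every sign combination favourable. This is combined with the uniform tail bound and with convergence off the countable set of atoms for the $\nu^c$ part, which is valid because $\nu^c$ is non-atomic. We obtain
\[
\rho_h(F^*)\ge\limsup_n \rho_{h_{\eta_n}}(P^*_n)=S_h(\epsilon),
\]
and hence equality.

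\textbf{Step 3: convergence in $W_p$ (iii) and uniqueness (iv).} If $h(1)\ne0$, Proposition~\ref{prop:isotonic} gives $W_p(P_n^*,G_0)=\epsilon$. Suppose $W_p(F^*,G_0)<\epsilon$. Shifting $F^*$ by a constant $\pm\kappa$, with the sign of $h(1)$, stays in the ball and increases $\rho_h$ by $|h(1)|\kappa$, contradicting optimality; so $W_p(F^*,G_0)=\epsilon$. The quantiles $P_n^{*-1}\to F^{*-1}$ a.e. while the $L^p$ norms of $P_n^{*-1}-G_0^{-1}$ converge to that of $F^{*-1}-G_0^{-1}$. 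Since $L^p$ is uniformly convex for $p>1$ (Radon--Riesz, or Brezis--Lieb for $p=1$), this gives $L^p$ convergence, i.e. $W_p(P^*_n,F^*)\to0$.

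For $p>1$, suppose two optimizers exist. Their quantile average is again a quantile and $\rho_h$ is linear in quantiles. Strict convexity of $W_p^p$ places the average strictly inside the ball, and the shift argument then gives a contradiction. Uniqueness of the limit point, together with the subsequence argument, yields convergence of the full sequence.
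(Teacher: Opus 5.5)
Your proposal is correct, and its overall architecture is the paper's. In (i), the lower bound comes from pointwise convergence $\rho_{h_\eta}(F)\to\rho_h(F)$, and the upper bound from flattening near finitely many $\xi_k$, rescaling toward $G_0$, and a summable tail. Part (ii) is Helly, Fatou, and upper semicontinuity of the averaged jump terms. Part (iii) is the constant shift plus a Scheff\'e-type step; the paper does your Radon--Riesz/Brezis--Lieb step by hand, applying Fatou to $2^{p-1}(|(P_n^*)^{-1}-G_0^{-1}|^p+|(F^*)^{-1}-G_0^{-1}|^p)-|(P_n^*)^{-1}-(F^*)^{-1}|^p$. Part (iv) is the midpoint argument.

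The flattening you leave as a plan is exactly the paper's $\bar P_n$ in the proof of Lemma~\ref{lem:jump_approx}: replace the quantile on $(\xi_k-\eta,\xi_k]$ and on $(\xi_k,\xi_k+\eta]$ by its two averages. This keeps monotonicity and makes the left and right quantiles at $\xi_k$ equal $A^-_{k,\eta}$ and $A^+_{k,\eta}$. It costs $O(\eta^{1/p})$ in $W_p$, because quantiles are bounded on $[\delta,1-\delta]$. The factor $\epsilon/(\epsilon+W_p(\tilde F_\eta,F_\eta))$ then restores feasibility. So your fallback through $S_h=S_{\widehat h}$ is not needed.

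The one genuine difference is how you handle the continuous part. The paper proves $\sup_F\int|F^{-1}-A_\eta[F^{-1}]|\,\mathrm d|\nu^c|\to0$ over the ball by contradiction (Lemma~\ref{lem:nonatomic_approx}: Helly, then dominated convergence off the countable discontinuity set). It then freezes the continuous part at an auxiliary scale $\tau$, so that Lemma~\ref{lem:jump_approx} only has to handle a fixed $L^{p'}$ density, whose change under flattening is bounded by $\|\psi\|_{p'}W_p$.

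Your Fubini estimate bounds the middle region by $\sup_v|\nu^c|((v,v+\eta])$ times $F^{-1}(1-\delta_0)-F^{-1}(\delta_0/2)=O(\delta_0^{-1/p})$. This gives the same uniform convergence directly and with an explicit modulus. It is the general form of the estimate the paper uses only later, under a Lipschitz assumption, in Proposition~\ref{prop:reg_rate}. In your route, the $\nu^c$-term of the flattened quantile is controlled by the same modulus: the flattened windows carry $|\nu^c|$-mass at most $2m\sup_v|\nu^c|((v,v+\eta])$, and quantiles are bounded there. So the $\tau$ device becomes unnecessary. The paper's compactness argument is shorter but non-quantitative.

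One correction in (ii): for $h=\widehat h$ the signs are $c_k^-\le0\le c_k^+$, not $c_k^-\ge0$. Upper semicontinuity does not make every sign pattern favourable; it forces exactly this one. That pattern is what pairs with $\liminf_nA^-_{k,\eta_n}[(P_n^*)^{-1}]\ge(F^*)^{-1}(\xi_k)$ and $\limsup_nA^+_{k,\eta_n}[(P_n^*)^{-1}]\le(F^*)^{-1+}(\xi_k)$. With $c_k^->0$ the inequality would fail, because left averages can converge to $(F^*)^{-1+}(\xi_k)$. For example, put the jump of the approximating quantiles just to the right of $\xi_k-\eta_n$.
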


Without $h=\widehat h$, the original supremum need not be attained;
the left-VaR criterion in Example~\ref{ex:vares} is an example.
The next result gives convergence rates of the regularized worst-case distortion riskmetric to the exact worst-case value. Here, additional assumptions of the Lipschitz continuity of the continuous part of $h$ are required.

\begin{proposition}
\label{prop:reg_rate}
Suppose that $h$ has finitely many discontinuities and its continuous
part is Lipschitz on $[0,1]$.
There exists $C\in\R_+$ such that, for all sufficiently small $\eta>0$,
the following hold.
\begin{enumerate}[(i)]
\item If $p>1$, then
\begin{equation}
|S_{h_\eta}(\epsilon)-S_h(\epsilon)|
\le C\eta^{1-1/p}.
\label{eq:reg_rate}
\end{equation}
\item Suppose that $h$ has a Lipschitz continuous derivative near $0$ and $1$.
If $p=1$ and \eqref{eq:endpoint_integral} holds, then
\begin{equation}
|S_{h_\eta}(\epsilon)-S_h(\epsilon)|\le C\eta.
\label{eq:reg_linear}
\end{equation}
For $p>1$, the same bound \eqref{eq:reg_linear} holds if
$h'(0+)=h'(1-)=0$.
\end{enumerate}
\end{proposition}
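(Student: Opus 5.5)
The plan is to compare $\rho_{h_\eta}$ and $\rho_h$ on a common, uniformly controlled set of distributions, and then pass the comparison to the suprema. Throughout I write $F\in\M_p(G_0,\epsilon)$, so that $\|F^{-1}\|_p\le \|G_0^{-1}\|_p+\epsilon=:M$. By \eqref{eq:quantile_bv} and \eqref{eq:reg_density},
\[
\rho_{h_\eta}(F)-\rho_h(F)=\Delta^c_\eta(F)+\sum_k\bigl(\Delta^-_{k,\eta}(F)+\Delta^+_{k,\eta}(F)\bigr).
\]
Here $\Delta^c_\eta$ collects the continuous part of $h$, and $\Delta^\pm_{k,\eta}$ are the finitely many jump terms. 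For instance,
\[
\Delta^-_{k,\eta}(F)=c_k^-\Bigl(\frac{1}{\delta_{k,\eta}^-}\int_{\xi_k-\delta_{k,\eta}^-}^{\xi_k}F^{-1}(v)\,\mathrm d v-F^{-1}(\xi_k)\Bigr).
\]
Since the continuous part of $h$ is Lipschitz, $\nu^c$ has a bounded density $\psi^c$. Exchanging the order of integration gives
\[
\Delta^c_\eta(F)=\int_0^1\psi^c(u)\Bigl(\frac{1}{\delta_\eta(u)}\int_{u-\delta_\eta(u)}^{u}F^{-1}(v)\,\mathrm d v-F^{-1}(u)\Bigr)\mathrm d u .
\]
The bracket is nonpositive because $F^{-1}$ is increasing. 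Its absolute value is at most $F^{-1}(u)-F^{-1}(u-\delta_\eta(u))$. Integrating in $u$ telescopes to integrals of $|F^{-1}|$ over $(0,\eta)$ and $(1-\eta,1)$, plus the part near $0$ where $\delta_\eta(u)=u/2$. Hölder's inequality then yields $|\Delta^c_\eta(F)|\le C\|\psi^c\|_\infty\eta^{1-1/p}M$, uniformly in $F$ in the ball.

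Under the extra assumption in (ii) that $h'$ is Lipschitz near the endpoints, I would integrate by parts once more there. The endpoint mass is then multiplied by $\psi^c(u)-\psi^c(0)$ or by $\psi^c(u)-\psi^c(1)$, which are $O(\eta)$ on the windows. For $p=1$, condition \eqref{eq:endpoint_integral} makes these terms $O(\eta)$ as well. For $p>1$, when $h'(0+)=h'(1-)=0$, the endpoint factors vanish to first order, and again the terms are $O(\eta)$. Away from the endpoints, $F^{-1}$ is bounded uniformly over the ball, because $\|F^{-1}\|_p\le M$ forces $|F^{-1}(u)|\le M(u\wedge(1-u))^{-1/p}$. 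So the interior contribution is $O(\eta)$ by the same telescoping.

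For the jump terms, the sign is known: $\Delta^-_{k,\eta}$ has the sign of $-c_k^-$ and $\Delta^+_{k,\eta}$ has the sign of $c_k^+$. Their size, however, is not uniformly small, since a quantile may jump at $\xi_k$. I would therefore treat the two inequalities between $S_{h_\eta}(\epsilon)$ and $S_h(\epsilon)$ separately. Jumps with the favourable sign are simply dropped. For a jump with the unfavourable sign, take a (near-)optimizer $F$ for the side being bounded. Its quantiles at $\xi_k$ are bounded uniformly, by the bound above with $\xi_k$ interior. Modify $F^{-1}$ only on the window of length $\le\eta$ next to $\xi_k$, setting it equal to the constant $F^{-1}(\xi_k)$ or $F^{-1+}(\xi_k)$. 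This keeps the quantile increasing and kills $\Delta^\pm_{k,\eta}$. The change in $W_p^p(\cdot,G_0)$ is at most $O(\eta)$, by the bounded oscillation on the window together with the mean value theorem for $x\mapsto|x|^p$. Hence the modified distribution lies in $\M_p(G_0,\epsilon+C\eta)$.

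Finally, I would use that $\epsilon\mapsto S_h(\epsilon)$ and $\epsilon\mapsto S_{h_\eta}(\epsilon)$ are Lipschitz near the given $\epsilon$, with constants uniform in $\eta$. This comes from the dual \eqref{eq:reg_dual}, where the multipliers $\lambda^*_\eta$ are bounded, or directly from rescaling $F^{-1}-G_0^{-1}$ by $(\epsilon-C\eta)/\epsilon$ at cost $O(\eta)\|\psi\|$. This converts the radius enlargement back into an $O(\eta)$ value error. Combining all terms gives \eqref{eq:reg_rate} in general and \eqref{eq:reg_linear} under the assumptions of (ii). The main obstacle is this last step: obtaining bounds on $\lambda^*_\eta$, and on the optimal quantiles near $\xi_k$, that are uniform in $\eta$ despite $\psi_\eta$ blowing up like $c_k/\eta$ on the windows. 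I would first try to get them from the block-root characterization \eqref{eq:block_root}, applied to blocks containing a window, to show that the optimal quantile's oscillation on each window stays bounded.
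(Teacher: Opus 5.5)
Your continuous-part estimate is essentially the paper's. For (ii), your idea of integrating by parts once more amounts to the paper's bound $|\nu^c|(\mathrm d u)\le C(u\wedge(1-u))\,\mathrm d u$ plus Fubini; for $p=1$, \eqref{eq:endpoint_integral} forces $h'(0+)=h'(1-)=0$. Your endpoint-value replacement (the paper's $\widetilde P_\eta$) correctly proves $S_h(\epsilon)\le S_{h_\eta}(\epsilon)+C\eta^{1-1/p}$.

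The gap is the reverse inequality, $S_{h_\eta}(\epsilon)\le S_h(\epsilon)+C\eta^{1-1/p}$. There the unfavourable jumps are those with $c_k^-<0$ or $c_k^+>0$. This is the case $h=\widehat h$, where this inequality is the only nontrivial one; for $h=\widehat h_\alpha$ it is exactly $S_{\RVaR_{\alpha,\alpha+\eta}}-S_{\VaRp_\alpha}\le C\eta^{1-1/p}$. Take $F$ nearly optimal for $\rho_{h_\eta}$ and suppose $c_k^+>0$. Setting $\widetilde F^{-1}=F^{-1+}(\xi_k)$ on $(\xi_k,\xi_k+\eta]$ makes $\Delta^+_{k,\eta}(\widetilde F)=0$. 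However, it leaves the jump term $c_k^+F^{-1+}(\xi_k)$ of $\rho_h$ unchanged. So you only get $\rho_h(\widetilde F)=\rho_h(F)+O(\eta)$. Up to the continuous-part error, this falls short of $\rho_{h_\eta}(F)$ by $c_k^+\bigl(A^+_{k,\eta}[F^{-1}]-F^{-1+}(\xi_k)\bigr)$. As you note yourself, this is not uniformly small on the ball; for $\VaRp_\alpha$ it is $\RVaR_{\alpha,\alpha+\eta}(F)-\VaRp_\alpha(F)$. Equivalently, the modification lowers $\rho_{h_\eta}$ by that amount. Making $\Delta^\pm_{k,\eta}$ vanish at the modified distribution therefore never transfers the value $\rho_{h_\eta}(F)$ to $\rho_h$ of a nearly feasible distribution.

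The missing idea is to replace $F^{-1}$ on $(\xi_k-\eta,\xi_k]$ and $(\xi_k,\xi_k+\eta]$ by the window averages $A^-_{k,\eta}[F^{-1}]$ and $A^+_{k,\eta}[F^{-1}]$, respectively. The result $\bar P_\eta$ is still a quantile function, and its values at $\xi_k$ are exactly these averages:
\[
c_k^-\bar P_\eta^{-1}(\xi_k)+c_k^+\bar P_\eta^{-1+}(\xi_k)=c_k^-A^-_{k,\eta}[F^{-1}]+c_k^+A^+_{k,\eta}[F^{-1}].
\]
Hence $|\rho_h(\bar P_\eta)-\rho_{h_\eta}(F)|\le\int|\bar P_\eta^{-1}-F^{-1}|\,\mathrm d|\nu^c|+\int|F^{-1}-A_\eta[F^{-1}]|\,\mathrm d|\nu^c|$, and your estimates control both integrals. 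With averages for one inequality and endpoint values for the other, every jump is handled whatever its sign, so the sign bookkeeping is unnecessary.

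Your stated main obstacle also disappears, since nothing about optimizers or $\lambda^*_\eta$ is needed. Because the $\xi_k$ are interior, \eqref{eq:q_bound} bounds all window values uniformly, which gives $W_p^p(\cdot,G_0)\le\epsilon^p+C\eta$. Then rescale $P^{-1}-G_0^{-1}$ by $t=\epsilon/\max\{\epsilon,W_p(P,G_0)\}$. By linearity along the segment, this costs $(1-t)|\rho(P)-\rho(G_0)|=O(\eta)$, uniformly in $\eta$ by \eqref{eq:q_bound}, \eqref{eq:avg_bound} and \eqref{eq:endpoint_integral}. Bounding this cost by $\|\psi_\eta\|_{p'}$ times the change in $W_p$ would lose the linear rate in (ii), since that norm grows like $\eta^{-1/p}$ when $h$ has jumps.

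Your fallback through the block roots would not close the gap either. The unfavourable term is $|c_k^\pm|$ times the oscillation of the quantile on the window, so bounded oscillation is not enough; you would need exact flatness. For $p=1$ no block-root characterization is available at all.
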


\begin{example}[RVaR and the VaR limit]\label{ex:rvar}
Let $0<\alpha<\beta<1$. For every $1\le p<\infty$,
\citet[Proposition~4]{LMWW22} gives
\begin{equation}\label{eq:wc_var}
S_{\VaR_\alpha}=S_{\VaRp_\alpha}
=\inf\left\{v\in\mathbb R:
\int_\alpha^1(v-G_0^{-1}(u))_+^p\,\mathrm d u\ge\epsilon^p
\right\}.
\end{equation}
A distribution attaining $S_{\VaRp_\alpha}$ has quantile
\begin{equation}
P_{\VaRp_\alpha}^{-1}(u)
=
\begin{cases}
G_0^{-1}(u), & 0<u\le\alpha,\\
G_0^{-1}(u)\vee S_{\VaRp_\alpha}, & \alpha<u<1.
\end{cases}
\label{eq:var_optimizer}
\end{equation}

When $p=1$,
for fixed $\alpha$, $G_0$, and $\epsilon$, the upper-VaR optimizer
$P_{\VaRp_\alpha}$ also attains $S_{\RVaR_{\alpha,\beta}}$ for every $\beta\in(\alpha,1)$.
Indeed, \eqref{eq:wc_var} gives $W_1(P_{\VaRp_\alpha},G_0)=\epsilon$.
With $\lambda^*=[(\beta\vee G_0(S_{\VaRp_\alpha}))-\alpha]^{-1}$,
monotonicity of $F^{-1}$ gives, by averaging over
$(\alpha,\beta\vee G_0(S_{\VaRp_\alpha}))$,
\[
\RVaR_{\alpha,\beta}(F)-\RVaR_{\alpha,\beta}(P_{\VaRp_\alpha})
\le\lambda^*\bigl[W_1(F,G_0)-\epsilon\bigr],
\qquad F\in\mathcal M_1.
\]
For every feasible $F$, the right-hand side is nonpositive. Hence
\begin{equation}
S_{\RVaR_{\alpha,\beta}}
=\RVaR_{\alpha,\beta}(P_{\VaRp_\alpha})
=\frac1{\beta-\alpha}\int_\alpha^\beta
\bigl(G_0^{-1}(u)\vee S_{\VaRp_\alpha}\bigr)\,\mathrm d u.
\label{eq:wc_rvar}
\end{equation}
This recovers \citet[Theorem~3]{CHLW23} without a support constraint. The RVaR optimizer need not be unique.

When $p>1$,
by Proposition~\ref{prop:isotonic}, the unique RVaR optimizer
$P_{\lambda^*}$ is given by \eqref{eq:reg_quantile}, with
$z_{a,b}(\lambda)$ the unique root of
\[
\lambda\int_a^b|z-G_0^{-1}(v)|^{p-2}(z-G_0^{-1}(v))\,\mathrm d v
=\frac{((b\wedge\beta)-(a\vee\alpha))_+}{\beta-\alpha},
\qquad 0\le a<b\le1,
\]
where $\lambda=\lambda^*>0$ is chosen so that
$W_p(P_{\lambda^*},G_0)=\epsilon$.
Here $P_{\lambda^*}$ and $P_{\VaRp_\alpha}$ need not coincide.

For $h=\widehat h_\alpha$ and
$0<\eta<(1-\alpha)/2$, the regularization in
\eqref{eq:reg_density} is exactly $h_\eta=h_{\alpha,\alpha+\eta}$.
Thus Theorem~\ref{thm:regularization} gives, for every $1\le p<\infty$,
\[
\lim_{\beta\downarrow\alpha}S_{\RVaR_{\alpha,\beta}}=S_{\VaRp_\alpha}=S_{\VaR_\alpha}.
\]
\end{example}

\subsection{Computing worst-case values on a finite grid}
\label{subsec:finite_grid}

We present a finite-grid method for computing the worst-case
value of the distortion riskmetric $\rho_h$.
Let $1\le p<\infty$, $G_0\in\mathcal M_p$, and $\epsilon>0$.
First suppose that $h$ is absolutely continuous and $\psi_h\in L^{p'}(0,1)$.
This includes every regularized distortion $h_\eta$ in
Theorem~\ref{thm:regularization}.
Choose $0=t_0<t_1<\cdots<t_N=1$, with
$I_i=(t_{i-1},t_i]$ and $\Delta_i=t_i-t_{i-1}$, and define
\[
G_N^{-1}(u)=\frac1{\Delta_i}\int_{I_i}G_0^{-1}(v)\,\mathrm d v,
\qquad u\in I_i.
\]
The decision vector $\mathbf x=(x_1,\ldots,x_N)$, with
$x_1\le\cdots\le x_N$, defines $P_{\mathbf x}$ by
$P_{\mathbf x}^{-1}(u)=x_i$ on $I_i$.
Thus $P_{\mathbf x}$ assigns mass $\Delta_i$ to $x_i$, with masses combined
when values coincide. If $G_0^{-1}$ is already constant on each $I_i$,
then $G_N=G_0$.
The finite-grid problem is
\begin{equation}
\begin{aligned}
S_{h,N}(\epsilon)=\max_{x_1\le\cdots\le x_N}
&\quad\sum_{i=1}^N x_i\int_{I_i}\psi_h(v)\,\mathrm d v\\
\text{subject to}
&\quad\sum_{i=1}^N\Delta_i|x_i-G_N^{-1}(t_i)|^p\le\epsilon^p.
\end{aligned}
\label{eq:grid_problem}
\end{equation}
This maximizes $\rho_h(P_{\mathbf x})$ subject to
$W_p(P_{\mathbf x},G_N)\le\epsilon$.

Let $h_N$ be the piecewise linear interpolation satisfying
$h_N(1-t_i)=h(1-t_i)$ for $0\le i\le N$. Then
\[
\psi_{h_N}(u)
=\frac1{\Delta_i}\int_{I_i}\psi_h(v)\,\mathrm d v
=\frac{h(1-t_{i-1})-h(1-t_i)}{\Delta_i},
\qquad u\in I_i,
\]
and $\rho_{h_N}(P_{\mathbf x})=\rho_h(P_{\mathbf x})$.
The following result identifies the corresponding Wasserstein problem and
justifies refining the grid.

\begin{proposition}
\label{prop:step_approx}
Under the preceding assumptions, the maximum in \eqref{eq:grid_problem}
is attained for every $N$, and
\[
S_{h,N}(\epsilon)
=\max_{F\in\mathcal M_p(G_N,\epsilon)}\rho_{h_N}(F)
\to S_h(\epsilon)
\qquad\mbox{as}~\max_i\Delta_i\to0.
\]
\end{proposition}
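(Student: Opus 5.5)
Three steps: identify the finite-grid problem with a Wasserstein problem on the reference $G_N$, solve that problem by restricting to step quantiles, and let the grid refine.

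\emph{Identification and attainment.} Write $\mathcal M^{\mathrm{step}}_N$ for the distributions whose quantile is constant on each $I_i$. For $F\in\mathcal M_p$ let $F^{\mathcal I}$ be its concentration within $\mathcal I=\{I_i\}$, i.e.\ the quantile is replaced by its block averages $x_i=\Delta_i^{-1}\int_{I_i}F^{-1}$. These averages are nondecreasing, so $F^{\mathcal I}=P_{\mathbf x}$ for an admissible $\mathbf x$. Since $\psi_{h_N}$ is constant on each $I_i$, $\rho_{h_N}(F^{\mathcal I})=\rho_{h_N}(F)$. Because $G_N^{-1}$ is constant on each $I_i$, Jensen's inequality on each block gives
\[
\int_{I_i}|F^{-1}-G_N^{-1}|^p\ge\Delta_i|x_i-G_N^{-1}(t_i)|^p,
\]
so $W_p(F^{\mathcal I},G_N)\le W_p(F,G_N)$. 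This is the flatness/concentration argument behind Lemma~\ref{lem:concentration}. Hence
\[
\sup_{F\in\mathcal M_p(G_N,\epsilon)}\rho_{h_N}(F)=\sup_{F\in\mathcal M_p(G_N,\epsilon)\cap\mathcal M^{\mathrm{step}}_N}\rho_{h_N}(F)=S_{h,N}(\epsilon),
\]
where the last equality uses $\rho_{h_N}(P_{\mathbf x})=\rho_h(P_{\mathbf x})=\sum_i x_i\int_{I_i}\psi_h$. Problem~\eqref{eq:grid_problem} maximizes a linear function over a compact set: the constraint set is closed, and the $p$-th power constraint bounds every $x_i$ because $\Delta_i>0$. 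So the maximum is attained, and the supremum over the ball is a maximum.

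\emph{Convergence.} I would split
\[
|S_{h,N}-S_h|\le\Bigl|\sup_{\mathcal M_p(G_N,\epsilon)}\rho_{h_N}-\sup_{\mathcal M_p(G_0,\epsilon)}\rho_{h_N}\Bigr|+\bigl|S_{h_N}(\epsilon)-S_h(\epsilon)\bigr|.
\]

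For the first term, note that $G_N=G_0^{\mathcal I}$ and $\|G_N^{-1}-G_0^{-1}\|_p=:\tau_N\to0$ as the mesh tends to $0$, by $L^p$ continuity of conditional expectations on refining partitions. Fix $\delta>0$. Any $F$ in one ball can be moved into the other ball of the same radius by pulling its quantile toward the new center: take $F_s^{-1}=(1-s)F^{-1}+sG_N^{-1}$, a convex combination of nondecreasing functions and hence a quantile. Choosing $s=\tau_N/\epsilon$ makes it feasible when $\tau_N<\epsilon$. The resulting loss in $\rho_{h_N}$ is at most
\[
s\,\|\psi_{h_N}\|_{p'}\,\|F^{-1}-G_N^{-1}\|_p\le C\tau_N,
\]
because $\|\psi_{h_N}\|_{p'}\le\|\psi_h\|_{p'}$ by Jensen, and $\|F^{-1}-G_N^{-1}\|_p\le\epsilon+\tau_N$. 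A similar estimate holds in the other direction. For $p=1$ the same bound holds with $\|\psi_h\|_\infty$, which is finite under the standing assumption $\psi_h\in L^{p'}$.

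For the second term, on the common ball $\mathcal M_p(G_0,\epsilon)$ we have
\[
|\rho_{h_N}(F)-\rho_h(F)|\le\|\psi_{h_N}-\psi_h\|_{p'}\,(\|G_0^{-1}\|_p+\epsilon),
\]
which is uniform in $F$.

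\emph{Main obstacle.} The hard step is $\|\psi_{h_N}-\psi_h\|_{p'}\to0$, since $\psi_{h_N}$ is the block-average projection of $\psi_h$. For $p>1$ we have $p'<\infty$, and the claim follows by approximating $\psi_h$ by a continuous function together with the contraction property of the averaging operator. For $p=1$ it fails in $L^\infty$ when $\psi_h$ is discontinuous. In that case I would avoid the uniform bound and argue via the quantile form instead. Write
\[
\rho_{h_N}(F)=\int \mathbb E[F^{-1}\mid\mathcal I]\,\psi_h,
\]
and use $\|\mathbb E[F^{-1}\mid\mathcal I]-F^{-1}\|_1\to0$ uniformly over the $W_1$-ball. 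This uniformity I would obtain from the uniform integrability of quantiles in a $W_1$-ball, namely $\|F^{-1}-G_0^{-1}\|_1\le\epsilon$ combined with monotonicity of $F^{-1}$. I expect this last step to need the most care.
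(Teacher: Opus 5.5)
Your identification and attainment step is the paper's argument. For $p>1$ your convergence argument is also valid: you prove the stronger uniform statement $\sup_F|\rho_{h_N}(F)-\rho_h(F)|\to0$ via $\|\psi_{h_N}-\psi_h\|_{p'}\to0$, which the paper never needs. The gap is at $p=1$, and more care will not close it, because the uniformity you plan to prove is false.

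A $W_1$-ball is not uniformly integrable. For every $\kappa\in(0,1)$, the quantile $G_0^{-1}+(\epsilon/\kappa)\mathbf 1_{(1-\kappa,1)}$ lies in $\mathcal M_1(G_0,\epsilon)$ and puts $L^1$-mass $\epsilon$ on $(1-\kappa,1)$. Take $\kappa=\Delta_N/2$. Then $r=(\epsilon/\kappa)\mathbf 1_{(1-\kappa,1)}$ has block average $\epsilon/\Delta_N$ on $I_N$, and $\|\mathbb E[r\mid\mathcal I]-r\|_1=\epsilon$. Hence $\sup_F\|\mathbb E[F^{-1}\mid\mathcal I]-F^{-1}\|_1\ge\epsilon-\|G_N^{-1}-G_0^{-1}\|_1$ on every grid.

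Worse, no argument through $\sup_F|\rho_{h_N}(F)-\rho_h(F)|$ can work at $p=1$. Take $\psi_h(u)=\cos(\log(1-u))$, which is bounded and hence admissible. Take $F^{-1}=G_0^{-1}+\frac{\epsilon}{1-s}\mathbf 1_{(s,1)}$ with $s\in I_N$. Then $\frac{1}{1-s}\int_s^1\psi_h=\frac{\sqrt2}{2}\sin\bigl(\log(1-s)+\frac{\pi}{4}\bigr)$, which sweeps $[-\frac{\sqrt2}{2},\frac{\sqrt2}{2}]$ as $s$ ranges over $I_N$. Meanwhile $\psi_{h_N}$ is constant on $I_N$. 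So $\sup_{F\in\mathcal M_1(G_0,\epsilon)}|\rho_{h_N}(F)-\rho_h(F)|\ge\frac{\sqrt2}{2}\epsilon-o(1)$ for every grid.

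The repair uses an identity you already wrote down, $\rho_{h_N}(F)=\int\mathbb E[F^{-1}\mid\mathcal I]\,\psi_h=\rho_h(F^{\mathcal I})$. Use it to compare suprema one-sidedly rather than functionals uniformly; this is what the paper does.

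\emph{Lower bound.} Fix $F\in\mathcal M_p(G_0,\epsilon)$. By Jensen, $\|\mathbb E[F^{-1}\mid\mathcal I]-G_N^{-1}\|_p\le\|F^{-1}-G_0^{-1}\|_p\le\epsilon$, so $F^{\mathcal I}$ is feasible for \eqref{eq:grid_problem}. Also $\rho_h(F^{\mathcal I})\to\rho_h(F)$, because $\|\mathbb E[F^{-1}\mid\mathcal I]-F^{-1}\|_p\to0$ for this fixed $F$; no uniformity is needed. Hence $\liminf S_{h,N}(\epsilon)\ge S_h(\epsilon)$.

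\emph{Upper bound.} A grid optimizer satisfies $W_p(P_{\mathbf x_N^*},G_0)\le\epsilon+\tau_N$. The quantile $G_0^{-1}+\frac{\epsilon}{\epsilon+\tau_N}(P_{\mathbf x_N^*}^{-1}-G_0^{-1})$ therefore lies in $\mathcal M_p(G_0,\epsilon)$ and within $W_p$-distance $\tau_N$ of $P_{\mathbf x_N^*}$. This gives $S_{h,N}(\epsilon)=\rho_h(P_{\mathbf x_N^*})\le S_h(\epsilon)+\|\psi_h\|_{p'}\tau_N$.

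This argument covers all $1\le p<\infty$ at once. It needs neither $\psi_{h_N}\to\psi_h$ in any norm nor your ball-shifting first term.
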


The finite-dimensional counterparts of the optimality conditions in
Proposition~\ref{prop:isotonic} give an explicit solution.
For $p>1$ and $\lambda>0$, consider
\begin{equation}
\operatorname*{minimize}_{x_1\le\cdots\le x_N}
\quad\sum_{i=1}^N
\left\{
\frac{\lambda\Delta_i}{p}|x_i-G_N^{-1}(t_i)|^p
-x_i\int_{I_i}\psi_h(v)\,\mathrm d v
\right\}.
\label{eq:grid_isotonic}
\end{equation}
For $1\le a\le b\le N$, let $z_{t_{a-1},t_b}(\lambda)$ be the unique root of
\begin{equation}
\lambda\sum_{i=a}^b\Delta_i|z-G_N^{-1}(t_i)|^{p-2}
(z-G_N^{-1}(t_i))
=\int_{t_{a-1}}^{t_b}\psi_h(v)\,\mathrm d v.
\label{eq:grid_block}
\end{equation}
The unique minimizer of \eqref{eq:grid_isotonic} is
\begin{equation}
x_i(\lambda)
=\max_{1\le a\le i}\ \min_{i\le b\le N}
z_{t_{a-1},t_b}(\lambda),
\qquad i=1,\ldots,N.
\label{eq:grid_minmax}
\end{equation}
If $h(1)\ne0$, choose $\lambda_N^*>0$ so that
\begin{equation}
\sum_{i=1}^N\Delta_i|x_i(\lambda_N^*)-G_N^{-1}(t_i)|^p
=\epsilon^p.
\label{eq:grid_radius}
\end{equation}
Then $x_i^*=x_i(\lambda_N^*)$ solves \eqref{eq:grid_problem}.
If $h(1)=0$, the optimal multiplier may be zero; the constrained
formulation \eqref{eq:grid_problem} remains valid.

For $p=2$, \eqref{eq:grid_isotonic} reduces to the weighted isotonic regression
\[
\mathbf x(\lambda)
=\operatorname*{argmin}_{x_1\le\cdots\le x_N}
\sum_{i=1}^N\Delta_i
\left(x_i-G_N^{-1}(t_i)-\frac{\psi_{h_N}(t_i)}{\lambda}\right)^2.
\]
It can be computed by the pool-adjacent-violators algorithm
\citep{BC90}, with bisection in $\lambda$ to enforce
\eqref{eq:grid_radius} when $h(1)\ne0$.
For $p=1$, introducing $d_i\ge0$ gives the linear program
\begin{equation}
\begin{aligned}
\operatorname*{maximize}_{x_1\le\cdots\le x_N,\ d_1,\ldots,d_N\ge0}
&\quad\sum_{i=1}^N x_i\int_{I_i}\psi_h(v)\,\mathrm d v\\
\text{subject to}
&\quad-d_i\le x_i-G_N^{-1}(t_i)\le d_i,\qquad i=1,\ldots,N,\\
&\quad\sum_{i=1}^N\Delta_i d_i\le\epsilon.
\end{aligned}
\label{eq:grid_lp}
\end{equation}

For a general $h$ satisfying \eqref{eq:endpoint_integral}, apply
\eqref{eq:grid_problem} with $h_\eta$ in place of $h$.
Proposition~\ref{prop:step_approx} and Theorem~\ref{thm:regularization}(i) give
\begin{equation}
S_h(\epsilon)
=\lim_{\eta\downarrow0}\lim_{N\to\infty}S_{h_\eta,N}(\epsilon),
\label{eq:isotonic_value}
\end{equation}
where the grids satisfy $\max_i\Delta_i\to0$.

\begin{remark}
\label{rem:h_positive}
In Proposition~\ref{prop:isotonic}, $h(1)\ne0$ may be replaced by
$h(t)>0$ for some $t\in(0,1)$, provided the stated attainment conditions
hold. For $h_\eta$, it suffices to show $h_\eta(t)>0$ at one point;
because $h_\eta(t)\to h(t)$, positivity of $h$ ensures this for all
sufficiently small $\eta$. The replacement by $h(t)>0$ also applies to
Theorem~\ref{thm:regularization}, and \eqref{eq:grid_radius} holds if
$h(1-t_i)>0$ for some $1\le i<N$.
Indeed, an optimal multiplier $\lambda^*=0$ in
Proposition~\ref{prop:isotonic} would imply
\[
S_h(\epsilon)=\sup_{F\in\mathcal M_p}\rho_h(F)
\ge\rho_h\bigl((1-t)\delta_0+t\delta_M\bigr)
=Mh(t)\longrightarrow\infty\qquad\mbox{as}~~M\to\infty,
\]
contradicting finiteness. Hence $\lambda^*>0$, and the optimality
conditions give $W_p(P_{\lambda^*},G_0)=\epsilon$.
The same argument applies to the general distortion and grid problems;
for $p>1$, strict convexity also gives uniqueness of the optimal distribution.
\end{remark}

\section{Closed-form approximations and error bounds}
\label{sec:gap_bounds}

Section \ref{sec:regularization} retains the original risk measure and recovers its
worst-case value exactly, but its auxiliary optimizers generally require the
isotonic calculation rather than a closed-form formula. We next study
the complementary tradeoff: accept an approximate worst-case distribution in
exchange for an explicit formula and a computable error guarantee.

Throughout this section, let $1\le p<\infty$, $G_0\in\mathcal M_p$,
$\epsilon>0$, and $h\in\mathcal H$ with $h^*\not\equiv0$.
We first use an explicit concave-envelope optimizer
as an approximate worst-case distribution and bound its loss under the
original risk measure by a gap computed at the benchmark.

\begin{proposition}
\label{prop:bench_gap}
Suppose that $h=\widehat h$. Whenever $F^*$ in
\eqref{eq:bench_shift} exists,
\begin{equation}
\rho_h(F^*)=\rho_h(G_0)+\epsilon\|\psi_{h^*}\|_{p'},
\label{eq:candidate_value}
\end{equation}
and
\begin{equation}
0\le S_h(\epsilon)-\rho_h(F^*)
\le\rho_{h^*}(G_0)-\rho_h(G_0).
\label{eq:gap_bound}
\end{equation}
Otherwise $p=1$, and the same statements hold with $\rho_h(F^*)$
replaced by $\lim_{n\to\infty}\rho_h(P_n)$, where $P_n$ is given by
\eqref{eq:bench_sequence}.
\end{proposition}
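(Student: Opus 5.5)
The plan is to write $F^*$ as a comonotonic shift of $G_0$. Comonotonic additivity then splits $\rho_h(F^*)$ into $\rho_h(G_0)$ plus the riskmetric of the shift. Lemma~\ref{lem:concentration}, applied to a singleton ambiguity set, lets us replace $h$ by $h^*$ on the shift.

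\textbf{Step 1: structure of the shift.} Let $R$ be a random variable whose left quantile is the left-continuous modification of $r^*$ in \eqref{eq:bench_shift}. Since $h^*$ is concave, $\psi_{h^*}(u)=h^{*\prime}(1-u)$ is nondecreasing. The map $x\mapsto \operatorname{sgn}(x)|x|^{p'-1}$ is increasing, and for $p=1$ the map $x\mapsto \operatorname{sgn}(x)\mathbf 1_{\{|x|=\|\psi_{h^*}\|_\infty\}}$ is nondecreasing on the essential range of $\psi_{h^*}$. Hence $r^*$ is nondecreasing, so $R$ is well defined and $(F^*)^{-1}=G_0^{-1}+r^*$ is a comonotonic sum.

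Moreover, for each $C=(1-b,1-a)\in\mathcal I_h$, the envelope $h^*$ is linear on $(a,b)$. So $\psi_{h^*}$ is constant on $C$, and therefore $r^*$ is constant on $C$. Thus $R$ has a quantile flat on $\mathcal I_h$, i.e.\ $F_R^{\mathcal I_h}=F_R$.

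\textbf{Step 2: additivity.} Distortion riskmetrics are comonotonic additive. I would verify this directly from \eqref{eq:quantile_bv}, which is linear in $(F^{-1},F^{-1+})$. For a comonotonic sum, the left quantile is the sum of left quantiles and the right quantile is the sum of right quantiles. This gives $\rho_h(F^*)=\rho_h(G_0)+\rho_h(F_R)$, with finiteness ensured because $R\in L^p$. Applying Lemma~\ref{lem:concentration} to the singleton $\mathcal M=\{F_R\}$, which is closed under concentration within $\mathcal I_h$, and using $h=\widehat h$, gives $\rho_h(F_R)=\rho_{h^*}(F_R)$. By \eqref{eq:quantile_density},
\[
\rho_{h^*}(F_R)=\int_0^1 r^*(u)\psi_{h^*}(u)\,\mathrm d u=\epsilon\|\psi_{h^*}\|_{p'},
\]
which is the equality case of Hölder's inequality and is checked directly from \eqref{eq:bench_shift}. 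This proves \eqref{eq:candidate_value}.

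\textbf{Step 3: the gap.} Since $F^*\in\mathcal M_p(G_0,\epsilon)$ by Proposition~\ref{prop:wc_hstar}, we have $\rho_h(F^*)\le S_h(\epsilon)$. For the upper bound, $h\le h^*$ and the Choquet integrand is pointwise monotone in the distortion, so $\rho_h\le\rho_{h^*}$ on $\mathcal M_p$. Hence $S_h(\epsilon)\le S_{h^*}(\epsilon)=\rho_{h^*}(G_0)+\epsilon\|\psi_{h^*}\|_{p'}$ by Proposition~\ref{prop:wc_hstar}. Subtracting \eqref{eq:candidate_value} yields \eqref{eq:gap_bound}.

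\textbf{Step 4: the unattained case.} If $F^*$ fails to exist, Proposition~\ref{prop:wc_hstar} forces $p=1$. The shifts in \eqref{eq:bench_sequence} are again nondecreasing functions of $\psi_{h^*}$, so they are flat on $\mathcal I_h$. Repeating Steps 1–2 gives $\rho_h(P_n)=\rho_h(G_0)+\rho_{h^*}(P_n)-\rho_{h^*}(G_0)$. By Remark~\ref{rem:unique_hstar}, this tends to $\rho_h(G_0)+\epsilon\|\psi_{h^*}\|_\infty$. Feasibility of each $P_n$ gives the lower bound, and the upper bound is unchanged.

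The main obstacle is Step 2: making comonotonic additivity rigorous for discontinuous $h$, where the right quantiles at the jump levels $\xi_k$ must add correctly, and justifying the flatness-based identity $\rho_h=\rho_{h^*}$ on $R$. I would handle the latter through the singleton form of Lemma~\ref{lem:concentration}. For the former, I would note that the left-continuous modification of a sum of nondecreasing functions equals the sum of the modifications, and that right limits are additive as well.
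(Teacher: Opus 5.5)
Your proposal is correct and follows essentially the same route as the paper's proof. You write the quantile as the comonotonic sum $G_0^{-1}+r$, with $r$ nondecreasing and constant on each $C\in\mathcal I_h$, and apply Lemma~\ref{lem:concentration} to the shift alone together with comonotonic additivity to obtain $\rho_h(G_0)+\int_0^1 r\,\psi_{h^*}\,\mathrm d u$. You then sandwich $S_h(\epsilon)$ between this value (by feasibility) and $S_{h^*}(\epsilon)$ (by $h\le h^*$ and Proposition~\ref{prop:wc_hstar}); your additional checks on the $p=1$ shift and on left/right quantiles at the jump levels make explicit what the paper invokes simply as comonotonic additivity.
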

Together with \eqref{eq:bench_hstar}, these statements give
\begin{equation}
0\le S_{h^*}(\epsilon)-S_h(\epsilon)
\le\rho_{h^*}(G_0)-\rho_h(G_0).
\label{eq:gap_h_bound}
\end{equation}
\begin{example}[VaR, RVaR, and ES]
\label{ex:vares_gap}
Fix $0<\alpha<\beta<1$ and $1\le p<\infty$, and consider the worst-case problems defined in
\eqref{eq:wc_vares}. By Proposition~\ref{prop:wc_hstar}, the
concave-envelope optimizer for $S_{\ES_\alpha}$ is
\[
(F^*)^{-1}(u)=G_0^{-1}(u)
+\epsilon(1-\alpha)^{-1/p}\mathbf1_{(\alpha,1)}(u),
\qquad 0<u<1.
\]
It satisfies $W_p(F^*,G_0)=\epsilon$ and
$\ES_\alpha(F^*)=S_{\ES_\alpha}
=\ES_\alpha(G_0)+\epsilon(1-\alpha)^{-1/p}.$
For RVaR, Proposition~\ref{prop:bench_gap} gives
\begin{equation}
\begin{aligned}
&\RVaR_{\alpha,\beta}(F^*)
=\RVaR_{\alpha,\beta}(G_0)+\epsilon(1-\alpha)^{-1/p},\\
&0\le S_{\RVaR_{\alpha,\beta}}-\RVaR_{\alpha,\beta}(F^*)
\le\ES_\alpha(G_0)-\RVaR_{\alpha,\beta}(G_0).
\end{aligned}
\label{eq:rvar_gap}
\end{equation}
For the VaR--ES pair, we can go beyond the bound in
Proposition~\ref{prop:bench_gap} and calculate the exact gap from using
$F^*$ to approximate the worst-case VaR. Evaluating VaR and $\VaRp$ at
$F^*$ gives, for $\epsilon>0$, the gaps
\[
\begin{aligned}
&\Delta_{\VaRp_\alpha}(\epsilon)
:=S_{\VaRp_\alpha}-\VaRp_\alpha(F^*)
=S_{\VaRp_\alpha}-G_0^{-1+}(\alpha)-\epsilon(1-\alpha)^{-1/p}\le\ES_\alpha(G_0)-\VaRp_\alpha(G_0),\\
&\Delta_{\VaR_\alpha}(\epsilon)
:=S_{\VaR_\alpha}-\VaR_\alpha(F^*)
=S_{\VaRp_\alpha}-G_0^{-1}(\alpha)\le\ES_\alpha(G_0)-\VaR_\alpha(G_0)
+\epsilon(1-\alpha)^{-1/p},
\end{aligned}
\]
where $S_{\VaR_\alpha}=S_{\VaRp_\alpha}$ is given by \eqref{eq:wc_var}.
As $\epsilon\downarrow0$, \eqref{eq:wc_var} gives
$S_{\VaRp_\alpha}\downarrow G_0^{-1+}(\alpha)$, and hence
\[
\lim_{\epsilon\downarrow0}\Delta_{\VaRp_\alpha}(\epsilon)=0
\qquad\mbox{and}\qquad
\lim_{\epsilon\downarrow0}\Delta_{\VaR_\alpha}(\epsilon)
=G_0^{-1+}(\alpha)-G_0^{-1}(\alpha).
\]
The limiting gap for left VaR is zero precisely when the benchmark
quantile has no jump at $\alpha$.
\end{example}

Proposition~\ref{prop:bench_gap} demonstrates the closed-form expression of the nonconvex distortion riskmetric $\rho_h$ with the concave-envelope optimizer $F^*$. Further, \eqref{eq:gap_bound} and \eqref{eq:gap_h_bound} show that the gap between the original worst-case value $S_h(\epsilon)$ and its approximated value $\rho_h(F^*)$ shares the same upper bound as the gap between $S_h(\epsilon)$ and its concave-envelope counterpart $S_{h^*}(\epsilon)$, and the bound is given by the gap between $\rho_h$ and $\rho_{h^*}$ under the reference distribution $G_0$.
This is useful in practice, where one can estimate the guaranteed bound of the gap using the known reference distribution $G_0$ before we use $F^*$ as an approximate worst-case distribution for the original problem $S_h(\epsilon)$.

However, the benchmark bound in Proposition~\ref{prop:bench_gap}
does not depend on $\epsilon$, so this bound alone does not certify
improved accuracy as the radius changes. We next construct another
explicit approximate worst-case distribution for $\rho_h$ from
the reference distribution $G_0$, its $\mathcal I_h$-concentration
$G_0^{\mathcal I_h}$, and $r^*$ in \eqref{eq:bench_shift}.
For $\epsilon>0$, define $P_\epsilon$ by the left-continuous modification of
\begin{equation}
P_\epsilon^{-1}(u)
=G_0^{-1}(u)+\epsilon
\frac{(G_0^{\mathcal I_h})^{-1}(u)-G_0^{-1}(u)+r^*(u)}
{\|(G_0^{\mathcal I_h})^{-1}-G_0^{-1}+r^*\|_p},
\qquad 0<u<1.
\label{eq:radial_candidate}
\end{equation}
\begin{theorem}
\label{thm:radius_bound}
Suppose that $h=\widehat h$ and the envelope optimizer $F^*$ in
\eqref{eq:bench_shift} exists.
\begin{enumerate}[(i)]
\item We have $W_p(P_\epsilon,G_0)=\epsilon$,
\begin{equation}
\rho_h(P_\epsilon)
=\rho_h(G_0)+\frac{\epsilon\bigl[\rho_{h^*}(G_0)-\rho_h(G_0)
+\epsilon\|\psi_{h^*}\|_{p'}\bigr]}
{\|(G_0^{\mathcal I_h})^{-1}-G_0^{-1}+r^*\|_p},
\label{eq:radial_value}
\end{equation}
and consequently,
\begin{equation}
\begin{aligned}
0\le S_h(\epsilon)-\rho_h(P_\epsilon)
\le\rho_{h^*}(G_0)+\epsilon\|\psi_{h^*}\|_{p'}-\rho_h(P_\epsilon).
\end{aligned}
\label{eq:radial_gap}
\end{equation}
For $p>1$, the upper bound in \eqref{eq:radial_gap} is
$O\!\left(\epsilon^{-\min\{p-1,1\}}\right)$ as $\epsilon\to\infty$.
Moreover, $S_h(\epsilon)-\rho_h(P_\epsilon)\to0$ as $\epsilon\downarrow0$.
\item If $h$ is absolutely continuous and $\psi_h\in L^{p'}(0,1)$, then
\begin{equation}
0\le S_h(\epsilon)-\rho_h(P_\epsilon)
\le\rho_h(G_0)+\epsilon\|\psi_h\|_{p'}-\rho_h(P_\epsilon)
=O(\epsilon),\qquad \epsilon\downarrow0.
\label{eq:radial_holder}
\end{equation}
\end{enumerate}
\end{theorem}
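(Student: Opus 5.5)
The plan is to treat $P_\epsilon$ as a point on a segment of quantile functions, use the linearity of $\rho_h$ along that segment, and then compare with the two upper bounds $S_{h^*}(\epsilon)$ and, in part (ii), a H\"older bound. Write
\[
E=(G_0^{\mathcal I_h})^{-1}-G_0^{-1},\qquad D_\epsilon=E+r^*,\qquad s=\epsilon/\|D_\epsilon\|_p,\qquad K=\|\psi_{h^*}\|_{p'}.
\]
Note that $r^*$ already carries the factor $\epsilon$. Also $\|D_\epsilon\|_p>0$: the function $r^*$ is nonzero because $h^*\not\equiv0$, and on each $C\in\mathcal I_h$ it is constant while $E$ has zero average.

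\emph{Monotonicity and distance.} First I need $P_\epsilon^{-1}=(1-s)G_0^{-1}+s\bigl((G_0^{\mathcal I_h})^{-1}+r^*\bigr)$ to be increasing.
\begin{itemize}
\item $r^*$ is increasing, since $\psi_{h^*}$ is increasing ($h^*$ concave).
\item $r^*$ is constant on each $C\in\mathcal I_h$, since $h^*$ is linear there.
\item Hence $(G_0^{\mathcal I_h})^{-1}+r^*$ is increasing, and for $s\le1$ the displayed convex combination is increasing.
\end{itemize}
For $s>1$ the combination is increasing outside $\bigcup\mathcal I_h$, but on a block $C$ it equals $(1-s)G_0^{-1}+\text{const}$, which is decreasing when $G_0^{-1}$ is not flat on $C$. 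I expect this case to be the main obstacle. I would first try to show that $\|D_\epsilon\|_p\ge\epsilon$ whenever $r^*$ matters, for instance via $\|r^*\|_p=\epsilon$ together with the orthogonality established below. If that fails, the statement must implicitly use the increasing rearrangement, and the value formula would need re-examination. Granting monotonicity, $W_p(P_\epsilon,G_0)=s\|D_\epsilon\|_p=\epsilon$ follows immediately.

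\emph{Value formula.} By \eqref{eq:quantile_bv}, $\rho_h$ is linear in the quantile function on the cone of increasing quantiles. Therefore
\[
\rho_h(P_\epsilon)=\rho_h(G_0)+s\bigl[\rho_h(G_0^{\mathcal I_h}\oplus r^*)-\rho_h(G_0)\bigr],
\]
where $G_0^{\mathcal I_h}\oplus r^*$ denotes the distribution with quantile $(G_0^{\mathcal I_h})^{-1}+r^*$. This quantile is flat on $\mathcal I_h$. Since $h=\widehat h$ and $h^*$ is linear on each interval of $\mathcal I_h$, the argument behind Lemma~\ref{lem:concentration} gives $\rho_h=\rho_{h^*}$ on such quantiles. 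Moreover $\rho_{h^*}(G_0^{\mathcal I_h})=\rho_{h^*}(G_0)$, because $\psi_{h^*}$ is constant on each block. Using $\int r^*\psi_{h^*}=\epsilon K$, as in Proposition~\ref{prop:bench_gap}, the bracket equals $\rho_{h^*}(G_0)-\rho_h(G_0)+\epsilon K$, which yields \eqref{eq:radial_value}.

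\emph{Gap bounds in (i).} The lower bound in \eqref{eq:radial_gap} is feasibility of $P_\epsilon$. The upper bound is $S_h\le S_{h^*}$ combined with Proposition~\ref{prop:wc_hstar}.

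For the rate as $\epsilon\to\infty$, write $r^*=\epsilon q$ with $\|q\|_p=1$, let $A=\rho_{h^*}(G_0)-\rho_h(G_0)$, and set $N=\|q+E/\epsilon\|_p$. Then the bound equals $(A+\epsilon K)(1-1/N)$. The key observation is that the first-order term of the norm expansion vanishes:
\[
\int|q|^{p-2}qE\;\propto\;\int\psi_{h^*}E=\rho_{h^*}(G_0^{\mathcal I_h})-\rho_{h^*}(G_0)=0 .
\]
Next, the elementary inequality $\bigl||a+b|^p-|a|^p-p|a|^{p-2}ab\bigr|\le C|b|^p$ for $1<p<2$, and the bound $C(|a|^{p-2}b^2+|b|^p)$ for $p\ge2$ (the middle term is integrable by H\"older), give $N-1=O(\epsilon^{-\min\{p,2\}})$. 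Multiplying by $\epsilon K$ then gives $O(\epsilon^{-\min\{p-1,1\}})$.

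\emph{Limit as $\epsilon\downarrow0$.} I split
\[
S_h(\epsilon)-\rho_h(P_\epsilon)=[S_h(\epsilon)-\rho_h(G_0)]-[\rho_h(P_\epsilon)-\rho_h(G_0)].
\]
The second bracket is $O(\epsilon)$ by \eqref{eq:radial_value}. Indeed, $\|D_\epsilon\|_p\to\|E\|_p$ if $E\ne0$, and $\|D_\epsilon\|_p=\epsilon$ if $E=0$. For the first bracket I use upper semicontinuity of $\rho_h$ under $W_p$-convergence when $h=\widehat h$. This holds because left and right quantile terms in \eqref{eq:quantile_bv} are usc under weak convergence, and the continuous part is continuous under $W_p$ given finiteness. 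Applying it along near-optimizers gives $\limsup_{\epsilon\downarrow0} S_h(\epsilon)\le\rho_h(G_0)$.

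\emph{Part (ii).} By \eqref{eq:quantile_density} and H\"older's inequality, every feasible $F$ satisfies
\[
\rho_h(F)-\rho_h(G_0)=\int(F^{-1}-G_0^{-1})\psi_h\le\epsilon\|\psi_h\|_{p'} .
\]
Combined with the $O(\epsilon)$ estimate for $\rho_h(P_\epsilon)-\rho_h(G_0)$ above, this gives \eqref{eq:radial_holder}.
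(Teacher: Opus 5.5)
Your proposal follows the paper's architecture: $P_\epsilon$ as a convex combination of $G_0^{-1}$ and $((F^*)^{\mathcal I_h})^{-1}$, the concentration lemma for the value formula, $S_h\le S_{h^*}$ for the bound, and the vanishing first-order term for the large-$\epsilon$ rate. Two steps, however, are not established.

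\textbf{First gap: monotonicity of $P_\epsilon^{-1}$.} You leave $s\le1$, i.e.\ $\|D_\epsilon\|_p\ge\epsilon$, unproved and even consider that it might fail. All of (i) rests on it. Without it, $P_\epsilon^{-1}$ need not be nondecreasing, so \eqref{eq:radial_candidate} does not define a distribution. Then neither $W_p(P_\epsilon,G_0)=\epsilon$ nor the comonotonic-additivity computation of $\rho_h(P_\epsilon)$ is available.

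The missing observation is that $E$ is orthogonal to \emph{every} function constant on each $C\in\mathcal I_h$, not just to $\psi_{h^*}$. Suppose $r^*\equiv\kappa_C$ on $C$. Since $E$ averages to zero on $C$, Jensen gives $\int_C|E+r^*|^p\ge\lambda(C)|\kappa_C|^p=\int_C|r^*|^p$. Since $E=0$ off $\bigcup\mathcal I_h$, it follows that $\|D_\epsilon\|_p\ge\|r^*\|_p=\epsilon$ for every $p\ge1$.

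Your suggested route, $|a+b|^p\ge|a|^p+p|a|^{p-2}ab$ combined with $\int|q|^{p-2}qE=0$, does work for $p>1$. For $p=1$, however, the relevant weight is $\operatorname{sgn}(r^*)$, which is not a multiple of $\psi_{h^*}$, so you need the block-constant orthogonality. Note that your argument for $\|D_\epsilon\|_p>0$ is already one line away from this.

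\textbf{Second gap: the small-radius limit.} Your upper-semicontinuity claim for $\rho_h$ under $W_p$ is not justified under the theorem's hypotheses. Passing to the limit termwise in \eqref{eq:quantile_bv} requires two things:
\begin{itemize}
\item a dominating function for $\int F^{-1}\,\mathrm d\nu^c$, where $\nu^c$ may be singular, so $L^p$-convergence of quantiles does not control this integral;
\item a uniform tail bound for the possibly infinite jump series.
\end{itemize}
This is essentially \eqref{eq:endpoint_integral}, which is not assumed here. Pointwise finiteness of $\rho_h$ on $\mathcal M_p$ does not supply either.

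The paper avoids decomposing $h$. Take near-optimizers $\widetilde P_n$ at radii $\epsilon_n\downarrow0$. H\"older gives $\rho_{h^*}(\widetilde P_n)\to\rho_{h^*}(G_0)$. Since $h^*$ is continuous and $h=\widehat h$ is usc, $h^*-h\ge0$ is lower semicontinuous. Fatou applied to $\int(h^*-h)(1-\widetilde P_n(x))\,\mathrm dx$ then yields $\liminf_n[\rho_{h^*}-\rho_h](\widetilde P_n)\ge[\rho_{h^*}-\rho_h](G_0)$. Hence $\limsup_nS_h(\epsilon_n)\le\rho_h(G_0)$.

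You also do not need your second bracket to vanish. The numerator in \eqref{eq:radial_value} is nonnegative, so $\rho_h(P_\epsilon)\ge\rho_h(G_0)$, and therefore $0\le S_h(\epsilon)-\rho_h(P_\epsilon)\le S_h(\epsilon)-\rho_h(G_0)$. This also disposes of your case $E=0$. There, your claimed $O(\epsilon)$ bound on $\rho_h(P_\epsilon)-\rho_h(G_0)$ silently requires $\rho_{h^*}(G_0)=\rho_h(G_0)$. That equality is true, because $G_0=G_0^{\mathcal I_h}$ in this case, but you did not state it. The same inequality $\rho_h(P_\epsilon)\ge\rho_h(G_0)$, together with H\"older, gives (ii) directly.
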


The construction and bounds in Theorem~\ref{thm:radius_bound} require no
optimization or root finding. For general $h$, the upper bound in
\eqref{eq:radial_gap} tends to zero as $\epsilon\to\infty$ when $p>1$, but it need not vanish as $\epsilon\downarrow0$, although
the gap tends to zero. Under part~(ii), the minimum of the upper bounds in \eqref{eq:radial_gap} and \eqref{eq:radial_holder} tends to zero as $\epsilon\downarrow0$ for every $1\le p<\infty$, and also as $\epsilon\to\infty$ when $p>1$.
The linear small-radius rate cannot hold for all distortions.
For example, let $\rho_h=\VaRp_{1/2}$ and $G_0=\U[0,1]$.
Equation~\eqref{eq:wc_var} gives
$S_h(\epsilon)=1/2+(p+1)^{1/(p+1)}\epsilon^{p/(p+1)}$ for sufficiently
small $\epsilon$, whereas \eqref{eq:radial_candidate} gives
$\rho_h(P_\epsilon)=1/2+O(\epsilon)$. Hence
\[
S_h(\epsilon)-\rho_h(P_\epsilon)
\sim(p+1)^{1/(p+1)}\epsilon^{p/(p+1)},
\qquad\epsilon\downarrow0.
\]

If $p=1$ and $F^*$ does not exist, replace $r^*$ in
\eqref{eq:radial_candidate} by $P_n^{-1}-G_0^{-1}$ from
\eqref{eq:bench_sequence}. The resulting $P_{\epsilon,n}$ satisfies
$W_1(P_{\epsilon,n},G_0)=\epsilon$ and
\[
0\le S_h(\epsilon)-\rho_h(P_{\epsilon,n})
\le\rho_{h^*}(G_0)+\epsilon\|\psi_{h^*}\|_\infty
-\rho_h(P_{\epsilon,n}).
\]

\begin{example}
\label{ex:radius_p1}
Large-radius decay in Theorem \ref{thm:radius_bound} can fail when $p=1$, even if $F^*$ exists.
Let $p=1$ and
\[
G_0^{-1}(u)=
\begin{cases}
0,&0<u\le1/4,\\
2,&1/4<u\le1/2,\\
3,&1/2<u<1,
\end{cases}
\qquad
\psi_h(u)=
\begin{cases}
8/5,&0<u<1/4,\\
0,&1/4<u<1/2,\\
6/5,&1/2<u<1.
\end{cases}
\]
Here $h$ is increasing, $h(1)=1$,
$\psi_{h^*}=4/5$ on $(0,1/2)$, $\psi_{h^*}=6/5$ on $(1/2,1)$,
and $\mathcal I_h=\{(0,1/2)\}$.
For every $\epsilon\ge1/2$, we have
\[
S_h(\epsilon)=2+\frac65\epsilon,
\qquad S_h(\epsilon)-\rho_h(P_\epsilon)
=\frac{1+4\epsilon}{5(1+2\epsilon)}
\to\frac25\quad\text{as }\epsilon\to\infty.
\]
Indeed, the majorant $(h+h^*)/2$ has benchmark risk value $2$ and
density norm $6/5$, giving the stated upper bound on $S_h$.
The distribution with quantile $2$ on $(0,1/2]$ and $2+2\epsilon$
on $(1/2,1)$ has distance $\epsilon$ from $G_0$ and attains this bound.
Substitution of $r^*=2\epsilon\mathbf1_{(1/2,1)}$ in
\eqref{eq:radial_candidate} gives the asserted gap.
\end{example}
\section{Application to distributionally robust portfolio selection}
\label{sec:portfolio}

\subsection{Portfolio selection problem and elliptical benchmark}
\label{subsec:portfolio_model}

In this section, we study a distributionally robust portfolio selection problem focusing on the outer optimization.
Let $\mathbf X\in\X^d$ denote asset losses and let $\mathcal W\subseteq \R^d$ be a
nonempty set of feasible portfolio weights. A typical choice is the \emph{long-only
simplex} $\{\bm\omega\ge0:\mathbf 1^\top\bm\omega=1\}$, possibly with
additional return or exposure constraints. More generally, More generally, $\mathcal{W}$ is \emph{second-order-cone representable}
if it is the projection onto portfolio coordinates of a set
defined by finitely many second-order-cone and affine
constraints.\footnote{Formally, a set $\mathcal{W}\subseteq\mathbb{R}^d$ is \emph{second-order-cone representable} if there exist $n\ge d$, $N\ge 1$, $q\ge0$, $m_i\ge1$, $A_i\in\mathbb{R}^{m_i\times n}$, 
$\mathbf{b}_i\in\mathbb{R}^{m_i}$,
$\mathbf{c}_i\in\mathbb{R}^n$,
$\gamma_i\in\mathbb{R}$, $i=1,\ldots,N$,
$E\in\mathbb{R}^{q\times n}$, and $\mathbf{f}\in\mathbb{R}^q$,
such that
\[
\mathcal{W}
=\left\{
\boldsymbol{\omega}\in\mathbb{R}^d:
\begin{array}{l}
\text{there exists }\mathbf{x}\in\mathbb{R}^n
\text{ with }\mathbf{x}_{1:d}=\boldsymbol{\omega},\\[2pt]
E\mathbf{x}=\mathbf{f},\quad
\|A_i\mathbf{x}+\mathbf{b}_i\|_2
\le \mathbf{c}_i^\top\mathbf{x}+\gamma_i,\quad i=1,\ldots,N
\end{array}
\right\},
\]
where $\mathbf{x}_{1:d}$ denotes the first $d$ coordinates of
$\mathbf{x}$.}
Minimizing a linear objective over such a set is a \emph{second-order cone
program (SOCP)}; see \citet{LVBL98}.
Fix $1\le p<\infty$, $r\ge1$,
$\epsilon>0$, and $h\in\mathcal H$. With $V_h(\bm\omega)$ defined
in \eqref{eq:wc_decision}, the outer problem is
$\inf_{\bm\omega\in\mathcal W}V_h(\bm\omega)$.

Suppose that the benchmark is elliptical,
$P_0=\mathcal E(\bm\mu,\bm\Sigma,\varphi)$, with $\bm\Sigma\succeq0$ and standardized
one-dimensional projection $P_\varphi=\mathcal E(0,1,\varphi)\in\mathcal M_p$.
We have
\[
\bm\omega^\top\mathbf X_0
\overset{\mathrm d}{=}\bm\omega^\top\bm\mu+s_{\bm\omega}Z_\varphi,
\qquad s_{\bm\omega}=\sqrt{\bm\omega^\top\bm\Sigma\bm\omega},
\quad F_{Z_\varphi}=P_\varphi;
\]
see, for example, \citet[Section~7]{LMWW22}.
Throughout this section, the worst-case value $S_h(\epsilon)$ in
\eqref{eq:wc_distortion} uses the fixed benchmark $G_0=P_\varphi$.
For $s_{\bm\omega}>0$, the projection result in \eqref{eq:linear_projection} and location--scale
covariance yield
\begin{equation}
V_h(\bm\omega)
=h(1)\bm\omega^\top\bm\mu
+s_{\bm\omega}S_h\!\left(
\frac{\epsilon\|\bm\omega\|_{r'}}{s_{\bm\omega}}\right).
\label{eq:elliptical_inner}
\end{equation}
Thus every portfolio uses the same worst-case distortion riskmetric at a different radius.
For $s_{\bm\omega}=0$, the projected benchmark is the point mass
$\delta_{\bm\omega^\top\bm\mu}$; in particular, $V_h(0)=0$.

\subsection{Outer optimization by direct convexification}
\label{subsec:elliptical}

Proposition~\ref{prop:wc_hstar} gives the envelope objective
\begin{equation}
V_{h^*}(\bm\omega)
=h(1)\bm\omega^\top\bm\mu+\rho_{h^*}(P_\varphi)s_{\bm\omega}
+\epsilon\|\psi_{h^*}\|_{p'}\|\bm\omega\|_{r'},
\label{eq:elliptical_value}
\end{equation}
where $\rho_{h^*}(P_\varphi)\ge0$ by symmetry of $P_\varphi$ and convexity of
$\rho_{h^*}$. The result below determines conditions under which minimizing
\eqref{eq:elliptical_value} also solves the original outer problem.

\begin{proposition}
\label{prop:outer_flatness}
Suppose that $h=\widehat h$, $h^*\not\equiv0$, $\bm\Sigma\succ0$, and
$\mathcal W$ is compact with $0\notin\mathcal W$.
\begin{enumerate}[(i)]
\item If $p>1$, then, for every $\epsilon>0$,
\begin{equation}
\inf_{\bm\omega\in\mathcal W}V_h(\bm\omega)
=\min_{\bm\omega\in\mathcal W}V_{h^*}(\bm\omega)
\quad\Longleftrightarrow\quad
P_\varphi^{-1}\text{ is flat on }\mathcal I_h.
\label{eq:outer_flatness}
\end{equation}
\item If $p=1$, flatness of $P_\varphi^{-1}$ on $\mathcal I_h$ implies
$\inf_{\bm\omega\in\mathcal W}V_h(\bm\omega)
=\min_{\bm\omega\in\mathcal W}V_{h^*}(\bm\omega)$ for every $\epsilon>0$.
If $P_\varphi^{-1}$ is not flat on $\mathcal I_h$, then
$\inf_{\bm\omega\in\mathcal W}V_h(\bm\omega)
<\min_{\bm\omega\in\mathcal W}V_{h^*}(\bm\omega)$ for all sufficiently small
$\epsilon>0$.
\end{enumerate}
Under flatness, $V_h(\bm\omega)=V_{h^*}(\bm\omega)$ for all
$\bm\omega\in\mathcal W$ and $\epsilon>0$, so the minimizers also coincide.
\end{proposition}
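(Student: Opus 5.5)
The proof reduces the portfolio statement to the one-dimensional Theorem~\ref{thm:flatness}, applied at the benchmark $G_0=P_\varphi$. By \eqref{eq:elliptical_inner}, for $s_{\bm\omega}>0$ we have $V_h(\bm\omega)=h(1)\bm\omega^\top\bm\mu+s_{\bm\omega}S_h(\epsilon_{\bm\omega}')$ with $\epsilon'_{\bm\omega}=\epsilon\|\bm\omega\|_{r'}/s_{\bm\omega}$. By Proposition~\ref{prop:wc_hstar} and \eqref{eq:elliptical_value}, the same formula holds for $V_{h^*}$ with $S_{h^*}$ in place of $S_h$. Since $\bm\Sigma\succ0$ and $0\notin\mathcal W$, we get $s_{\bm\omega}>0$ and $\|\bm\omega\|_{r'}>0$ on $\mathcal W$. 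Hence $\epsilon'_{\bm\omega}>0$, and
\[
V_{h^*}(\bm\omega)-V_h(\bm\omega)=s_{\bm\omega}\bigl[S_{h^*}(\epsilon'_{\bm\omega})-S_h(\epsilon'_{\bm\omega})\bigr]\ge0
\]
for every $\bm\omega\in\mathcal W$. Nonnegativity uses $h\le h^*$ together with the fact that the difference of the two Choquet integrals is a Choquet integral of a nonnegative distortion, hence nonnegative. Because $V_{h^*}$ is continuous and $\mathcal W$ is compact, its minimum is attained.

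\emph{Sufficiency (both (i) and (ii)).} If $P_\varphi^{-1}$ is flat on $\mathcal I_h$, Theorem~\ref{thm:flatness} gives $S_h(\delta)=S_{h^*}(\delta)$ for every $\delta>0$. Therefore $V_h=V_{h^*}$ pointwise on $\mathcal W$, for every $\epsilon$. The infima and the minimizers then coincide, which is also the final claim of the proposition.

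\emph{Necessity for $p>1$.} Suppose flatness fails, and let $\bm\omega^*$ be a minimizer of $V_{h^*}$ on $\mathcal W$. Theorem~\ref{thm:flatness}(i) is stated as "for every $\epsilon$ equality iff flat". I would use its proof in the stronger form that non-flatness gives $S_h(\delta)<S_{h^*}(\delta)$ for \emph{every} $\delta>0$. The expected route for $p>1$ is as follows. The maximizer of $S_{h^*}$ is unique by Proposition~\ref{prop:wc_hstar}, and it is strictly non-flat on some $C\in\mathcal I_h$, since it equals $G_0^{-1}+r^*$ with $r^*$ constant on $C$ because $\psi_{h^*}$ is constant there. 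Equality $S_h=S_{h^*}$ would force a maximizer of $\rho_h$ to also maximize $\rho_{h^*}$, while $\rho_h<\rho_{h^*}$ at non-flat quantiles. If the theorem's proof only supplies strict inequality at some radius rather than at all radii, I would check it directly by this argument. Given the all-radius version, applying it at $\delta=\epsilon'_{\bm\omega^*}$ yields
\[
\inf_{\mathcal W}V_h\le V_h(\bm\omega^*)<V_{h^*}(\bm\omega^*)=\min_{\mathcal W}V_{h^*}.
\]
This proves (i).

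\emph{Necessity for $p=1$.} Theorem~\ref{thm:flatness}(ii) only gives strict inequality for radii $\delta\in(0,\delta_0)$. Here I use compactness: $\epsilon'_{\bm\omega}\le\epsilon\max_{\mathcal W}\|\bm\omega\|_{r'}/\min_{\mathcal W}s_{\bm\omega}$, and the right-hand side is finite and tends to $0$ as $\epsilon\downarrow0$. So for small $\epsilon$, every $\epsilon'_{\bm\omega}$, in particular $\epsilon'_{\bm\omega^*}$, lies in $(0,\delta_0)$, and the same chain of inequalities gives the strict gap.

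The main obstacle is the all-radii strict inequality needed for $p>1$. Note that the maximizer $\bm\omega^*$ depends on $\epsilon$, so the effective radius $\epsilon'_{\bm\omega^*}$ cannot be controlled and we really need strictness at arbitrary $\delta$.
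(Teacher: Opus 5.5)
Your proposal is correct and matches the paper's proof essentially step for step: sufficiency comes from applying Theorem~\ref{thm:flatness} at each $\bm\omega$, necessity from evaluating at a minimizer of $V_{h^*}$ on compact $\mathcal W$, and, for $p=1$, a uniform bound on the standardized radii $\epsilon\|\bm\omega\|_{r'}/s_{\bm\omega}$ that places them in $(0,\delta_0)$. The all-radii strict inequality you flag as the main obstacle is already what Theorem~\ref{thm:flatness}(i) states, since its quantifier over $\epsilon$ sits outside the equivalence and its proof fixes an arbitrary $\epsilon$; the paper therefore invokes it directly, and your fallback maximizer argument, which would also need attainment of $S_h$, is not required.
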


The envelope minimizer can be computed as described in
Section~\ref{subsec:outer_gap}. The direct convexification method of Proposition \ref{prop:outer_flatness} only works under specific setups. It is clear that flatness does not hold for $P_\varphi$ with strictly increasing $P^{-1}_\varphi$ (e.g., Gaussian and nondegenerate Student-$t$ projections). 

\subsection{Exact outer optimization by regularization}
\label{subsec:outer_reg}

Next, we adopt the idea of Theorem~\ref{thm:regularization} to solve the exact optimal value of the robust portfolio optimization problem by taking the limit of regularized problems with absolutely continuous distortion functions.

\begin{proposition}
\label{prop:outer_reg}
Suppose that \eqref{eq:endpoint_integral} holds, $\bm\Sigma\succ0$, and
$\mathcal W$ is compact. Let $h_\eta(t)=\int^1_{1-t}\psi_\eta(v)\,\d v$, $0\le t\le 1$, with density $\psi_\eta$ given by \eqref{eq:reg_density}.
\begin{enumerate}[(i)]
\item For every $\eta>0$ and $1\le p<\infty$,
\begin{equation}
\min_{\bm\omega\in\mathcal W}V_{h_\eta}(\bm\omega)
=\min_{\delta>0,\,\bm\omega\in\mathcal W}
\Bigl\{h(1)\bm\omega^\top\bm\mu
+\bigl[S_{h_\eta}(\delta)-\delta\,\partial_+S_{h_\eta}(\delta)\bigr]
s_{\bm\omega}
+\epsilon\,\partial_+S_{h_\eta}(\delta)\|\bm\omega\|_{r'}\Bigr\},
\label{eq:reg_outer_solution}
\end{equation}
where $\partial_+S_{h_\eta}$ is the right derivative with respect to the
radius. Any minimizing pair $(\delta_\eta^*,\bm\omega_\eta^*)$ on the
right-hand side satisfies
$\bm\omega_\eta^*\in
\operatorname*{argmin}_{\bm\omega\in\mathcal W}V_{h_\eta}(\bm\omega)$.
\item Uniform convergence $\sup_{\bm\omega\in\mathcal W}
|V_{h_\eta}(\bm\omega)-V_h(\bm\omega)|\to0$ holds as
$\eta\downarrow0$.
For any choice $\bm\omega_\eta^*\in
\operatorname*{argmin}_{\bm\omega\in\mathcal W}V_{h_\eta}(\bm\omega)$,
if $\bm\omega_\eta^*\to\bm\omega^*$ as $\eta\downarrow0$, then
$\bm\omega^*\in
\operatorname*{argmin}_{\bm\omega\in\mathcal W}V_h(\bm\omega)$.
Moreover, if $\operatorname*{argmin}_{\bm\omega\in\mathcal W}V_h(\bm\omega)$
is a singleton $\{\bm\omega^*\}$, then
$\bm\omega_\eta^*\to\bm\omega^*$ as $\eta\downarrow0$.
\end{enumerate}
\end{proposition}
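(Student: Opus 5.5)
The proof rests on \eqref{eq:elliptical_inner}. It writes $V_{h_\eta}(\bm\omega)=h(1)\bm\omega^\top\bm\mu+s_{\bm\omega}S_{h_\eta}(\epsilon\|\bm\omega\|_{r'}/s_{\bm\omega})$, and likewise for $V_h$. Since $\bm\Sigma\succ0$, $s_{\bm\omega}>0$ for $\bm\omega\ne0$; at $\bm\omega=0$ both sides vanish. Part (i) will come from concavity of $\delta\mapsto S_{h_\eta}(\delta)$ on $(0,\infty)$, and part (ii) from a uniform-in-radius estimate for $S_{h_\eta}-S_h$ followed by a standard epi-convergence argument.

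For (i), I first show that $S_{h_\eta}$ is concave and nondecreasing in the radius. Monotonicity is immediate because the balls are nested. For concavity, take feasible $F_1,F_2$ at radii $\delta_1,\delta_2$ and form $F_\theta$ with quantile $\theta F_1^{-1}+(1-\theta)F_2^{-1}$. Then $\rho_{h_\eta}$ is linear in quantiles by \eqref{eq:quantile_density}, and Minkowski's inequality gives $W_p(F_\theta,G_0)\le\theta\delta_1+(1-\theta)\delta_2$. Finiteness of $S_{h_\eta}$ comes from Proposition~\ref{prop:isotonic}(i).

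A finite concave function on $(0,\infty)$ is the infimum of its tangent lines:
\[
S_{h_\eta}(x)=\min_{\delta>0}\bigl\{S_{h_\eta}(\delta)+\partial_+S_{h_\eta}(\delta)(x-\delta)\bigr\},\qquad x>0.
\]
The minimum is attained at $\delta=x$, since $\partial_+$ is a valid supergradient. Substituting $x=\epsilon\|\bm\omega\|_{r'}/s_{\bm\omega}$ and multiplying by $s_{\bm\omega}$ yields the bracketed expression in \eqref{eq:reg_outer_solution}, which is jointly minimized over $(\delta,\bm\omega)$. Because $\mathcal W$ is compact, attainment will follow once $V_{h_\eta}$ is shown to be continuous on $\mathcal W$. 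This holds because $S_{h_\eta}$ is concave, hence continuous, and the radius map is continuous away from $0$. At $\bm\omega=0$ one uses $S_{h_\eta}(x)\le\rho_{h_\eta^*}(G_0)+x\|\psi_{h_\eta^*}\|_{p'}$, which makes $s S_{h_\eta}(\epsilon\|\bm\omega\|_{r'}/s)$ of order $\|\bm\omega\|$. For any minimizing pair, the tangent inequality gives objective $\ge V_{h_\eta}(\bm\omega^*_\eta)\ge\min V_{h_\eta}$. Equality of the two minima then forces $\bm\omega_\eta^*$ to be a minimizer.

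For (ii), the key estimate is
\[
|S_{h_\eta}(x)-S_h(x)|\le a_\eta+b_\eta x,\qquad x>0,
\]
with $a_\eta,b_\eta\to0$. Then
\[
\sup_{\mathcal W}|V_{h_\eta}-V_h|\le a_\eta\sup_{\mathcal W} s_{\bm\omega}+b_\eta\epsilon\sup_{\mathcal W}\|\bm\omega\|_{r'}\to0.
\]
To obtain the estimate, write any feasible $F$ at radius $x$ as $F^{-1}=G_0^{-1}+(F^{-1}-G_0^{-1})$, so that
\[
|\rho_{h_\eta}(F)-\rho_h(F)|\le|\rho_{h_\eta}(G_0)-\rho_h(G_0)|+\bigl|(\rho_{h_\eta}-\rho_h)\text{ applied to }F^{-1}-G_0^{-1}\bigr|.
\]
The second term I intend to bound via \eqref{eq:quantile_bv} and \eqref{eq:reg_density}: the regularization averages $F^{-1}$ over windows of width $\delta_\eta$ instead of evaluating it pointwise. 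This step is the main obstacle, because pointwise quantile evaluations are not controlled by the $L^p$ norm. I would handle it using monotonicity of quantiles together with the endpoint weight condition \eqref{eq:endpoint_integral}. The bound $|F^{-1}(u)-G_0^{-1}(u)|\lesssim x(u\wedge(1-u))^{-1/p}+|G_0^{-1}(u)|$, obtained from monotonicity, gives domination. Dominated convergence in $u$, uniformly over the ball after rescaling by $x$, should give $b_\eta\to0$; here I may need to reuse the convergence argument of Theorem~\ref{thm:regularization}(i) in a scale-invariant form.

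Finally, uniform convergence of continuous functions on a compact set gives the conclusion. Passing to the limit in $V_{h_\eta}(\bm\omega_\eta^*)\le V_{h_\eta}(\bm\omega)$ requires lower semicontinuity of $V_h$, which holds because $V_h$ is a uniform limit of continuous functions. This shows that limits of minimizers are minimizers. If the argmin is a singleton, compactness and this subsequence argument give convergence of the whole family $\bm\omega_\eta^*$.
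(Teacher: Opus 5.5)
Part (i) of your proposal matches the paper's argument: concavity of $\delta\mapsto S_{h_\eta}(\delta)$ via quantile mixing, the right-derivative tangent inequality with equality at $\delta=t$, and continuity plus compactness for attainment. The gap is in part (ii).

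\textbf{The estimate is unproved, and your sketched route fails when $h$ has jumps.} You never establish $|S_{h_\eta}(x)-S_h(x)|\le a_\eta+b_\eta x$. You propose to control $(\rho_{h_\eta}-\rho_h)$ applied to $F^{-1}-G_0^{-1}$ uniformly over the ball, but that supremum does not vanish as $\eta\downarrow0$. Take $h=\widehat h_\alpha$ and $\eta<(1-\alpha)/2$. Then
\[
\rho_{h_\eta}(F)-\rho_h(F)=\frac1\eta\int_\alpha^{\alpha+\eta}F^{-1}(u)\,\mathrm{d}u-F^{-1+}(\alpha).
\]
The feasible choice $F^{-1}=G_0^{-1}+M\mathbf 1_{(\alpha+\eta/2,1)}$, with $M(1-\alpha-\eta/2)^{1/p}=x$, leaves $F^{-1+}(\alpha)=G_0^{-1+}(\alpha)$ unchanged. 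It makes the difference at least $M/2\ge x(1-\alpha)^{-1/p}/2$ for every small $\eta$. Window averages and point evaluations of a quantile can differ by an amount of the order of the radius, so no dominated-convergence argument that is uniform over the ball can give $b_\eta\to0$. This is why the paper's Lemma~\ref{lem:jump_approx} compares the two suprema through the modified quantiles $\bar P_n$, rather than comparing the functionals at each $F$. Your ``reuse the argument of Theorem~\ref{thm:regularization}(i) in a scale-invariant form'' leaves exactly this step open.

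\textbf{The missing idea is that no new regularization estimate is needed.}
\begin{itemize}
\item The standardized radius $t(\bm\omega)=\epsilon\|\bm\omega\|_{r'}/s_{\bm\omega}$ is homogeneous of degree zero. By $\bm\Sigma\succ0$ and compactness of the unit $\ell^{r'}$-sphere, it ranges over a compact interval $[\underline t,\overline t]\subset(0,\infty)$.
\item Theorem~\ref{thm:regularization}(i) already gives $S_{h_\eta}(\delta)\to S_h(\delta)$ for each $\delta$.
\item Both value functions are nondecreasing in $\delta$, and $S_h$ is concave, hence uniformly continuous on $[\underline t,\overline t]$. A partition $\underline t=t_0<\dots<t_m=\overline t$ then gives, for $\delta\in[t_{i-1},t_i]$,
\[
S_{h_\eta}(t_{i-1})-S_h(t_i)\le S_{h_\eta}(\delta)-S_h(\delta)\le S_{h_\eta}(t_i)-S_h(t_{i-1}).
\]
This upgrades pointwise convergence to uniform convergence on $[\underline t,\overline t]$.
\item Since $h_\eta(1)=h(1)$ and $\bm\omega=0$ contributes zero,
\[
\sup_{\mathcal W}|V_{h_\eta}-V_h|\le\max_{\mathcal W}s_{\bm\omega}\,\sup_{[\underline t,\overline t]}|S_{h_\eta}-S_h|\to0.
\]
\end{itemize}
With this uniform convergence in hand, your final paragraph on limits of minimizers goes through as written.
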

As a direct consequence of the uniform convergence in Proposition \ref{prop:outer_reg}, the regularized minima converge to the original problem,
$$\min_{\bm\omega\in\mathcal W}V_{h_\eta}(\bm\omega)
\to\min_{\bm\omega\in\mathcal W}V_h(\bm\omega), ~~\mbox{as}~ \eta\downarrow 0.$$
For $r=2$, $\rho_{h_\eta}(P_\varphi)\ge0$, and second-order-cone
representable $\mathcal W$, the portfolio minimization at each fixed
$\delta$ in \eqref{eq:reg_outer_solution} is the following SOCP:
\begin{equation}
\begin{aligned}
\underset{\bm\omega,t,s}{\operatorname{minimize}}\quad
&h(1)\bm\omega^\top\bm\mu
+\bigl[S_{h_\eta}(\delta)-\delta\,\partial_+S_{h_\eta}(\delta)\bigr]t
+\epsilon\,\partial_+S_{h_\eta}(\delta)s\\
\text{subject to}\quad
&\|\bm\Sigma^{1/2}\bm\omega\|_2\le t,\qquad
\|\bm\omega\|_2\le s,\qquad \bm\omega\in\mathcal W.
\end{aligned}
\label{eq:reg_socp}
\end{equation}

\paragraph{Explicit weights in the spherical case.}
When $r=2$, $\bm\Sigma=\sigma^2\mathbf I_d$ with $\sigma>0$, and $\mathcal W$ is the long-only simplex $\{\bm\omega\ge0:\mathbf 1^\top\bm\omega=1\}$, the
regularized outer problem by \eqref{eq:elliptical_inner} is
\begin{equation}
\min_{\bm\omega\in\mathcal W}V_{h_\eta}(\bm\omega)=\min_{\bm\omega\ge0,\ \mathbf 1^\top\bm\omega=1}
\left\{h(1)\bm\omega^\top\bm\mu
+\sigma S_{h_\eta}(\epsilon/\sigma)\|\bm\omega\|_2\right\}.
\label{eq:spherical_outer}
\end{equation}
If $S_{h_\eta}(\epsilon/\sigma)>0$, strict convexity and the first-order
optimality conditions give the unique optimizer
$\bm\omega_\eta^*=(\omega_{\eta,1}^*,\dots,\omega_{\eta,d}^*)$ with
\begin{equation*}
\omega_{\eta,i}^*
=\frac{(\ell_\eta- h(1)\mu_i)_+}
{\sum_{j=1}^d(\ell_\eta-h(1)\mu_j)_+},\qquad i=1,\ldots,d,
\end{equation*}
where $\ell_\eta>\min_i h(1)\mu_i$ is the unique solution of
\begin{equation*}
\sum_{i=1}^d(\ell-h(1)\mu_i)_+^2
=\sigma^2 S_{h_\eta}(\epsilon/\sigma)^2.
\end{equation*}
The minimum in \eqref{eq:spherical_outer} is $\min_{\bm\omega\in\mathcal W}V_{h_\eta}(\bm\omega)=\ell_\eta$.
If $S_{h_\eta}(\epsilon/\sigma)\le0$, then $\|\bm\omega\|_2\le1$ on
$\mathcal W$ gives
\[
\min_{\bm\omega\in\mathcal W}V_{h_\eta}(\bm\omega)
=\min_{1\le i\le d}h(1)\mu_i+\sigma S_{h_\eta}(\epsilon/\sigma).
\]
An optimal portfolio is any vertex corresponding to an asset minimizing
$h(1)\mu_i$. These are the only optimizers if
$S_{h_\eta}(\epsilon/\sigma)<0$. If $S_{h_\eta}(\epsilon/\sigma)=0$,
every mixture of minimizing assets is optimal; in particular, all
portfolios tie when also $h(1)=0$.

Under \eqref{eq:endpoint_integral}, applying
Proposition~\ref{prop:outer_reg}(ii) to these regularized optimizers yields
$$\min_{\bm\omega\in\mathcal W}V_h(\bm\omega)
=\lim_{\eta\downarrow0}V_{h_\eta}(\bm\omega_\eta^*).$$
Moreover, if $\bm\omega_\eta^*\to\bm\omega^*$ as $\eta\downarrow0$, then
$\bm\omega^*\in
\operatorname*{argmin}_{\bm\omega\in\mathcal W}V_h(\bm\omega)$.

\subsection{Closed-form outer approximation}
\label{subsec:outer_gap}

We next use the minimizer of the concave-envelope objective
to approximate the outer problem. The benchmark bound
\eqref{eq:gap_h_bound} gives the following guarantee.

\begin{proposition}
\label{prop:outer_gap}
Let $h\in\mathcal H$ with $h=\widehat h$ and $h^*\not\equiv0$, and let
$\mathcal W$ be compact. For $1\le p<\infty$, choose the envelope portfolio
$\bm\omega^{\mathrm{env}}\in
\operatorname*{argmin}_{\bm\omega\in\mathcal W}V_{h^*}(\bm\omega)$,
where $V_{h^*}$ is given by \eqref{eq:elliptical_value}. Then
\begin{equation}
\begin{aligned}
0\le V_h(\bm\omega^{\mathrm{env}})
-\inf_{\bm\omega\in\mathcal W}V_h(\bm\omega)&\le V_{h^*}(\bm\omega^{\mathrm{env}})-
\min_{\bm\omega\in\mathcal W}
\{V_{h^*}(\bm\omega)-\bigl[\rho_{h^*}(P_\varphi)-\rho_h(P_\varphi)\bigr]
s_{\bm\omega}\}\\
&\le\bigl(\rho_{h^*}(P_\varphi)-\rho_h(P_\varphi)\bigr)
\max_{\bm\omega\in\mathcal W}s_{\bm\omega}.
\end{aligned}
\label{eq:decision_bound}
\end{equation}
\end{proposition}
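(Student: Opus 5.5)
The plan is to derive everything from pointwise-in-$\bm\omega$ sandwich bounds on $V_h$, obtained by transferring \eqref{eq:gap_h_bound} to each portfolio through the elliptical scaling \eqref{eq:elliptical_inner}.

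\emph{Pointwise bounds.} Fix $\bm\omega\in\mathcal W$ with $s_{\bm\omega}>0$, and write $\delta_{\bm\omega}=\epsilon\|\bm\omega\|_{r'}/s_{\bm\omega}$. Applying \eqref{eq:gap_h_bound} with $G_0=P_\varphi$ at radius $\delta_{\bm\omega}$ gives
\[
S_{h^*}(\delta_{\bm\omega})-\bigl[\rho_{h^*}(P_\varphi)-\rho_h(P_\varphi)\bigr]\le S_h(\delta_{\bm\omega})\le S_{h^*}(\delta_{\bm\omega}).
\]
Multiply by $s_{\bm\omega}$ and add $h(1)\bm\omega^\top\bm\mu$. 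By \eqref{eq:elliptical_inner}, together with its analogue for $h^*$, which is \eqref{eq:elliptical_value}, this yields
\[
V_{h^*}(\bm\omega)-\bigl[\rho_{h^*}(P_\varphi)-\rho_h(P_\varphi)\bigr]s_{\bm\omega}\le V_h(\bm\omega)\le V_{h^*}(\bm\omega).
\]

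\emph{Degenerate case $s_{\bm\omega}=0$.} Here the projected benchmark is a point mass at $\bm\omega^\top\bm\mu$ with radius $\epsilon\|\bm\omega\|_{r'}$. Since $\rho_{h^*}$ and $\rho_h$ coincide on point masses, \eqref{eq:gap_h_bound} applied with this $G_0$ gives $V_h(\bm\omega)=V_{h^*}(\bm\omega)$. Proposition~\ref{prop:wc_hstar} with a point-mass benchmark recovers \eqref{eq:elliptical_value}, namely $h(1)\bm\omega^\top\bm\mu+\epsilon\|\psi_{h^*}\|_{p'}\|\bm\omega\|_{r'}$. So the sandwich also holds in this case, with the correction term equal to zero.

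\emph{Deriving \eqref{eq:decision_bound}.} The left inequality $0\le V_h(\bm\omega^{\mathrm{env}})-\inf V_h$ is trivial. For the middle inequality, apply the upper pointwise bound at $\bm\omega^{\mathrm{env}}$. Then take the infimum over $\mathcal W$ in the lower pointwise bound, which gives $\inf_{\mathcal W}V_h\ge\min_{\mathcal W}\{V_{h^*}(\bm\omega)-[\rho_{h^*}(P_\varphi)-\rho_h(P_\varphi)]s_{\bm\omega}\}$.

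This minimum is attained for two reasons. First, $V_{h^*}$ is continuous, since it is a sum of a linear term, $s_{\bm\omega}$, and a norm. Second, $s_{\bm\omega}$ is continuous and $\mathcal W$ is compact.

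For the final inequality, set $c=\rho_{h^*}(P_\varphi)-\rho_h(P_\varphi)$. We have $c\ge0$ because $h^*\ge h$, so $\rho_{h^*}\ge\rho_h$; this ordering is the $\epsilon\to0$ consequence of \eqref{eq:gap_h_bound}, or follows directly from the Choquet integral. Then
\[
\min_{\mathcal W}\{V_{h^*}-c\,s_{\bm\omega}\}\ge\min_{\mathcal W}V_{h^*}-c\max_{\mathcal W}s_{\bm\omega}=V_{h^*}(\bm\omega^{\mathrm{env}})-c\max_{\mathcal W}s_{\bm\omega},
\]
which gives the last line of \eqref{eq:decision_bound}.

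\emph{Main obstacle.} The main care point is justifying the location–scale step \eqref{eq:elliptical_inner} for $h^*$ and for $h$ simultaneously, including the case where the scaled radius varies with $\bm\omega$. Also, Proposition~\ref{prop:bench_gap} in the case $p=1$ without attainment is stated through a limit, so I will use \eqref{eq:gap_h_bound} directly, which holds in all cases. Apart from these points, the argument is bookkeeping.
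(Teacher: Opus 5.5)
Your proposal is correct and follows essentially the same route as the paper: a pointwise sandwich obtained from \eqref{eq:gap_h_bound} at the scaled radius via \eqref{eq:elliptical_inner}, a separate treatment of the point-mass case $s_{\bm\omega}=0$, and then taking infima and using the optimality of $\bm\omega^{\mathrm{env}}$ for $V_{h^*}$. You also state explicitly that the last inequality needs $\rho_{h^*}(P_\varphi)-\rho_h(P_\varphi)\ge0$, which follows from $h^*\ge h$; the paper uses this without saying so.
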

Equation~\eqref{eq:gap_h_bound} gives the pointwise bound
\begin{equation}
0\le V_{h^*}(\bm\omega)-V_h(\bm\omega)
\le\bigl(\rho_{h^*}(P_\varphi)-\rho_h(P_\varphi)\bigr)s_{\bm\omega}.
\label{eq:elliptical_gap}
\end{equation}
Consequently, the first upper bound in \eqref{eq:decision_bound} holds
for every feasible $\bm\omega^{\mathrm{env}}\in\mathcal W$, and the final inequality
uses its optimality for $V_{h^*}$.

For coherent distortion risk measures, the worst-case formula is
given by \citet[Proposition~4]{LMWW22}, and the portfolio reformulation
\eqref{eq:elliptical_value} appears in
\citet{MWW25}.
In our setting, $h^*$ need not be monotone or normalized.
Proposition~\ref{prop:wc_hstar} handles the signed density $\psi_{h^*}$,
including $p=1$. With the location term $h(1)\bm\omega^\top\bm\mu$, the same
convex and conic formulations follow because $\rho_{h^*}$ remains convex
and symmetry gives $\rho_{h^*}(P_\varphi)\ge0$.
These remaining modifications are minor, and their routine proofs are omitted.

For $r=2$ and second-order-cone representable $\mathcal W$, the envelope
portfolio $\bm\omega^{\mathrm{env}}$ is obtained from the following SOCP:
\begin{equation}
(\bm\omega^{\mathrm{env}},t^*,s^*)\in
\operatorname*{argmin}_{\substack{
\bm\omega\in\mathcal W,\ t,s\in\R\\
\|\bm\Sigma^{1/2}\bm\omega\|_2\le t,\ \|\bm\omega\|_2\le s}}
\left\{h(1)\bm\omega^\top\bm\mu+\rho_{h^*}(P_\varphi)t
+\epsilon\|\psi_{h^*}\|_{p'}s\right\}.
\label{eq:elliptical_socp}
\end{equation}
Conversely, every minimizer $\bm\omega^{\mathrm{env}}$ of
\eqref{eq:elliptical_value} extends to an optimal triple by taking
$t^*=s_{\bm\omega^{\mathrm{env}}}$ and $s^*=\|\bm\omega^{\mathrm{env}}\|_2$.
For $r=2$, $\bm\Sigma=\sigma^2\mathbf I_d$ with $\sigma>0$, and $\mathcal W$ equal
to the long-only simplex, the spherical formulas in
Section~\ref{subsec:outer_reg} give the unique envelope optimizer
\[
\begin{aligned}
\omega_i^{\mathrm{env}}
=\frac{(\ell-h(1)\mu_i)_+}{\sum_{j=1}^d(\ell-h(1)\mu_j)_+},
\qquad i=1,\ldots,d,\qquad \sum_{i=1}^d(\ell-h(1)\mu_i)_+^2
=\bigl[\sigma\rho_{h^*}(P_\varphi)
+\epsilon\|\psi_{h^*}\|_{p'}\bigr]^2.
\end{aligned}
\]
Here $\ell>\min_i h(1)\mu_i$ is uniquely determined by the second equation,
and $\min_{\bm\omega\in\mathcal W}V_{h^*}(\bm\omega)=\ell$.

Finally, we use the candidate in Theorem~\ref{thm:radius_bound} to approximate
the portfolio objective. With $G_0=P_\varphi$, let $P_\delta$ be given by
\eqref{eq:radial_candidate} at radius $\delta>0$, and define
\begin{equation}
\widehat V_h(\bm\omega)
=h(1)\bm\omega^\top\bm\mu
+s_{\bm\omega}\rho_h\!\left(P_{\epsilon\|\bm\omega\|_{r'}/s_{\bm\omega}}\right),
\qquad s_{\bm\omega}>0.
\label{eq:radial_outer}
\end{equation}
Equation~\eqref{eq:radial_value} gives an explicit formula for this
objective.

\begin{proposition}
\label{prop:outer_radial}
Suppose that $h=\widehat h$, $h^*\not\equiv0$, $\bm\Sigma\succ0$, and
$\mathcal W$ is compact with $0\notin\mathcal W$.
Assume that the envelope optimizer in \eqref{eq:bench_shift} exists
for $G_0=P_\varphi$. For $1\le p<\infty$, choose the approximate portfolio
$\bm\omega^{\mathrm{app}}\in\operatorname*{argmin}_{\bm\omega\in\mathcal W}
\widehat V_h(\bm\omega)$.
\begin{enumerate}[(i)]
\item We have
\begin{equation}
0\le V_h(\bm\omega^{\mathrm{app}})-\inf_{\bm\omega\in\mathcal W}V_h(\bm\omega)
\le V_{h^*}(\bm\omega^{\mathrm{app}})-\widehat V_h(\bm\omega^{\mathrm{app}}).
\label{eq:radial_decision}
\end{equation}
For $p>1$, the upper bound is
$O\!\left(\epsilon^{-\min\{p-1,1\}}\right)$ as
$\epsilon\to\infty$.
Moreover,
$V_h(\bm\omega^{\mathrm{app}})-\inf_{\bm\omega\in\mathcal W}V_h(\bm\omega)
\to0$ as $\epsilon\downarrow0$.
\item If $h$ is absolutely continuous and $\psi_h\in L^{p'}(0,1)$, then
\begin{equation}
\begin{aligned}
V_h(\bm\omega^{\mathrm{app}})-\inf_{\bm\omega\in\mathcal W}V_h(\bm\omega)
&\le h(1)(\bm\omega^{\mathrm{app}})^\top\bm\mu
+\rho_h(P_\varphi)s_{\bm\omega^{\mathrm{app}}}
+\epsilon\|\psi_h\|_{p'}\|\bm\omega^{\mathrm{app}}\|_{r'}
-\widehat V_h(\bm\omega^{\mathrm{app}})\\
&\le\epsilon\|\psi_h\|_{p'}\|\bm\omega^{\mathrm{app}}\|_{r'}
=O(\epsilon),\qquad \epsilon\downarrow0.
\end{aligned}
\label{eq:radial_decision_ac}
\end{equation}
\end{enumerate}
\end{proposition}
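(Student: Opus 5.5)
The plan rests on two pointwise sandwich inequalities for the portfolio objective, followed by a transfer of the one-dimensional bounds of Theorem~\ref{thm:radius_bound} to the outer problem.

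\textbf{Pointwise sandwich.} Fix $\bm\omega\in\mathcal W$. Since $\bm\Sigma\succ0$ and $0\notin\mathcal W$, we have $s_{\bm\omega}>0$, so \eqref{eq:elliptical_inner} applies. Write $\delta_{\bm\omega}=\epsilon\|\bm\omega\|_{r'}/s_{\bm\omega}$.
\begin{itemize}
\item[(a)] Upper bound. Since $\rho_h\le\rho_{h^*}$ pointwise, $S_h\le S_{h^*}$. Hence $V_h(\bm\omega)\le V_{h^*}(\bm\omega)$, where $V_{h^*}$ is given by \eqref{eq:elliptical_value} via Proposition~\ref{prop:wc_hstar}.
\item[(b)] Lower bound. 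By Theorem~\ref{thm:radius_bound}(i), $W_p(P_\delta,P_\varphi)=\delta$, so $P_\delta$ is feasible. Hence $\rho_h(P_\delta)\le S_h(\delta)$, which gives $\widehat V_h(\bm\omega)\le V_h(\bm\omega)$.
\end{itemize}
Consequently $\inf_{\mathcal W}V_h\ge\inf_{\mathcal W}\widehat V_h=\widehat V_h(\bm\omega^{\mathrm{app}})$. Combining this with (a) at $\bm\omega^{\mathrm{app}}$ gives \eqref{eq:radial_decision}. Nonnegativity of the left side is trivial.

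\textbf{Rate as $\epsilon\to\infty$.} The upper bound in \eqref{eq:radial_decision} equals
\[
s_{\bm\omega}\bigl[\rho_{h^*}(P_\varphi)+\delta_{\bm\omega}\|\psi_{h^*}\|_{p'}-\rho_h(P_{\delta_{\bm\omega}})\bigr]
\quad\text{at }\bm\omega=\bm\omega^{\mathrm{app}},
\]
which is $s_{\bm\omega}$ times the bound in \eqref{eq:radial_gap} at radius $\delta_{\bm\omega}$.

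Compactness of $\mathcal W$, $0\notin\mathcal W$ and $\bm\Sigma\succ0$ give constants $0<c\le C$ such that $s_{\bm\omega}\le C$ and $c\,\epsilon\le\delta_{\bm\omega}\le C\epsilon$ uniformly on $\mathcal W$. Formula \eqref{eq:radial_value} shows that this bound is an explicit function $B(\delta)$ of the radius alone, and it is $O(\delta^{-\min\{p-1,1\}})$ by Theorem~\ref{thm:radius_bound}(i). The uniform comparison $\delta_{\bm\omega}\asymp\epsilon$ then transfers the rate to $\epsilon$. The step needing care is that the $O(\cdot)$ constant does not depend on $\bm\omega$. This holds because $B$ depends on $\bm\omega$ only through $\delta_{\bm\omega}$.

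\textbf{Limit as $\epsilon\downarrow0$.} Here the bound in \eqref{eq:radial_decision} need not vanish, so I would argue directly. Using (b) and $\widehat V_h(\bm\omega^{\mathrm{app}})\le\inf V_h$,
\[
V_h(\bm\omega^{\mathrm{app}})-\inf V_h
\le V_h(\bm\omega^{\mathrm{app}})-\widehat V_h(\bm\omega^{\mathrm{app}})
\le C\sup_{0<\delta\le C\epsilon}\bigl[S_h(\delta)-\rho_h(P_\delta)\bigr].
\]
The right side tends to $0$ because Theorem~\ref{thm:radius_bound}(i) gives $S_h(\delta)-\rho_h(P_\delta)\to0$ as $\delta\downarrow0$. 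This is the main (mild) obstacle: it is exactly why the direct comparison of $V_h$ with $\widehat V_h$, rather than with $V_{h^*}$, is used here.

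\textbf{Part (ii).} Hölder's inequality in the quantile representation \eqref{eq:quantile_density} gives $S_h(\delta)\le\rho_h(P_\varphi)+\delta\|\psi_h\|_{p'}$. Inserting this in place of (a) yields the first inequality of \eqref{eq:radial_decision_ac} by the same chain as above.

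For the second inequality, note that in \eqref{eq:radial_value} the numerator is nonnegative, since $\rho_{h^*}\ge\rho_h$ and $\epsilon\|\psi_{h^*}\|_{p'}\ge0$. Hence $\rho_h(P_\delta)\ge\rho_h(P_\varphi)$, so
\[
\widehat V_h(\bm\omega^{\mathrm{app}})\ge h(1)(\bm\omega^{\mathrm{app}})^\top\bm\mu+\rho_h(P_\varphi)s_{\bm\omega^{\mathrm{app}}}.
\]
This leaves $\epsilon\|\psi_h\|_{p'}\|\bm\omega^{\mathrm{app}}\|_{r'}$, which is $O(\epsilon)$ because $\mathcal W$ is bounded.
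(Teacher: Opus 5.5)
Your proposal is correct and follows the paper's proof essentially step for step: the sandwich $\widehat V_h\le V_h\le V_{h^*}$ from feasibility of $P_\delta$, the uniform comparison $\delta_{\bm\omega}\asymp\epsilon$ (from compactness, $0\notin\mathcal W$ and $\bm\Sigma\succ0$) to transfer Theorem~\ref{thm:radius_bound}(i), and $\rho_h(P_\delta)\ge\rho_h(P_\varphi)$ from \eqref{eq:radial_value} for part (ii). The small-radius limit differs only slightly. The paper bounds the gap by $S_h(\delta)-\rho_h(P_\varphi)$ and uses monotonicity of $S_h$, whereas you take the supremum of the vanishing gap $S_h(\delta)-\rho_h(P_\delta)$ over $0<\delta\le C\epsilon$ directly. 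Also, the paper checks that the argmin of $\widehat V_h$ is nonempty via continuity of $\delta\mapsto\rho_h(P_\delta)$, which you take as given.
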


The explicit objective \eqref{eq:radial_outer} need not be convex.
In the spherical case of Section~\ref{subsec:outer_reg}, it becomes
\[
\widehat V_h(\bm\omega)
=h(1)\bm\omega^\top\bm\mu
+\sigma\rho_h(P_{\epsilon/\sigma})\|\bm\omega\|_2.
\]
The weight formulas there therefore apply with
$\rho_h(P_{\epsilon/\sigma})$ in place of
$S_{h_\eta}(\epsilon/\sigma)$.

\section{Numerical experiments}
\label{sec:numerics}

\renewcommand{\topfraction}{0.90}
\renewcommand{\bottomfraction}{0.80}
\renewcommand{\textfraction}{0.10}
\renewcommand{\floatpagefraction}{0.80}
\setlength{\textfloatsep}{12pt plus 2pt minus 2pt}
\setlength{\intextsep}{12pt plus 2pt minus 2pt}
\captionsetup[figure]{font={small,stretch=1.1},skip=6pt}

We first examine cases when the explicit worst-case approximations meet
a prescribed error tolerance. The VaR and RVaR experiments use
Proposition~\ref{prop:bench_gap} and Example~\ref{ex:vares_gap}.
The \cite{TK92} (TK) experiment uses both parts of Theorem~\ref{thm:radius_bound}.
We then compare portfolio
decisions for the model in Section~\ref{sec:portfolio}, as in the
application of \citet{CHLW23}. Their use of aggregated moment and
support information gives an upper approximation. Our projection
identity \eqref{eq:linear_projection} is exact, so the TK portfolio
experiment focuses on the decision losses from approximating the
objective. We then use
the inter-quantile range to examine convergence under regularization,
as in Theorem~\ref{thm:regularization}(i) and
Proposition~\ref{prop:outer_reg}(ii).
The code for these experiments is available on GitHub.\footnote{\url{https://github.com/Murphy116/Robust-distortion-riskmetrics-under-Wasserstein-ambiguity}}

\subsection{Explicit approximations with error tolerance}
\label{subsec:var_numerics}

We study the error from using the explicit distribution $F^*$ in
Example~\ref{ex:vares_gap}, with $p=2$. Here and in
Subsection~\ref{subsec:tk_numerics}, reference losses are standardized
to have mean zero and variance one. Gaps, bounds, and Wasserstein
radii are measured in units of the original reference loss standard
deviation (SD).
Before choosing the probability levels, we set a common tolerance
of $0.05$, or five percent of that standard deviation, for illustration.
The radii describe the size of the Wasserstein ball and are not
calibrated from data.

For VaR, $\VaRp$, and RVaR, we first check the bounds in
Proposition~\ref{prop:bench_gap} and Example~\ref{ex:vares_gap}
against this tolerance. For $\VaRp$ and RVaR, the bound depends only
on the reference distribution. For VaR, we also include the quantile
jump and radius term shown below. In the selected examples, the bounds
guarantee errors below $0.05$ for $\VaRp$ and RVaR at every radius,
and for VaR when $0\le\epsilon\le0.001$. Thus the explicit
approximation meets the required accuracy in these cases without
computing $S_h$. We then calculate the actual errors to show how they
change with $\epsilon$ and how much smaller they can be than the
bounds. If a bound exceeds $0.05$, the bound alone does not establish
whether the approximation meets the tolerance.

\subsubsection{VaR and \texorpdfstring{$\bm{\VaRp}$}{VaR+}}

We compare the two VaR conventions at $\alpha=0.99$. Before
standardization, the first reference distribution is $\U[0,1]$ and
the second is
\begin{equation}\label{eq:num_jump}
0.99\,\U[0,0.99]+0.01\,\U[0.996,1.006].
\end{equation}
Both have continuous cdfs. The second has a gap in its support, so its
quantile jumps at $\alpha=0.99$. After standardization, the jump
$G_0^{-1+}(\alpha)-G_0^{-1}(\alpha)$ is approximately $0.02078$.
The reference bounds $\ES_\alpha(G_0)-\VaRp_\alpha(G_0)$ are
$0.01732$ and $0.01731$, respectively, both below the chosen tolerance.

We use the distribution $F^*$ from Example~\ref{ex:vares_gap} and
calculate $S_{\VaRp_\alpha}$ from \eqref{eq:wc_var}. For $p=2$ and
$\epsilon>0$, Proposition~\ref{prop:bench_gap} and
Example~\ref{ex:vares_gap} give
\[
\begin{aligned}
0\le\Delta_{\VaRp_\alpha}(\epsilon)
&=S_{\VaRp_\alpha}-\VaRp_\alpha(F^*)
\le\ES_\alpha(G_0)-\VaRp_\alpha(G_0),\\
0\le\Delta_{\VaR_\alpha}(\epsilon)
&=S_{\VaR_\alpha}-\VaR_\alpha(F^*)
\le\ES_\alpha(G_0)-\VaR_\alpha(G_0)
+\frac{\epsilon}{\sqrt{1-\alpha}}.
\end{aligned}
\]
The first bound follows directly from Proposition~\ref{prop:bench_gap}.
The second uses the relation between the two gaps in
Example~\ref{ex:vares_gap}.

\begin{figure}[!htbp]
\centering
\includegraphics[width=\textwidth]{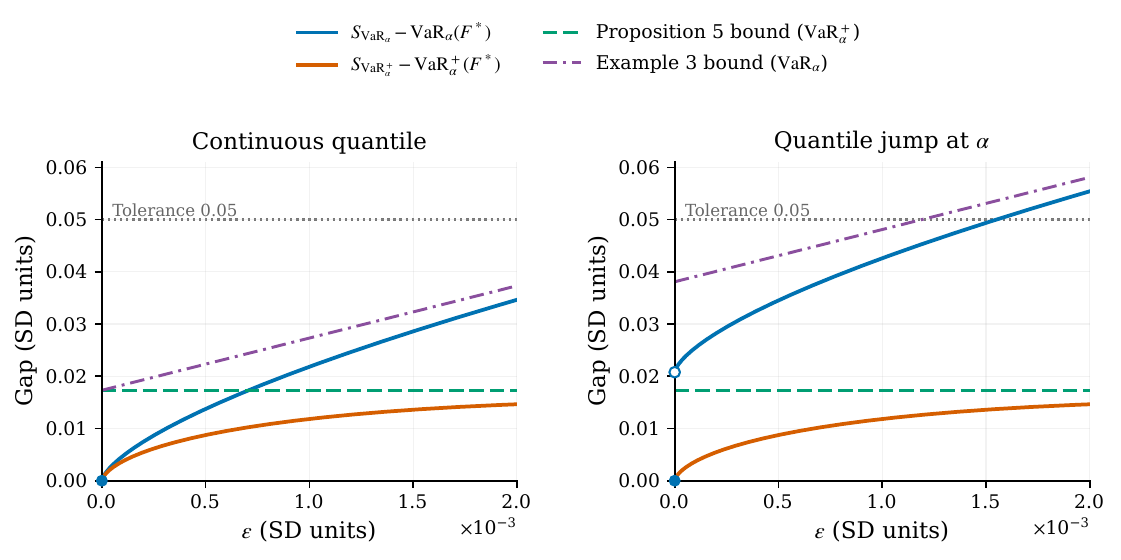}
\caption{$\VaR_\alpha$ and $\VaRp_\alpha$ approximation gaps with respect to radius $\epsilon$,
with $p=2$ and $\alpha=0.99$. The reference distributions are chosen to be the standardized uniform (left panel) or
as in \eqref{eq:num_jump} with a quantile jump (right panel).
Gaps, bounds, and Wasserstein radii are expressed in units of the original
reference loss standard deviation (SD).}
\label{fig:var_comparison}
\end{figure}

Figure~\ref{fig:var_comparison} shows the small-radius limits in
Example~\ref{ex:vares_gap}. For the continuous quantile, both
$\Delta_{\VaR_\alpha}(\epsilon)$ and $\Delta_{\VaRp_\alpha}(\epsilon)$ tend to zero.
For the reference with a quantile jump, $\Delta_{\VaRp_\alpha}(\epsilon)$
still tends to zero, whereas $\Delta_{\VaR_\alpha}(\epsilon)$ tends to
$0.02078$. At $\epsilon=0$, both errors are zero. The open point
marks the positive-radius limit.
At $\epsilon=0.001$, $\Delta_{\VaR_\alpha}(\epsilon)$ and
$\Delta_{\VaRp_\alpha}(\epsilon)$ are approximately $0.02182$ and $0.01182$
in the continuous case, and $0.04260$ and $0.01182$ in the jump case.
All four errors are below $0.05$, so the explicit distribution $F^*$
gives the required accuracy for both VaR conventions, even in the
presence of this quantile jump. The bounds guarantee this accuracy
throughout $0\le\epsilon\le0.001$. For $\VaRp$, the guarantee holds
at every radius. For VaR, it holds throughout the plotted range in
the continuous case and up to $\epsilon=0.00119$ in the jump case.
At $\epsilon=0.002$, the latter error is $0.05541$, which shows why
the radius must also be checked for VaR.

\subsubsection{RVaR}

We use standardized uniform and triangular references, with
$\alpha=0.95$ and $\beta=0.975$ for both. Before standardization,
the distributions are $\U[0,1]$ and a triangular distribution with
density $2y$ on $0\le y\le1$. Both reference losses have mean zero
and variance one after standardization. These choices give
\[
\ES_\alpha(G_0)-\RVaR_{\alpha,\beta}(G_0)
\approx 0.04330\quad\hbox{and}\quad 0.02686,
\]
so both bounds are below the tolerance. By
Proposition~\ref{prop:bench_gap} and \eqref{eq:rvar_gap},
\[
\begin{aligned}
0\le S_{\RVaR_{\alpha,\beta}}-\RVaR_{\alpha,\beta}(F^*)
=S_{\RVaR_{\alpha,\beta}}-\RVaR_{\alpha,\beta}(G_0)
-\frac{\epsilon}{\sqrt{1-\alpha}}
\le\ES_\alpha(G_0)-\RVaR_{\alpha,\beta}(G_0).
\end{aligned}
\]
We compute the worst-case value using the isotonic characterization
in Example~\ref{ex:rvar} and choose the multiplier to satisfy the
Wasserstein constraint. The probability levels remain fixed as we
vary the radius.

\begin{figure}[!htbp]
\centering
\includegraphics[width=\textwidth]{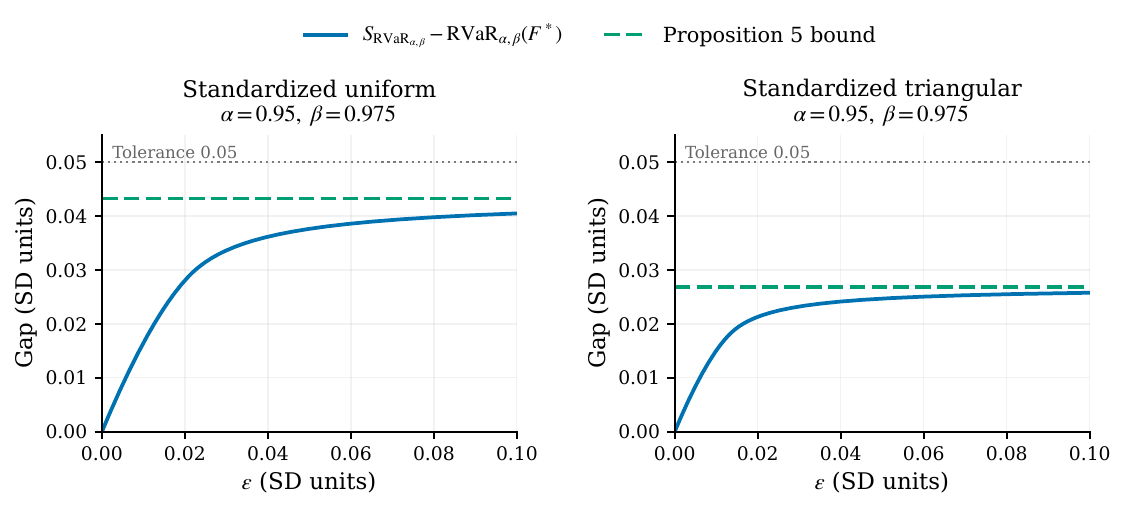}
\caption{$\RVaR_{\alpha,\beta}$ approximation gap with respect to radius $\epsilon$,
with $p=2$, $\alpha=0.95$, and $\beta=0.975$. The reference distributions
are standardized uniform (left panel) and standardized triangular (right panel).
Gaps, bounds, and Wasserstein radii are expressed in units of the original
reference loss standard deviation (SD).}
\label{fig:rvar_gaps}
\end{figure}

In Figure~\ref{fig:rvar_gaps}, both bounds are below $0.05$, so the
explicit distribution $F^*$ meets the tolerance without computing
the worst-case value. The actual gaps tend to zero as $\epsilon$
decreases to zero. At $\epsilon=0.01$, they are approximately
$0.01647$ and $0.01503$ for the uniform and triangular references.
At $\epsilon=0.10$, they are $0.04050$ and $0.02578$, close to their
respective bounds. The curves increase more slowly as the radius
grows over the plotted range, while the bounds stay constant.

\subsubsection{TK distortion}
\label{subsec:tk_numerics}

Next, we study the inverse-S-shaped distortion function considered by \cite{TK92}:
\[
h_\gamma(t)=\frac{t^\gamma}
{\bigl(t^\gamma+(1-t)^\gamma\bigr)^{1/\gamma}},
\qquad 0\le t\le1,\qquad\gamma\in(0,1).
\]
We call the distortion function above a \emph{TK distortion function} for simplicity.
We use $P_\epsilon$ from \eqref{eq:radial_candidate} to approximate the worst-case value $S_h(\epsilon)$. We keep the error tolerance at $0.05$ and set $p=2$.
Choosing $\gamma=0.7$, the TK distortion is absolutely continuous and
$\psi_h\in L^2(0,1)$. We therefore use Proposition~\ref{prop:isotonic}
to compute the numerical worst-case value $S_h(\epsilon)$ for
comparison with the explicit value $\rho_h(P_\epsilon)$.

We use a standard normal reference and a Pareto reference with
minimum $1$ and shape parameter $10$. Both reference losses are
standardized to have mean zero and variance one.
We divide the probability interval $[0,1]$ into $N=4{,}000$ cells,
with initial endpoints
\[
t_i=\frac{\sin^4(\pi i/(2N))}
{\sin^4(\pi i/(2N))+\cos^4(\pi i/(2N))},
\qquad i=0,\ldots,N.
\]
The cells are narrower near $0$ and $1$, where the reference quantile
and $\psi_h$ vary rapidly. Each cell uses the average reference
quantile and the average of $\psi_h$, as in \eqref{eq:grid_problem}.
To calculate the candidate $P_\epsilon$, we move the nearest interior
endpoint to $0.869724$ (approximately), where $\psi_{h^*}$ changes
from constant to increasing. We use this adjusted grid for both
the worst-case value and the candidate.
For each $\epsilon$, we solve the weighted isotonic regression in
Subsection~\ref{subsec:finite_grid} and adjust the multiplier until
the Wasserstein distance equals $\epsilon$. Subtracting the
candidate value gives the plotted error.
Figure~\ref{fig:tk_bound} shows $0\le\epsilon\le2.3$ on linear axes,
with the normal reference on the left and the Pareto reference on the right.

Both bounds in Theorem~\ref{thm:radius_bound} apply to this distortion.
Theorem~\ref{thm:radius_bound}(i) and
Theorem~\ref{thm:radius_bound}(ii) give, respectively,
\[
\begin{aligned}
0\le S_h(\epsilon)-\rho_h(P_\epsilon)
&\le\rho_{h^*}(G_0)+\epsilon\|\psi_{h^*}\|_2-\rho_h(P_\epsilon),\\
0\le S_h(\epsilon)-\rho_h(P_\epsilon)
&\le\rho_h(G_0)+\epsilon\|\psi_h\|_2-\rho_h(P_\epsilon).
\end{aligned}
\]
We calculate the bounds by one-dimensional integration using
\eqref{eq:radial_value}. If either bound is below $0.05$, we know
that the error from using $\rho_h(P_\epsilon)$ is below $0.05$
without calculating $S_h(\epsilon)$.

\begin{figure}[!htbp]
\centering
\includegraphics[width=\textwidth]{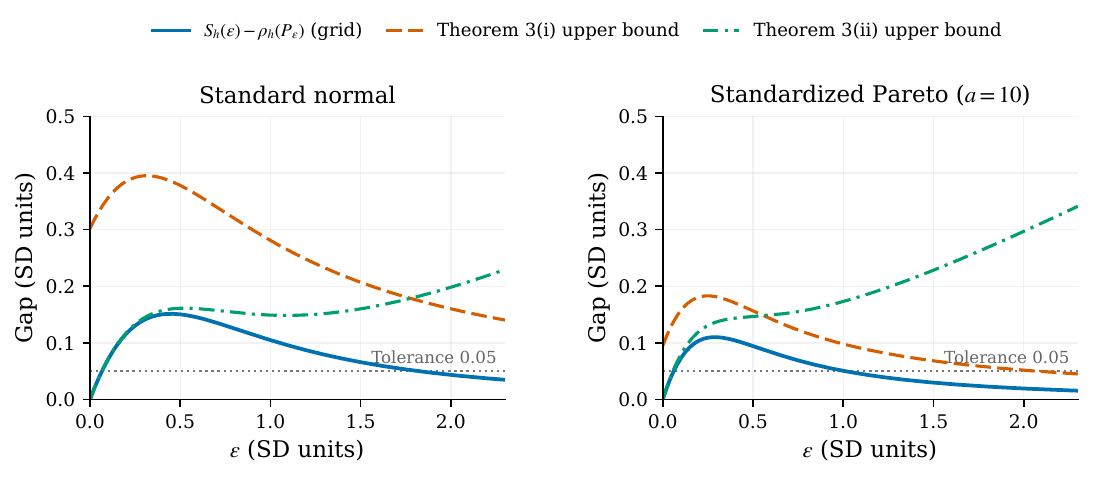}
\caption{TK approximation error with respect to radius $\epsilon$, with $p=2$ and $\gamma=0.7$.
The reference distributions are chosen to be standard normal (left panel) or standardized Pareto with shape $10$ (right panel).
Gaps, bounds, and Wasserstein radii are expressed in units of the original
reference loss standard deviation (SD).}
\label{fig:tk_bound}
\end{figure}

\needspace{3\baselineskip}
For small $\epsilon$, Theorem~\ref{thm:radius_bound}(ii) gives the
smaller upper bound. This bound tends to zero as $\epsilon$ decreases
to zero. For large $\epsilon$, Theorem~\ref{thm:radius_bound}(i)
gives the smaller upper bound. Its bound decreases towards zero,
while the bound in Theorem~\ref{thm:radius_bound}(ii) increases.
The two bounds are equal at about $\epsilon=1.776$ for normal and
$0.558$ for Pareto. For each reference, we use
Theorem~\ref{thm:radius_bound}(ii) below this value and
Theorem~\ref{thm:radius_bound}(i) above it to assess the same
approximation $\rho_h(P_\epsilon)$.

At $\epsilon=0.01$, the bounds in Theorem~\ref{thm:radius_bound}(ii)
are $0.00832$ for normal and $0.01001$ for Pareto, both below $0.05$.
We can therefore use $\rho_h(P_\epsilon)$ at this radius without
calculating $S_h(\epsilon)$.
At $\epsilon=2.3$, the Pareto error is $0.01563$ and the bound in
Theorem~\ref{thm:radius_bound}(i) is $0.04521$. This bound also
guarantees the required accuracy. The normal error is $0.03485$,
but both bounds exceed $0.05$. In this case, the bounds alone do
not confirm that the approximation meets the tolerance.

Over the plotted range, at least one bound is below $0.05$ when
$\epsilon\le0.066$ for normal, and when $\epsilon\le0.056$ or
$\epsilon\ge2.076$ for Pareto. At these radii, the bounds guarantee
the required accuracy without solving the original worst-case problem.
The largest errors on our grid are about $0.151$ for normal and
$0.110$ for Pareto, so the approximation does not meet the tolerance
at every radius. When both bounds exceed $0.05$, we need a direct
calculation to check the actual error. The choice of $\epsilon$
should reflect model uncertainty.

\subsection{Robust portfolio selection}
\label{subsec:portfolio_test}

We use the portfolio model in Section~\ref{sec:portfolio} for two
experiments. The TK experiment compares the portfolios selected by
the original and approximate objectives. The IQR experiment examines
whether regularization recovers the minimum for a discontinuous
distortion.

We consider three asset losses, set $p=r=2$, and use nonnegative
portfolio weights that sum to one. The Gaussian benchmark is $P_0=N_3(\bm\mu,\bm\Sigma_\varrho)$, where
\[
\bm\mu=(0.080,0.075,0.070)^\top,
\qquad (\sigma_1,\sigma_2,\sigma_3)=(0.12,0.20,0.32).
\]
Here $\bm\mu$ gives the mean losses, $\sigma_i$ is the standard deviation
of asset $i$'s loss, and $\varrho$ is the common correlation between
each pair of asset losses. The covariance matrix has entries
$(\bm\Sigma_\varrho)_{ii}=\sigma_i^2$ and
$(\bm\Sigma_\varrho)_{ij}=\varrho\sigma_i\sigma_j$ for $i\ne j$.
Both experiments use all weight vectors whose entries are nonnegative
multiples of $0.005$ and sum to one. This gives $20{,}301$ candidate
portfolios.

\subsubsection{TK distortion}
\label{subsec:portfolio_tk}

When we choose a portfolio using an approximation that is easy to
compute, how much do we lose compared with solving the original
problem directly? Is this loss acceptable in exchange for simpler
computation? We measure this loss by evaluating all portfolios under
the original objective.

We take $\varrho=0.3$ and
$\epsilon=c\sqrt{\operatorname{tr}(\bm\Sigma_\varrho)}$, with $c$ ranging
from $0$ to $0.20$ in steps of $0.005$.
Increasing $c$ enlarges the Wasserstein ball.
The parameter $c$ is a sensitivity parameter, not a statistical
confidence radius. We use $h(t)=1-h_\gamma(1-t)$ with $\gamma=0.7$,
the upper-tail reflection of the TK distortion in
Subsection~\ref{subsec:tk_numerics}.

We compare minimizers of the original objective $V_h$, the explicit
approximation $\widehat V_h$ in \eqref{eq:radial_outer}, and the
envelope objective $V_{h^*}$. The portfolio minimizing $V_h$ on our
weight grid provides the reference for comparison. The other two
methods use $P_\epsilon$ to approximate the inner worst-case value,
or replace $h$ by its concave envelope. They give
$\bm\omega^{\mathrm{app}}$ and $\bm\omega^{\mathrm{env}}$, respectively, as in
Section~\ref{subsec:outer_gap}.

To calculate $V_h$, we first compute the worst-case value
using Proposition~\ref{prop:isotonic} and the grid method in
Subsection~\ref{subsec:finite_grid}. Here $\psi_h(u)=h_\gamma'(u)$
is smooth on $(0,1)$. We use $N=4{,}000$ quantile cells with the
initial endpoints defined in Subsection~\ref{subsec:tk_numerics}.
Each cell uses the average reference quantile and the average of
$\psi_h$, as in \eqref{eq:grid_problem}.
We solve the weighted isotonic problem and choose the multiplier
to meet the radius constraint.
We tabulate $S_h$ at $481$ equally spaced standardized radii from
$0$ to $1.2$, with step $0.0025$. This range covers all portfolios
considered here. During the weight search, we use linear interpolation
in \eqref{eq:elliptical_inner} to evaluate $V_h$.
We then calculate $V_{h^*}$ and $\widehat V_h$ from
\eqref{eq:elliptical_value} and \eqref{eq:radial_outer}, respectively.
Both formulas use the envelope weight $\psi_{h^*}$, which is constant
up to the solution of $h_\gamma'(u)=h_\gamma(u)/u$, approximately
$0.767218$, and equals $\psi_h$ thereafter. We move the nearest
interior grid endpoint to this value, so the constant and increasing
parts lie in separate cells. All three objectives use this adjusted
partition.

Table~\ref{tab:portfolio} reports the results for $\varrho=0.3$ at
five values of $c$. For each $c$, we use the minimum of $V_h$ found
on the weight grid as the benchmark. We then evaluate the portfolios
selected by minimizing $\widehat V_h$ and $V_{h^*}$ using the same
original objective $V_h$. The closer their $V_h$ values are to the
benchmark, the better these portfolios perform.

\begin{table}[!htbp]
\centering
\caption{Selected portfolio weights and objective values under
the TK distortion for $\varrho=0.3$ on the discrete weight grid.
The selected weights $\bm\omega=(\omega_1,\omega_2,\omega_3)$
corresponding to $V_h$, $\widehat V_h$, and $V_{h^*}$ are denoted
by $\bm\omega^*$, $\bm\omega^{\mathrm{app}}$, and
$\bm\omega^{\mathrm{env}}$, respectively, following the notation
in Propositions~\ref{prop:outer_radial} and~\ref{prop:outer_gap}
for the latter two. The last column reports numerically evaluated
values of the original objective $V_h(\bm\omega)$ at these
selected weights.}
\label{tab:portfolio}
\begingroup
\small
\setstretch{1.05}
\renewcommand{\arraystretch}{1.12}
\setlength{\tabcolsep}{4.5pt}
\begin{tabular*}{0.84\linewidth}{@{\extracolsep{\fill}}ccrrrc@{}}
\toprule
\multirow{2}{*}{$c$} & \multirow{2}{*}{\shortstack{Objective used\\to select weights}} & \multicolumn{3}{c}{Selected weights} & \multirow{2}{*}{\shortstack{Original objective value\\$V_h(\bm\omega)$}} \\
\cmidrule(lr){3-5}
 & & $\omega_1$ & $\omega_2$ & $\omega_3$ & \\
\midrule
\multirow{3}{*}{0.00} & $V_h$ & 0.655 & 0.260 & 0.085 & 0.093527 \\
 & $\widehat V_h$ & 0.655 & 0.260 & 0.085 & 0.093527 \\
 & $V_{h^*}$ & 0.750 & 0.215 & 0.035 & 0.093739 \\
\midrule
\multirow{3}{*}{0.05} & $V_h$ & 0.455 & 0.345 & 0.200 & 0.109205 \\
 & $\widehat V_h$ & 0.490 & 0.340 & 0.170 & 0.109284 \\
 & $V_{h^*}$ & 0.605 & 0.295 & 0.100 & 0.110291 \\
\midrule
\multirow{3}{*}{0.10} & $V_h$ & 0.405 & 0.350 & 0.245 & 0.123795 \\
 & $\widehat V_h$ & 0.385 & 0.355 & 0.260 & 0.123823 \\
 & $V_{h^*}$ & 0.525 & 0.330 & 0.145 & 0.125102 \\
\midrule
\multirow{3}{*}{0.15} & $V_h$ & 0.385 & 0.345 & 0.270 & 0.138136 \\
 & $\widehat V_h$ & 0.345 & 0.345 & 0.310 & 0.138375 \\
 & $V_{h^*}$ & 0.480 & 0.340 & 0.180 & 0.139394 \\
\midrule
\multirow{3}{*}{0.20} & $V_h$ & 0.375 & 0.345 & 0.280 & 0.152369 \\
 & $\widehat V_h$ & 0.320 & 0.340 & 0.340 & 0.152961 \\
 & $V_{h^*}$ & 0.450 & 0.345 & 0.205 & 0.153495 \\
\bottomrule
\end{tabular*}
\endgroup
\end{table}

At $c=0.10$, the portfolio selected by $\widehat V_h$ increases
$V_h$ by about $0.000028$ relative to the minimum found on the
weight grid. The increase is about $0.001307$ for the portfolio
selected by $V_{h^*}$. At $c=0.20$, the corresponding increases are
$0.000592$ and $0.001126$. In these cases, minimizing $\widehat V_h$
gives a portfolio with an original objective value close to the
numerical minimum, while avoiding the inner worst-case optimization
when selecting the weights.

\FloatBarrier
\subsubsection{Inter-quantile range}
\label{subsec:portfolio_iqr}

We now use the inter-quantile range (IQR) \citep[see e.g.,][]{WWW20a}.
For $\alpha=0.75$, it is given by
\[
h(t)=\mathbf1_{[1-\alpha,\alpha]}(t),
\qquad
\rho_h(F)=F^{-1+}(\alpha)-F^{-1}(1-\alpha).
\]
Here $h$ is discontinuous and nonmonotone, and $\rho_h$ is nonconvex.
The objective measures the spread of losses. Since $h(1)=0$,
the mean term in \eqref{eq:elliptical_inner} drops out.
We keep the same assets and weight grid, fix $\varrho=0.3$, and take
$\epsilon=c\sqrt{\operatorname{tr}(\bm\Sigma_\varrho)}$ for
$c\in\{0.005,0.05\}$.
At each fixed $c$, we use
$\eta\in\{0.10,0.05,0.02,0.01,0.005,0.002,0.001\}$, keeping
$\epsilon$ unchanged.

For $0<\eta<0.125$, \eqref{eq:reg_density} gives
\begin{equation}\label{eq:IQR_regular}
\rho_{h_\eta}(F)
=\frac1\eta\int_\alpha^{\alpha+\eta}F^{-1}(u)\,\mathrm du
-\frac1\eta\int_{1-\alpha-\eta}^{1-\alpha}F^{-1}(u)\,\mathrm du.
\end{equation}
As $\eta$ decreases, $\rho_{h_\eta}(F)$ decreases to $\rho_h(F)$.
Theorem~\ref{thm:regularization}(i) and
Proposition~\ref{prop:outer_reg}(ii) give
$\min_{\bm\omega\in\mathcal W}V_{h_\eta}(\bm\omega)
\to\min_{\bm\omega\in\mathcal W}V_h(\bm\omega)$.
For the worst-case value of \eqref{eq:IQR_regular} with a standard normal reference, the two
averaging intervals are symmetric.
The optimal value for either tail is concave in its squared transport
budget, so allocating half to each tail is optimal. We compute twice
the worst-case $\RVaR_{\alpha,\alpha+\eta}$ value at radius
$\epsilon/\sqrt{2}$ using Proposition~\ref{prop:isotonic}, normal
quantile integrals, and one-dimensional root finding.

For each $\eta$, we compute and store $S_{h_\eta}$ for a standard
normal reference at $961$ geometrically spaced radii from approximately
$0.005920$ to $0.666566$. This range covers the standardized radii of
all candidate portfolios. We use piecewise cubic Hermite interpolation
(PCHIP) to obtain $S_{h_\eta}$ at each portfolio's standardized radius,
and then apply \eqref{eq:elliptical_inner} to calculate
$V_{h_\eta}(\bm\omega)$. We recompute the $32$ portfolios with the
lowest interpolated objective values by root finding and select the
one with the smallest value.

To check the regularized values, we compute the original IQR
independently from the worst-case $\VaRp_\alpha$ formula \eqref{eq:wc_var}.
For the standard normal reference, symmetry and convexity of the
squared transport cost imply that a fixed IQR width costs least
when its endpoints are symmetric about zero. Each tail then uses
half of the squared cost. Thus, for the inner problem at
$\alpha=0.75$,
$S_h(\epsilon)=2S_{\VaRp_\alpha}\!\left(\epsilon/\sqrt{2}\right).$
We compute the right-hand side by one-dimensional root finding
in \eqref{eq:wc_var} and use \eqref{eq:elliptical_inner} to obtain
$V_h(\bm\omega)$ at every weight on the grid.

For each $\eta>0$, Table~\ref{tab:iqr_regularization} gives the selected
weights and their values $V_{h_\eta}(\bm\omega)$.
In the table, $\eta=0$ represents the original IQR problem, with its
minimizing weights and independently computed values $V_h(\bm\omega)$.

\begin{table}[!htbp]
\centering
\caption{Selected portfolio weights and objective values for IQR with different regularization parameters.}
\label{tab:iqr_regularization}
\begingroup
\small
\setstretch{1.05}
\renewcommand{\arraystretch}{1.12}
\setlength{\tabcolsep}{5pt}
\begin{tabular*}{0.96\linewidth}{@{\extracolsep{\fill}}lcccc@{}}
\toprule
\multirow{2}{*}{$\eta$} & \multicolumn{2}{c}{$c=0.005$} & \multicolumn{2}{c}{$c=0.05$} \\
\cmidrule(lr){2-3}\cmidrule(lr){4-5}
 & Selected weights & $\min_{\bm\omega} V_{h_\eta}(\bm\omega)$ & Selected weights &  $\min_{\bm\omega} V_{h_\eta}(\bm\omega)$ \\
\midrule
0.100 & $(0.805,\,0.190,\,0.005)$ & 0.201072 & $(0.705,\,0.250,\,0.045)$ & 0.261515 \\
0.050 & $(0.795,\,0.195,\,0.010)$ & 0.183416 & $(0.690,\,0.260,\,0.050)$ & 0.258667 \\
0.020 & $(0.785,\,0.200,\,0.015)$ & 0.177233 & $(0.690,\,0.260,\,0.050)$ & 0.258667 \\
0.010 & $(0.785,\,0.200,\,0.015)$ & 0.177117 & $(0.690,\,0.260,\,0.050)$ & 0.258667 \\
0.005 & $(0.785,\,0.200,\,0.015)$ & 0.177117 & $(0.690,\,0.260,\,0.050)$ & 0.258667 \\
0.002 & $(0.785,\,0.200,\,0.015)$ & 0.177117 & $(0.690,\,0.260,\,0.050)$ & 0.258667 \\
0.001 & $(0.785,\,0.200,\,0.015)$ & 0.177117 & $(0.690,\,0.260,\,0.050)$ & 0.258667 \\
\midrule
0 & $(0.785,\,0.200,\,0.015)$ & 0.177117 & $(0.690,\,0.260,\,0.050)$ & 0.258667 \\
\bottomrule
\end{tabular*}
\endgroup
\end{table}

\Needspace{3\baselineskip}
Table~\ref{tab:iqr_regularization} shows that at $c=0.005$, the value
decreases from $0.201072$ at $\eta=0.10$
to $0.177117$ at $\eta=0.01$, matching the value in the $\eta=0$ row to
the displayed precision. The selected weights also match.
For $c=0.05$, this agreement holds for $\eta\le0.05$.
In these cases, the flat parts of the worst-case quantile make
the averages equal to the endpoint quantiles, and the same
distribution remains optimal. Thus a positive $\eta$ can already
recover the original value in this example.

\FloatBarrier

\section{Concluding remarks}

Moment constraints and Wasserstein distance express different information
about distributional uncertainty. Combining them is appropriate when the
moment information is reliable enough to be imposed on all admissible
distributions. When the benchmark's mean and dispersion are themselves
uncertain, a Wasserstein ball alone permits these features to vary jointly
with the distribution's shape. For a fixed benchmark and radius, adding
moment constraints can exclude alternatives and lower the worst-case
value. Our setup is therefore useful when the reference distribution is
an estimate and no additional moment information justifies restricting
those alternatives.

Our three main results provide three approaches to solving DRO
problems for nonconvex distortion riskmetrics. Together, they
suggest the following sequence for risk evaluation.
Direct convexification should be checked first. When the exactness
criterion in Theorem~\ref{thm:flatness} holds, existing convex methods
can be used without changing the worst-case value. Otherwise, the
convex counterpart still provides an upper bound, but a decision based
on it may reflect additional conservatism.
The regularization method in Theorem~\ref{thm:regularization} is
appropriate when a more accurate evaluation of the original risk measure
is needed after convexification fails. It recovers the exact worst-case
value as a limit of solvable problems. The cost is repeated numerical
optimization, and convergence alone does not certify the error for a
fixed regularization parameter and grid.
The explicit approximations in Theorem~\ref{thm:radius_bound} are useful
when repeated risk evaluation would be costly and a controlled objective
error is acceptable under a chosen tolerance level.
The theorem provides an explicit distribution in the Wasserstein ball and a
computable upper bound on how much its risk value underestimates the
true worst-case value. The approximation gap vanishes as
the Wasserstein radius $\epsilon$ tends to zero, and the same
conclusion holds as $\epsilon$ tends to infinity provided that $p>1$. 

Future research may extend the analysis to dynamic models with
sequential decisions and evolving information. 
One can also consider extending the linear setup to more general loss functions for a robust portfolio selection problem.
It would also be promising to explore optimization of distortion riskmetrics over ambiguity sets induced by other distributional distances beyond the Wasserstein framework. 

\subsection*{Acknowledgements}
The authors are grateful to members of the research group on financial mathematics and risk management at CUHK-Shenzhen for their useful feedback and conversations. 
Y.L. acknowledges financial support from the National Natural Science Foundation of China (Grant No.~12401624), The Chinese University of Hong Kong (Shenzhen) University Development Fund (Grant No.~UDF01003336), Guangdong Science and Technology Program (Grant No.~2024QN11X076) and Shenzhen Science and Technology Program (Grant No.~RCBS20231211090814028, JCYJ20250604141203005, 2025TC0010) and is partly supported by the Guangdong Provincial Key Laboratory of Mathematical Foundations for Artificial Intelligence (Grant No.~2023B1212010001). GPT-5.6 Sol and GPT-6 Astra were used in the development of technical results and in the production of code for our numerical experiments. The paper is human-written.

\setcounter{lemma}{0}
\renewcommand{\thelemma}{\thesection.\arabic{lemma}}
\renewcommand{\theHlemma}{\theHsection.\arabic{lemma}}
\setcounter{proposition}{0}
\renewcommand{\theproposition}{\thesection.\arabic{proposition}}
\renewcommand{\theHproposition}{\theHsection.\arabic{proposition}}
\setcounter{theorem}{0}
\renewcommand{\thetheorem}{\thesection.\arabic{theorem}}
\renewcommand{\theHtheorem}{\theHsection.\arabic{theorem}}
\setcounter{definition}{0}
\renewcommand{\thedefinition}{\thesection.\arabic{definition}}
\renewcommand{\theHdefinition}{\theHsection.\arabic{definition}}
\setcounter{corollary}{0}
\renewcommand{\thecorollary}{\thesection.\arabic{corollary}}
\renewcommand{\theHcorollary}{\theHsection.\arabic{corollary}}
\setcounter{example}{0}
\renewcommand{\theexample}{\thesection.\arabic{example}}
\renewcommand{\theHexample}{\theHsection.\arabic{example}}
\setcounter{equation}{0}
\renewcommand{\theequation}{\thesection.\arabic{equation}}
\renewcommand{\theHequation}{\theHsection.\arabic{equation}}
\renewcommand{\thefigure}{\thesection.\arabic{figure}}
\renewcommand{\thetable}{\thesection.\arabic{table}}

\clearpage
\appendix

\section{Proofs and auxiliary results}
\label{app:proofs}

We collect the complete proofs of the technical results stated in the main
text. The proofs are gathered according to the sections in which the
corresponding results appear.

\subsection{Proofs for
Section~\ref{sec:convexification}}
\label{app:convexification}

\begin{proof}[Proof of Theorem~\ref{thm:flatness}]
Sufficiency for both (i) and (ii) follows from \eqref{eq:flatness_equal} justified by Example 11 of \cite{PWW25}. We prove necessity.

For (i), let $p>1$. Suppose that there exists an interval
$C\in\mathcal I_h$ on which $G_0^{-1}$ is not constant Lebesgue-a.e. We aim to show that $S_h(\epsilon)<S_{h^*}(\epsilon).$

Write $C=(a,b)$. Because $h^*$ is linear on $(1-b,1-a)$, $\psi_{h^*}$ is constant
Lebesgue-a.e.~on $C$. It follows from \eqref{eq:bench_shift} that $r^*$ is
constant Lebesgue-a.e.~on $C$. Therefore, the quantile $(F^*)^{-1}=G_0^{-1}+r^*$ is not flat on $C$. Hence, there exist $a<u_1<u_2<b$ such that $(F^*)^{-1}(u_1)<(F^*)^{-1}(u_2)$. Let $Y^*=(F^*)^{-1}(U)$, where $U\sim \U[0,1]$. For every
$x\in((F^*)^{-1}(u_1),(F^*)^{-1}(u_2))$, the monotonicity of $(F^*)^{-1}$ implies that
$$1-b<1-u_2\le\p(Y^*>x)\le 1-u_1<1-a.$$
Define
$D_h(t)=h^*(t)-h(t)$ for $t\in[0,1]$. By definition, $D_h(0)=D_h(1)=0$ and $D_h(t)>0$ for all $t\in(1-b,1-a)$. It follows that
\begin{align}
\rho_{h^*}(F^*)-\rho_h(F^*)
=
\int_{\R} D_h\left(\p(Y^*>x)\right)\,\mathrm d x>0.
\label{eq:gap_integral}
\end{align}

Because $h\le h^*$, we have $S_h(\epsilon)\le S_{h^*}(\epsilon)$. Suppose, to the contrary, that equality holds. There exists a sequence
$\{P_n\}_{n\in\N}\subset\M_p(G_0,\epsilon)$ such that
$\rho_h(P_n)\to S_{h^*}(\epsilon)$. Because
$\rho_h(P_n)\le\rho_{h^*}(P_n)\le S_{h^*}(\epsilon)$, we also have
$\rho_{h^*}(P_n)\to S_{h^*}(\epsilon)$.
For each $n\in\N$, set $r_n=P_n^{-1}-G_0^{-1}$. By
\eqref{eq:quantile_density},
\begin{align*}
\rho_{h^*}(P_n)-\rho_{h^*}(G_0)
=
\int_0^1P_n^{-1}(u)\psi_{h^*}(u)\,\mathrm d u
-\int_0^1G_0^{-1}(u)\psi_{h^*}(u)\,\mathrm d u=
\int_0^1r_n(u)\psi_{h^*}(u)\,\mathrm d u.
\end{align*}
By \eqref{eq:bench_hstar}, as $n\to\infty$,
\begin{equation}
\int_0^1r_n(u)\psi_{h^*}(u)\,\mathrm d u
\to
S_{h^*}(\epsilon)-\rho_{h^*}(G_0)
=\epsilon\|\psi_{h^*}\|_{p'}.
\label{eq:holder_limit}
\end{equation}
Because $P_n\in\M_p(G_0,\epsilon)$, we have $\|r_n\|_p\le\epsilon$.

Define
\begin{equation*}
\varphi(u)=\frac{\psi_{h^*}(u)}{\|\psi_{h^*}\|_{p'}},
\qquad
z(u)=\frac{\operatorname{sgn}(\psi_{h^*}(u))|\psi_{h^*}(u)|^{p'-1}}
{\|\psi_{h^*}\|_{p'}^{p'-1}},
\qquad u\in(0,1).
\end{equation*}
Then $\|\varphi\|_{p'}=1$, $\|z\|_p=1$, $\int_0^1 z(u)\varphi(u)\,\mathrm d u=1$, and $r^*=\epsilon z$. Set $f_n=r_n/\epsilon$. Then $\|f_n\|_p\le1$, and
$\int_0^1f_n(u)\varphi(u)\,\mathrm d u
=(\epsilon\|\psi_{h^*}\|_{p'})^{-1}
\int_0^1r_n(u)\psi_{h^*}(u)\,\mathrm d u\to1$ as $n\to\infty$.
Moreover,
\begin{align*}
1\ge \frac{\|f_n\|_p+\|z\|_p}{2}\ge \left\|\frac{f_n+z}{2}\right\|_p
&\ge
\int_0^1\frac{f_n(u)+z(u)}{2}\varphi(u)\,\mathrm d u \\
&=
\frac12\int_0^1f_n(u)\varphi(u)\,\mathrm d u
+\frac12\int_0^1z(u)\varphi(u)\,\mathrm d u
\to 1,~~n\to\infty.
\end{align*}
It follows that
$\|(f_n+z)/2\|_p\to1$ as $n\to\infty$. Because $L^p(0,1)$ is uniformly convex and
$\|f_n\|_p\le1=\|z\|_p$, we have $\|f_n-z\|_p\to0$ as $n\to\infty$. Consequently,
$r_n\xrightarrow{L^p} r^*$ as $n\to\infty$. Because $P_n^{-1}=G_0^{-1}+r_n$ and
$(F^*)^{-1}=G_0^{-1}+r^*$, we obtain $P_n^{-1}\to (F^*)^{-1}$ in $L^p(0,1)$ as $n\to\infty$.

Let $Y_n=P_n^{-1}(U)$. Then
$Y_n\xrightarrow{L^p} Y^*$ as $n\to\infty$, and hence $Y_n\xrightarrow{\mathrm{d}} Y^*$. Because $h^*$ is continuous and $h=\widehat h$ is upper
semicontinuous, the function $D_h=h^*-h$ is lower semicontinuous. Therefore, for
Lebesgue-a.e. $x\in\R$,
$D_h\left(\p(Y^*>x)\right)
\le
\liminf_{n\to\infty}D_h\left(\p(Y_n>x)\right)$.
By Fatou's lemma and \eqref{eq:gap_integral}, we obtain
\begin{align}\label{eq:thm1_fatou}
0<
\int_{\R}D_h\left(\p(Y^*>x)\right)\,\mathrm d x
\le
\liminf_{n\to\infty}
\int_{\R}D_h\left(\p(Y_n>x)\right)\,\mathrm d x
=
\liminf_{n\to\infty}
\left(\rho_{h^*}(P_n)-\rho_h(P_n)\right).
\end{align}
Consequently,
\begin{align*}
\limsup_{n\to\infty}\rho_h(P_n)
&=
\limsup_{n\to\infty}
\left[
\rho_{h^*}(P_n)
-\left(\rho_{h^*}(P_n)-\rho_h(P_n)\right)
\right] \\
&\le
\limsup_{n\to\infty}\rho_{h^*}(P_n)
-\liminf_{n\to\infty}
\left(\rho_{h^*}(P_n)-\rho_h(P_n)\right)
<
\limsup_{n\to\infty}\rho_{h^*}(P_n) = S_{h^*}(\epsilon).
\end{align*}
This contradicts $\rho_h(P_n)\to S_{h^*}(\epsilon)$ and thus proves
$S_h(\epsilon)<S_{h^*}(\epsilon)$. The proof for (i) is complete.

For (ii), let $p=1$ and suppose that $G_0^{-1}$ is not flat on
$\mathcal I_h$. Choose $(a,b)\in\mathcal I_h$ and $a<u_1<u_2<b$ such that
$G_0^{-1}(u_1)<G_0^{-1}(u_2)$. As in (i), for every
$x\in(G_0^{-1}(u_1),G_0^{-1}(u_2))$, we have
$1-G_0(x)\in[1-u_2,1-u_1]\subset(1-b,1-a)$, where $D_h>0$.
Thus $\rho_{h^*}(G_0)-\rho_h(G_0)>0$.

Suppose, to the contrary, that there exists a sequence $\epsilon_n>0$
with $\epsilon_n\downarrow0$ and $S_h(\epsilon_n)=S_{h^*}(\epsilon_n)$.
For each $n\in\N$, choose $P_n\in\mathcal M_1(G_0,\epsilon_n)$ such that
$\rho_h(P_n)\ge S_h(\epsilon_n)-1/n$. Then
\[
0\le\rho_{h^*}(P_n)-\rho_h(P_n)
\le S_{h^*}(\epsilon_n)-\rho_h(P_n)\le\frac1n.
\]
Moreover, $W_1(P_n,G_0)\le\epsilon_n\to0$, so
$P_n(x)\to G_0(x)$ at every continuity point of $G_0$, hence for
Lebesgue-a.e.~$x\in\R$. Lower semicontinuity of $D_h$ gives
$D_h(1-G_0(x))\le\liminf_nD_h(1-P_n(x))$ at these points.
Fatou's lemma yields
\[\begin{aligned}
0<\rho_{h^*}(G_0)-\rho_h(G_0)&=\int_{\R}D_h(1-G_0(x))\,\mathrm d x \\&\le\liminf_{n\to\infty}\int_{\R}D_h(1-P_n(x))\,\mathrm d x
=\liminf_{n\to\infty}
\bigl[\rho_{h^*}(P_n)-\rho_h(P_n)\bigr]=0.
\end{aligned}\]
This contradiction proves that $S_h(\epsilon)<S_{h^*}(\epsilon)$ for all
sufficiently small $\epsilon>0$, completing the proof of (ii).
\end{proof}

\begin{proof}[Proof of Proposition~\ref{prop:wc_hstar}]
For every $F\in\mathcal M_p(G_0,\epsilon)$ we have by H\"older's inequality that
\begin{equation}\label{eq:prop1_holder}
\rho_{h^*}(F)-\rho_{h^*}(G_0)
=\int_0^1\bigl(F^{-1}(u)-G_0^{-1}(u)\bigr)\psi_{h^*}(u)\,\mathrm d u
\le\epsilon\|\psi_{h^*}\|_{p'}.
\end{equation}
If $h^*\equiv0$, then $\rho_{h^*}(F)=S_{h^*}(\epsilon)=0$.
Henceforth, we assume $h^*\not\equiv0$.

For $p>1$, because $\psi_{h^*}$ is nondecreasing, $r^*$ in \eqref{eq:bench_shift} is nondecreasing and has $L^p$-norm $\epsilon$. Thus $G_0^{-1}+r^*$ defines a valid quantile,
up to its left-continuous modification, with $W_p(F^*,G_0)=\epsilon$.
It attains equality in \eqref{eq:prop1_holder}, proving
\eqref{eq:bench_hstar}. The equality condition in H\"older's inequality
determines $r^*$ uniquely almost everywhere, so $F^*$ is unique.

For $p=1$, a feasible $F$ attains equality in
\eqref{eq:prop1_holder} if and only if $W_1(F,G_0)=\epsilon$ and
\[
\begin{gathered}
\bigl(F^{-1}(u)-G_0^{-1}(u)\bigr)\psi_{h^*}(u)
=\bigl|F^{-1}(u)-G_0^{-1}(u)\bigr|\|\psi_{h^*}\|_\infty
\quad\text{for a.e. }u\in(0,1).
\end{gathered}
\]
Thus $F^{-1}-G^{-1}_0$ must have the same sign as $\psi_{h^*}$ and vanish
where $|\psi_{h^*}|<\|\psi_{h^*}\|_\infty$. It follows that
$\{|\psi_{h^*}|=\|\psi_{h^*}\|_\infty\}$ has positive Lebesgue measure. Conversely, if 
this set has positive measure, $r^*$ in
\eqref{eq:bench_shift} is well defined, nondecreasing, and has
$L^1$-norm $\epsilon$. It therefore defines a feasible $F^*$ attaining
equality in \eqref{eq:prop1_holder}.
If this set has zero measure, there is no optimizer for \eqref{eq:bench_hstar}. Let $P_n$ be defined in \eqref{eq:bench_sequence} of Remark~\ref{rem:unique_hstar}. Because
$P_n^{-1}-G_0^{-1}$ is nondecreasing and has an $L^1$-norm $\epsilon$,
we have $P_n\in\mathcal M_1(G_0,\epsilon)$ and $W_1(P_n,G_0)=\epsilon$.
Moreover,
\[
\rho_{h^*}(P_n)-\rho_{h^*}(G_0)
=\epsilon\frac{\int_0^1|\psi_{h^*}(u)|^{n+1}\,\mathrm d u}
{\int_0^1|\psi_{h^*}(u)|^n\,\mathrm d u}
\longrightarrow\epsilon\|\psi_{h^*}\|_\infty,
\]
because the ratio lies between $\|\psi_{h^*}\|_n$ and
$\|\psi_{h^*}\|_\infty$, and $\|\psi_{h^*}\|_n\to\|\psi_{h^*}\|_\infty$.
Together with \eqref{eq:prop1_holder}, this proves
\eqref{eq:bench_hstar}.

\end{proof}

\subsection{Results and proofs for
Section~\ref{sec:regularization}}
\label{app:regularization}

For any $F\in\mathcal M_p$, monotonicity gives
\begin{equation}
\max\{|F^{-1}(u)|,|F^{-1+}(u)|\}
\le \|F^{-1}\|_p(u\wedge(1-u))^{-1/p},\qquad 0<u<1.
\label{eq:q_bound}
\end{equation}
Indeed, monotonicity gives
\[
\|F^{-1}\|_p^p\ge
u\bigl((-F^{-1}(u))_+\bigr)^p
+(1-u)\bigl((F^{-1+}(u))_+\bigr)^p,
\]
which implies \eqref{eq:q_bound} because $F^{-1}(u)\le F^{-1+}(u)$.
For $F\in\mathcal M_p(G_0,\epsilon)$, we also have
$\|F^{-1}\|_p\le\|G_0^{-1}\|_p+\epsilon$.

\begin{proof}[Proof of Proposition~\ref{prop:isotonic}]
For (i), H\"older's inequality gives finiteness of $S_h(\epsilon)$.
Take $P_n\in\mathcal M_p(G_0,\epsilon)$ such that
$\rho_h(P_n)\to S_h(\epsilon)$. For $p>1$, the feasible quantile set
$\left\{F^{-1}:F\in\mathcal M_p(G_0,\epsilon)\right\}$
is convex, norm closed, and bounded. Hence, the set is weakly compact. Therefore, there exists a subsequence $P_{n_k}^{-1}$ and
$F\in\mathcal M_p(G_0,\epsilon)$ such that
\[
\int_0^1P_{n_k}^{-1}(u)\varphi(u)\,\mathrm d u
\longrightarrow\int_0^1F^{-1}(u)\varphi(u)\,\mathrm d u
\qquad\text{for every }\varphi\in L^{p'}(0,1).
\]
Taking $\varphi=\psi_h$ shows that $F$ attains the supremum.
For $p=1$, \eqref{eq:q_bound} and Helly's selection theorem give a subsequence
$P_{n_k}^{-1}$ converging to a quantile $F^{-1}$ at its continuity points,
and hence almost everywhere.
Fatou's lemma gives $\|F^{-1}-G_0^{-1}\|_1\le\epsilon$, so
$F\in\mathcal M_1(G_0,\epsilon)$.
For each $0<\delta<1/2$, dominated convergence gives
$\int_\delta^{1-\delta}(P_{n_k}^{-1}-F^{-1})\psi_h\,\mathrm d u\to0$
as $k\to\infty$, while H\"older's inequality gives
\[
\left|\int_{(0,\delta)\cup(1-\delta,1)}
(P_{n_k}^{-1}-F^{-1})\psi_h\,\mathrm d u\right|
\le2(\|G_0^{-1}\|_1+\epsilon)
\left\|\psi_h\mathbf 1_{(0,\delta)\cup(1-\delta,1)}\right\|_\infty.
\]
By \eqref{eq:psi_endpoints}, letting $k\to\infty$ and then
$\delta\downarrow0$ gives $\rho_h(F)=S_h(\epsilon)$.

On the convex set $\{F^{-1}:F\in\mathcal M_p\}\subset L^p(0,1)$,
$F^{-1}\mapsto\int_0^1F^{-1}\psi_h\,\mathrm d u$ is continuous and
linear, and $F^{-1}\mapsto\|F^{-1}-G_0^{-1}\|_p^p$ is continuous and convex.
Since $G_0\in\mathcal M_p$ and
$\|G_0^{-1}-G_0^{-1}\|_p^p=0<\epsilon^p$,
Lagrange duality gives \eqref{eq:reg_dual} and an optimal multiplier
$\lambda^*\ge0$.
For any $F\in\mathcal M_p(G_0,\epsilon)$ and $\lambda\ge0$, we have
\[
\begin{aligned}
\rho_h(F)
\le \rho_h(F)+\frac{\lambda}{p}
\bigl[\epsilon^p-W_p^p(F,G_0)\bigr]\le \frac{\lambda\epsilon^p}{p}
+\sup_{\widetilde F\in\mathcal M_p}
\left[\rho_h(\widetilde F)
-\frac{\lambda}{p}W_p^p(\widetilde F,G_0)\right].
\end{aligned}
\]
Both inequalities are equalities if and only if
\begin{equation}\label{eq:prop1_condition}
\lambda\bigl[W_p^p(F,G_0)-\epsilon^p\bigr]=0~~\mbox{and}~~\displaystyle F\in\operatorname*{argmax}_{\widetilde F\in\mathcal M_p}
\left[\rho_h(\widetilde F)
-\frac{\lambda}{p}W_p^p(\widetilde F,G_0)\right].
\end{equation}
Together with \eqref{eq:reg_dual}, this proves the optimality criterion.
Finally, if $h(1)\neq 0$, suppose for contradiction that $\lambda^*=0$. We have 
$$S_h(\epsilon)=\sup_{F\in\mathcal M_p}\rho_h(F)\ge\sup_{c\in\R}c\,h(1)=\infty.$$
This contradicts finiteness of $S_h(\epsilon)$, and thus
$\lambda^*>0$.
For any optimal distribution $P_{\lambda^*}$,
\eqref{eq:prop1_condition} therefore gives $W_p(P_{\lambda^*},G_0)=\epsilon$.

For (ii), let $p>1$ and $\lambda>0$. H\"older's inequality gives
\[
\begin{aligned}
\frac{\lambda}{p}\|F^{-1}-G_0^{-1}\|_p^p-\rho_h(F)
\ge{}&\frac{\lambda}{p}\|F^{-1}-G_0^{-1}\|_p^p\\
&-\|\psi_h\|_{p'}
\bigl(\|F^{-1}-G_0^{-1}\|_p+\|G_0^{-1}\|_p\bigr)
\longrightarrow\infty,
\end{aligned}
\]
as $\|F^{-1}-G_0^{-1}\|_p\to\infty$.
The set $\{F^{-1}:F\in\mathcal M_p\}$ is convex and norm closed,
hence weakly closed in $L^p(0,1)$.
Weak lower semicontinuity and strict convexity give a unique minimizer
$P_\lambda$.

To obtain \eqref{eq:reg_quantile}, define $D_\lambda\in L^{p'}(0,1)$
by the following formula for $0<v<1$:
\[
D_\lambda(v)=\lambda|P_\lambda^{-1}(v)-G_0^{-1}(v)|^{p-2}
\bigl(P_\lambda^{-1}(v)-G_0^{-1}(v)\bigr)-\psi_h(v).
\]
Differentiating the inner problem of \eqref{eq:reg_dual} along $(1-t)P_\lambda^{-1}+tF^{-1}$ at $t=0+$ gives
\begin{equation}\label{eq:prop2_diff}
\int_0^1D_\lambda(v)\bigl[F^{-1}(v)-P_\lambda^{-1}(v)\bigr]\,\mathrm d v
\ge0,\qquad F\in\mathcal M_p.
\end{equation}
For $z\in\R$ and $t>0$, take respectively
\[
\begin{aligned}
F^{-1}=P_\lambda^{-1}+(t\wedge (z-P_\lambda^{-1})_+)\mathbf1_{(a,1)}~~\mbox{and}~~
F^{-1}=P_\lambda^{-1}-(t\wedge (P_\lambda^{-1}-z)_+)\mathbf1_{(0,b]}.
\end{aligned}
\]
Dividing by $t$ and letting $t\downarrow0$ yields, respectively,
\[
\begin{aligned}
\int_a^{P_\lambda(z-)}D_\lambda(v)\,\mathrm d v \ge0,~~0\le a<P_\lambda(z-),
~~\mbox{and}~~
\int_{P_\lambda(z)}^bD_\lambda(v)\,\mathrm d v\le0,~~P_\lambda(z)<b\le1.
\end{aligned}
\]
Because $P_\lambda^{-1}(v)<z$ for $v<P_\lambda(z-)$ and
$P_\lambda^{-1}(v)>z$ for $v>P_\lambda(z)$, monotonicity of
$x\mapsto|x|^{p-2}x$ and \eqref{eq:block_root} imply
$z_{a,P_\lambda(z-)}(\lambda)\le z$ for $0\le a<P_\lambda(z-)$, and
$z_{P_\lambda(z),b}(\lambda)\ge z$ for $P_\lambda(z)<b\le1$.
At any continuity point $u\in(0,1)$ of $P_\lambda^{-1}$, these bounds give
\[
\begin{aligned}
z<P_\lambda^{-1}(u)
&\ \Longrightarrow\ P_\lambda(z)<u
\ \Longrightarrow\
\sup_{0\le a<u}\inf_{u<b\le1}z_{a,b}(\lambda)
\ge\inf_{u<b\le1}z_{P_\lambda(z),b}(\lambda)\ge z,\\
z>P_\lambda^{-1}(u)
&\ \Longrightarrow\ u<P_\lambda(z-)
\ \Longrightarrow\
\sup_{0\le a<u}\inf_{u<b\le1}z_{a,b}(\lambda)
\le\sup_{0\le a<u}z_{a,P_\lambda(z-)}(\lambda)\le z.
\end{aligned}
\]
Letting $z\uparrow P_\lambda^{-1}(u)$ and
$z\downarrow P_\lambda^{-1}(u)$ proves \eqref{eq:reg_quantile}
at every continuity point, hence up to its left-continuous version.

If $h(1)\ne0$, part (i) and uniqueness of the inner optimizer give
$\operatorname*{argmax}_{F\in\mathcal M_p(G_0,\epsilon)}\rho_h(F)
=\{P_{\lambda^*}\}$, $\lambda^*>0$.
Applying $F^{-1}=P^{-1}_{\lambda^*}\pm 1$ to \eqref{eq:prop2_diff} gives
$\int_0^1D_{\lambda^*}(v)\,\mathrm d v=0$. Hence,
\[
\lambda^*
=\frac{h(1)}{\displaystyle\int_0^1
|P_{\lambda^*}^{-1}(v)-G_0^{-1}(v)|^{p-2}
\bigl(P_{\lambda^*}^{-1}(v)-G_0^{-1}(v)\bigr)\,\mathrm d v}.
\]
Thus $\lambda^*$ is unique.
\end{proof}

For $\eta>0$ and $F\in\mathcal M_p$, use the widths in
\eqref{eq:reg_widths} to define the following averaging operators:
\begin{equation}\label{eq:sec4_operator}
\begin{aligned}
&A_\eta[F^{-1}](u)
=\frac1{\delta_\eta(u)}
\int_{u-\delta_\eta(u)}^u F^{-1}(v)\,\mathrm d v,~~u\in(0,1),\\
&A_{k,\eta}^-[F^{-1}]
=\frac1{\delta_{k,\eta}^-}
\int_{\xi_k-\delta_{k,\eta}^-}^{\xi_k}F^{-1}(u)\,\mathrm d u
\qquad\mbox{and}\qquad
A_{k,\eta}^+[F^{-1}]
=\frac1{\delta_{k,\eta}^+}
\int_{\xi_k}^{\xi_k+\delta_{k,\eta}^+}F^{-1}(u)\,\mathrm d u.
\end{aligned}
\end{equation}
Retain the discontinuity points $1-\xi_k$ of $h$ and coefficients $c_k^\pm$
from Section~\ref{sec:regularization}.
Assume that
\begin{equation}
\sum_k(|c_k^-|+|c_k^+|)
(\xi_k\wedge(1-\xi_k))^{-1/p}<\infty.
\label{eq:jump_sum}
\end{equation}
Fix $\psi\in L^{p'}(0,1)$, independent of $\eta$, and define, for $F\in\mathcal M_p$,
\begin{align*}
L_0(F)
&=\int_0^1F^{-1}(u)\psi(u)\,\mathrm d u
+\sum_k\left[c_k^-F^{-1}(\xi_k)+c_k^+F^{-1+}(\xi_k)\right],\\
L_{0,\eta}(F)
&=\int_0^1F^{-1}(u)\psi(u)\,\mathrm d u
+\sum_k\left[c_k^-A_{k,\eta}^-[F^{-1}]
+c_k^+A_{k,\eta}^+[F^{-1}]\right].
\end{align*}
For $0<u<1$ and $v\in[u-\delta_\eta(u),u]$, we have
$v\wedge(1-v)\ge [u\wedge(1-u)]/2$.
Likewise, $v\wedge(1-v)\ge[\xi_k\wedge(1-\xi_k)]/2$ for
$v\in[\xi_k-\delta_{k,\eta}^-,\xi_k+\delta_{k,\eta}^+]$.
Thus \eqref{eq:q_bound} gives
\begin{equation}
\begin{aligned}
|A_\eta[F^{-1}](u)|\le2^{1/p}\|F^{-1}\|_p(u\wedge(1-u))^{-1/p}~~\mbox{and}~~
|A_{k,\eta}^{\pm}[F^{-1}]|\le2^{1/p}\|F^{-1}\|_p(\xi_k\wedge(1-\xi_k))^{-1/p}.
\end{aligned}
\label{eq:avg_bound}
\end{equation}

The following two lemmas are used in the proof of Theorem~\ref{thm:regularization}.
\begin{lemma}
\label{lem:jump_approx}
Let $1\le p<\infty$, $G_0\in\mathcal M_p$, $\epsilon>0$, and $\psi\in L^{p'}(0,1)$.
Suppose that \eqref{eq:jump_sum} holds. Then
\[
\begin{aligned}
S_{0,\eta}(\epsilon):=
\sup_{F\in\mathcal M_p(G_0,\epsilon)}L_{0,\eta}(F)<\infty,
~~\eta>0,~~\mbox{and}~~
\lim_{\eta\downarrow0}S_{0,\eta}(\epsilon)
=\sup_{F\in\mathcal M_p(G_0,\epsilon)}L_0(F)=:S_0(\epsilon)<\infty.
\end{aligned}
\]
\end{lemma}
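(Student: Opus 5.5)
Finiteness comes first and is immediate. For $F\in\mathcal M_p(G_0,\epsilon)$ we have $\|F^{-1}\|_p\le M:=\|G_0^{-1}\|_p+\epsilon$. Hölder's inequality bounds the integral term by $M\|\psi\|_{p'}$. By \eqref{eq:avg_bound}, each averaged jump term is at most $2^{1/p}M(|c_k^-|+|c_k^+|)(\xi_k\wedge(1-\xi_k))^{-1/p}$, and these bounds are summable by \eqref{eq:jump_sum}. The same argument with \eqref{eq:q_bound} in place of \eqref{eq:avg_bound} gives $S_0(\epsilon)<\infty$. All bounds are uniform in $\eta$ and in $F$ in the ball.

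For the limit I will prove two inequalities. The upper bound rests on the pointwise comparison $A^-_{k,\eta}[F^{-1}]\le F^{-1}(\xi_k)$ and $A^+_{k,\eta}[F^{-1}]\ge F^{-1+}(\xi_k)$. With signed coefficients these do not give a monotone comparison, so I will instead use a truncation argument. Fix $\tau>0$ and choose a finite set $K$ of jumps such that the tail $\sum_{k\notin K}(|c_k^-|+|c_k^+|)(\xi_k\wedge(1-\xi_k))^{-1/p}<\tau$. By the uniform bounds, the tail terms contribute at most $2\cdot 2^{1/p}M\tau$ to both $L_0$ and $L_{0,\eta}$, uniformly over the ball. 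It then suffices to treat finitely many jumps.

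\emph{Lower bound} ($\liminf_\eta S_{0,\eta}\ge S_0$). Take $F$ that is $\tau$-optimal for $S_0$. By monotonicity and right/left limits of the quantile, $A^-_{k,\eta}[F^{-1}]\to F^{-1}(\xi_k)$ and $A^+_{k,\eta}[F^{-1}]\to F^{-1+}(\xi_k)$ as $\eta\downarrow0$, since $F^{-1}$ is left-continuous and the right average converges to the right limit. Hence $L_{0,\eta}(F)\to L_0(F)$ on the finitely many jumps in $K$, and $S_{0,\eta}\ge L_{0,\eta}(F)$ gives the bound.

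\emph{Upper bound} ($\limsup_\eta S_{0,\eta}\le S_0$). This is the main obstacle, because the near-optimizer $F_\eta$ for $S_{0,\eta}$ changes with $\eta$. My plan is to modify $F_\eta$ into a feasible $\widetilde F_\eta$ with $L_0(\widetilde F_\eta)\ge L_{0,\eta}(F_\eta)-o(1)$. For $\eta$ below half the minimal gap among $\{\xi_k\}_{k\in K}$ and their distances to $0$ and $1$, the windows $(\xi_k-\delta^-_{k,\eta},\xi_k]$ and $(\xi_k,\xi_k+\delta^+_{k,\eta}]$ are disjoint. Let $\widetilde F_\eta$ be the concentration of $F_\eta$ on these windows, that is, its quantile is replaced by its average on each window, as in the $\mathcal I$-concentration. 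Then $\widetilde F_\eta^{-1}(\xi_k)=A^-_{k,\eta}[F_\eta^{-1}]$ and $\widetilde F_\eta^{-1+}(\xi_k)=A^+_{k,\eta}[F_\eta^{-1}]$, so the finitely many jump terms agree exactly.

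Two points must then be checked. First, feasibility. Averaging a quantile on intervals is a conditional expectation, and $|\cdot|^p$ is convex, so Jensen's inequality gives $\|\widetilde F_\eta^{-1}-G_0^{\mathcal J}{}^{-1}\|_p\le\|F_\eta^{-1}-G_0^{-1}\|_p$, where $\mathcal J$ is the set of windows. I will bound $W_p(\widetilde F_\eta,G_0)\le\epsilon+\|G_0^{\mathcal J,-1}-G_0^{-1}\|_p$. The last term tends to $0$ as $\eta\downarrow0$, since the windows shrink and $G_0^{-1}\in L^p$. To restore exact feasibility I will take the convex combination $(1-s)\widetilde F_\eta^{-1}+sG_0^{-1}$ with $s=o(1)$. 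This costs $o(1)$ in $L_0$ by the uniform bounds, since $L_0$ is affine in the quantile. Second, the $\psi$ term. It changes by $\int(\widetilde F_\eta^{-1}-F_\eta^{-1})\psi$, which equals $\int F_\eta^{-1}(A\psi-\psi)$ by the self-adjointness of averaging. This is at most $M\|A\psi-\psi\|_{p'}$ on the shrinking windows, and that goes to $0$ by absolute continuity of the integral when $p'<\infty$. For $p=1$ ($p'=\infty$) this step may fail. There I will instead restrict to continuity, using the uniform quantile bound \eqref{eq:q_bound} near the fixed interior points $\xi_k$ together with the fact that the windows have vanishing measure. If needed, I will first approximate $\psi$ by a function continuous near $\xi_k$, noting that only a window neighborhood of each $\xi_k$ matters and that $F_\eta^{-1}$ is uniformly bounded there. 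Letting $\eta\downarrow0$ and then $\tau\downarrow0$ finishes the proof.
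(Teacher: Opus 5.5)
Your proposal is correct and follows essentially the paper's route: truncate to finitely many jumps, replace the near-optimizer's quantile by its window averages so that the retained jump terms match exactly, and then shrink toward $G_0^{-1}$ to restore feasibility at $o(1)$ cost. The only difference is that the paper handles feasibility and the $\psi$-term together through \eqref{eq:q_bound}, which bounds the quantiles uniformly on the windows around the interior points $\xi_k$ and gives $W_p(\widetilde F_\eta,F_\eta)\to0$ for every $p\ge1$; H\"older's inequality then controls the $\psi$-term even when $p=1$, so your case split and the hedge about approximating $\psi$ are unnecessary.
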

\begin{proof}
H\"older's inequality, \eqref{eq:q_bound}, and \eqref{eq:avg_bound} give,
for every $F\in\mathcal M_p$,
\[
\begin{aligned}
|L_0(F)|\vee\sup_{\eta>0}|L_{0,\eta}(F)|
\le\|F^{-1}\|_p
\left[\|\psi\|_{p'}+2^{1/p}
\sum_k(|c_k^-|+|c_k^+|)(\xi_k\wedge(1-\xi_k))^{-1/p}\right]
<\infty.
\end{aligned}
\]
Because $\|F^{-1}\|_p\le\|G_0^{-1}\|_p+\epsilon$ on
$\mathcal M_p(G_0,\epsilon)$, we have
$|S_0(\epsilon)|\vee\sup_{\eta>0}|S_{0,\eta}(\epsilon)|<\infty$.
For every $F\in\mathcal M_p(G_0,\epsilon)$ and every $k$, we have
$A_{k,\eta}^-[F^{-1}]\to F^{-1}(\xi_k)$ and $A_{k,\eta}^+[F^{-1}]\to F^{-1+}(\xi_k)$ as $\eta\downarrow0$.
By \eqref{eq:avg_bound}, we obtain $L_{0,\eta}(F)\to L_0(F)$. Hence,
\[
\liminf_{\eta\downarrow0}S_{0,\eta}(\epsilon)
\ge\sup_{F\in\mathcal M_p(G_0,\epsilon)}L_0(F)=S_0(\epsilon).
\]

Let $\eta_n\downarrow0$ and choose
$P_n\in\mathcal M_p(G_0,\epsilon)$ with
$L_{0,\eta_n}(P_n)\ge S_{0,\eta_n}(\epsilon)-1/n$.
Fix $m$ of the discontinuity points (take all of them if their number is
finite) and choose $\delta>0$ such that
$\xi_k\in[2\delta,1-2\delta]$ for $1\le k\le m$.
For all sufficiently large $n$, we have $\eta_n<\delta$, and the intervals
$(\xi_k-\eta_n,\xi_k+\eta_n)$, $1\le k\le m$, are pairwise disjoint.
Define $\bar P_n$ for $0<u<1$ by
\[
\bar P_n^{-1}(u)=
\begin{cases}
A_{k,\eta_n}^-[P_n^{-1}],
&\xi_k-\eta_n<u\le\xi_k,\quad 1\le k\le m,\\
A_{k,\eta_n}^+[P_n^{-1}],
&\xi_k<u\le\xi_k+\eta_n,\quad 1\le k\le m,\\
P_n^{-1}(u),&\text{otherwise}.
\end{cases}
\]
The function $\bar P_n^{-1}$ is nondecreasing and left-continuous.
By Jensen's inequality, $$\|\bar P_n^{-1}\|_p\le\|P_n^{-1}\|_p\le\|G_0^{-1}\|_p+\epsilon.$$
By \eqref{eq:q_bound},
\begin{align*}
W^p_p(\bar P_n,P_n)=\sum^m_{k=1}\int^{\xi_k+\eta_n}_{\xi_k-\eta_n}|\bar P^{-1}_n(u)-P^{-1}_n(u)|^p\,\d u
&\le \left[(||P^{-1}_n||_p+||\bar P^{-1}_n||_p)\delta^{-1/p}\right]^p(2m\eta_n)\\
&
\le2^p(\|G_0^{-1}\|_p+\epsilon)^p\delta^{-1}(2m\eta_n)\to0~~\mbox{as}~n\to\infty.
\end{align*}
For $1\le k\le m$, the definition gives $\bar P_n^{-1}(\xi_k)=A_{k,\eta_n}^-[P_n^{-1}]$ and $\bar P_n^{-1+}(\xi_k)=A_{k,\eta_n}^+[P_n^{-1}]$;
for $k>m$, use \eqref{eq:q_bound} for $\bar P_n$ and
\eqref{eq:avg_bound} for $P_n$. It follows that
\begin{align*}
|L_0(\bar P_n)-S_{0,\eta_n}(\epsilon)|&\le \frac{1}{n}+|L_0(\bar P_n)-L_{0,\eta_n}(P_n)|\\
&\le \frac1n+\|\psi\|_{p'}W_p(\bar P_n,P_n)\\
&+(1+2^{1/p})(\|G_0^{-1}\|_p+\epsilon)
\sum_{k>m}(|c_k^-|+|c_k^+|)
(\xi_k\wedge(1-\xi_k))^{-1/p}.
\end{align*}
Define
\[
\widehat P_n^{-1}
=G_0^{-1}
+\frac{\epsilon}{\epsilon+W_p(\bar P_n,P_n)}
(\bar P_n^{-1}-G_0^{-1}).
\]
It follows that
\[
W_p(\widehat P_n,G_0)
=\frac{\epsilon W_p(\bar P_n,G_0)}
{\epsilon+W_p(\bar P_n,P_n)}\le\epsilon~~\mbox{and}~~|L_0(\widehat P_n)-L_0(\bar P_n)|
=\frac{W_p(\bar P_n,P_n)}{\epsilon+W_p(\bar P_n,P_n)}
|L_0(G_0)-L_0(\bar P_n)|\to0
\]
as $n\to\infty$.
Because $L_0(\widehat P_n)\le S_0(\epsilon)$, we obtain
\[
\begin{aligned}
0\le\limsup_{n\to\infty}S_{0,\eta_n}(\epsilon)-S_0(\epsilon)
\le (1+2^{1/p})(\|G_0^{-1}\|_p+\epsilon)\sum_{k>m}(|c_k^-|+|c_k^+|)
(\xi_k\wedge(1-\xi_k))^{-1/p}.
\end{aligned}
\]
Taking all discontinuities in the finite case, or letting $m\to\infty$
otherwise, gives $\lim_n S_{0,\eta_n}(\epsilon)=S_0(\epsilon)$.
Because $\eta_n\downarrow0$ is arbitrary, the result follows.
\end{proof}

\begin{lemma}
\label{lem:nonatomic_approx}
Let $\mu$ be a finite non-atomic positive measure on $(0,1)$ satisfying
$\int_{(0,1)}(u\wedge(1-u))^{-1/p}\,\mu(\mathrm d u)<\infty$.
For the averaging operator $A_\eta$ in \eqref{eq:sec4_operator},
\[
\lim_{\eta\downarrow0}
\sup_{F\in\mathcal M_p(G_0,\epsilon)}
\int_{(0,1)}
|F^{-1}(u)-A_\eta[F^{-1}](u)|
\,\mu(\mathrm d u)
=
0.
\]
\end{lemma}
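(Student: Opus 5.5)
We will prove the uniform convergence by splitting the integral into a compact interior region and two endpoint regions. The interior is handled by a uniform modulus-of-continuity argument for quantile functions; the endpoints are handled by the integrability of the weight $(u\wedge(1-u))^{-1/p}$ against $\mu$.

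Fix $\delta\in(0,1/4)$. On the endpoint region $(0,2\delta)\cup(1-2\delta,1)$, we bound the integrand using \eqref{eq:q_bound} for $F^{-1}$ and \eqref{eq:avg_bound} for $A_\eta[F^{-1}]$, together with $\|F^{-1}\|_p\le\|G_0^{-1}\|_p+\epsilon=:K$. This gives
\[
|F^{-1}(u)-A_\eta[F^{-1}](u)|\le(1+2^{1/p})K(u\wedge(1-u))^{-1/p}.
\]
The bound is uniform in $F$ and $\eta$. Hence the endpoint contribution is at most
\[
(1+2^{1/p})K\int_{(0,2\delta)\cup(1-2\delta,1)}(u\wedge(1-u))^{-1/p}\,\mu(\mathrm d u),
\]
which tends to $0$ as $\delta\downarrow0$ by the integrability hypothesis and dominated convergence.

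On the interior $[2\delta,1-2\delta]$, take $\eta<\delta$, so that $\delta_\eta(u)=\eta$. Since $F^{-1}$ is nondecreasing and left-continuous, we have
\[
0\le F^{-1}(u)-A_\eta[F^{-1}](u)\le F^{-1}(u)-F^{-1}(u-\eta).
\]
So it suffices to show
\[
\sup_{F\in\mathcal M_p(G_0,\epsilon)}\int_{[2\delta,1-2\delta]}\bigl(F^{-1}(u)-F^{-1}(u-\eta)\bigr)\,\mu(\mathrm d u)\to0.
\]
This is the main obstacle: the modulus of a single monotone function is controlled by $\mu$ being non-atomic, but we need uniformity over the ball. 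We use a Fubini-type bound. Write $F^{-1}(u)-F^{-1}(u-\eta)=\int_{(u-\eta,u]}\mathrm dF^{-1}(s)$ (the increment is over $[u-\eta,u)$ up to left-continuity; we treat the endpoint carefully). Fubini then gives
\[
\int_{[2\delta,1-2\delta]}\bigl(F^{-1}(u)-F^{-1}(u-\eta)\bigr)\,\mu(\mathrm d u)\le\int_{[\delta,1-2\delta]}\mu\bigl([s,s+\eta]\cap[2\delta,1-2\delta]\bigr)\,\mathrm dF^{-1}(s).
\]
The total variation of $F^{-1}$ on $[\delta,1-\delta]$ is at most $2K\delta^{-1/p}$ by \eqref{eq:q_bound}, uniformly over the ball. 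Moreover, $\omega_\mu(\eta):=\sup_s\mu([s,s+\eta])\to0$ as $\eta\downarrow0$, because $\mu$ is finite and non-atomic: its distribution function is continuous on $[0,1]$, hence uniformly continuous. The interior term is therefore at most $2K\delta^{-1/p}\omega_\mu(\eta)$.

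Combining the two regions,
\[
\limsup_{\eta\downarrow0}\sup_F\int|F^{-1}-A_\eta[F^{-1}]|\,\mathrm d\mu\le(1+2^{1/p})K\int_{(0,2\delta)\cup(1-2\delta,1)}(u\wedge(1-u))^{-1/p}\,\mathrm d\mu.
\]
Letting $\delta\downarrow0$ finishes the proof.

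The delicate points are the left-continuity convention in the Fubini step, namely whether the jump of $F^{-1}$ at $s=u$ is counted, and the fact that $\mu$ charges only measure-zero sets of jump locations. We handle both by enlarging to closed intervals $[s,s+\eta]$, which is harmless for the bound via $\omega_\mu$.
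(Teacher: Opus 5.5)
Your proof is correct, but it takes a different route from the paper's.

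\textbf{The paper's route.} The paper argues by contradiction and compactness. It assumes $\eta_n\downarrow0$ and $P_n\in\mathcal M_p(G_0,\epsilon)$ whose integrals stay above some $a>0$. It then uses \eqref{eq:q_bound} and Helly's selection theorem to extract a pointwise limit $F^{-1}$. The sandwich $P_n^{-1}(v)\le A_{\eta_n}[P_n^{-1}](u)\le P_n^{-1}(u)$ shows that $P_n^{-1}(u)$ and $A_{\eta_n}[P_n^{-1}](u)$ both tend to $F^{-1}(u)$ at continuity points. Non-atomicity is used only to make the countably many discontinuities of $F^{-1}$ $\mu$-null. The proof closes with dominated convergence, using the same envelope $(1+2^{1/p})K(u\wedge(1-u))^{-1/p}$ that you use on the endpoint region.

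\textbf{Your route.} You give a direct estimate instead. On the interior, Tonelli against $\mathrm dF^{-1}$, together with the uniform bound $2K\delta^{-1/p}$ on the variation of $F^{-1}$ over $[\delta,1-\delta]$, reduces everything to the modulus $\omega_\mu(\eta)$ of the continuous distribution function of $\mu$. That is where non-atomicity enters.

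\textbf{What each buys.} The paper's argument is shorter and reuses the Helly machinery already used in Proposition~\ref{prop:isotonic} and Theorem~\ref{thm:regularization}, but it is non-constructive. Yours yields the explicit bound, uniform in $F$,
\[
2K\delta^{-1/p}\omega_\mu(\eta)+(1+2^{1/p})K\int_{(0,2\delta)\cup(1-2\delta,1)}(u\wedge(1-u))^{-1/p}\,\mu(\mathrm du).
\]
It is essentially the Fubini computation the paper performs later for Proposition~\ref{prop:reg_rate}. There $|\nu^c|\le C_0\,\mathrm du$ gives $\omega_\mu(\eta)\le C_0\eta$, and hence a rate.

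\textbf{Two small cleanups.}
\begin{enumerate}
\item For left-continuous $F^{-1}$, the exact identity is $F^{-1}(u)-F^{-1}(u-\eta)=\mathrm dF^{-1}([u-\eta,u))$. So $s$ ranges over $[u-\eta,u)\subset[\delta,1-2\delta)$ and $u$ over $(s,s+\eta]$, which your closed-interval enlargement covers.
\item Your closing remark about $\mu$ not charging jump locations is unnecessary in your route, since non-atomicity is used only through $\omega_\mu(\eta)\to0$.
\end{enumerate}
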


\Needspace{6\baselineskip}
\begin{proof}
Suppose, to the contrary, that there exist $a>0$, $\eta_n\downarrow0$,
and $P_n\in\mathcal M_p(G_0,\epsilon)$ such that
\[
\int_{(0,1)}
|P_n^{-1}(u)-A_{\eta_n}[P_n^{-1}](u)|\,\mu(\mathrm d u)\ge a.
\]
By \eqref{eq:q_bound} and Helly's selection theorem, a subsequence,
still indexed by $n$, satisfies $P_n^{-1}(u)\to F^{-1}(u)$ at every
continuity point of $F^{-1}$. Fatou's lemma gives
$W_p(F,G_0)\le\liminf_nW_p(P_n,G_0)\le\epsilon$.
For continuity points $0<v<u<1$ of $F^{-1}$,
$v<u-\delta_{\eta_n}(u)$ for all sufficiently large $n$. Hence
$P_n^{-1}(v)\le A_{\eta_n}[P_n^{-1}](u)\le P_n^{-1}(u)$.
Letting $n\to\infty$ and then $v\uparrow u$ gives
$A_{\eta_n}[P_n^{-1}](u)\to F^{-1}(u)$.
Because $F^{-1}$ has at most countably many discontinuities and $\mu$
is non-atomic, we have
$|P_n^{-1}(u)-A_{\eta_n}[P_n^{-1}](u)|\to0$ for $\mu$-a.e.~$u\in(0,1)$.
By \eqref{eq:q_bound}, \eqref{eq:avg_bound}, and
$\|P_n^{-1}\|_p\le\|G_0^{-1}\|_p+\epsilon$, we have, for $0<u<1$,
\[
\begin{aligned}
|P_n^{-1}(u)-A_{\eta_n}[P_n^{-1}](u)|\le |P_n^{-1}(u)|+|A_{\eta_n}[P_n^{-1}](u)|\le(1+2^{1/p})(\|G_0^{-1}\|_p+\epsilon)
(u\wedge(1-u))^{-1/p}.
\end{aligned}
\]
Because $\int_{(0,1)}(u\wedge(1-u))^{-1/p}\,\mu(\mathrm d u)<\infty$,
dominated convergence gives the contradiction
\[
0<a\le\lim_{n\to\infty}
\int_{(0,1)}|P_n^{-1}(u)-A_{\eta_n}[P_n^{-1}](u)|
\,\mu(\mathrm d u)=0.\qedhere
\]
\end{proof}

\Needspace{6\baselineskip}
\begin{proof}[Proof of Theorem~\ref{thm:regularization}]
Minkowski's inequality gives
\begin{equation}\label{eq:prop2_bound}
\|\psi_\eta\|_{p'}
\le\int_{(0,1)}\delta_\eta(u)^{-1/p}\,|\nu^c|(\mathrm d u)
+\sum_k\left[|c_k^-|(\delta_{k,\eta}^-)^{-1/p}
+|c_k^+|(\delta_{k,\eta}^+)^{-1/p}\right]<\infty,
\end{equation}
by \eqref{eq:endpoint_integral} and \eqref{eq:reg_widths}.
The definition gives absolute continuity of $h_\eta$, and Fubini's
theorem gives $h_\eta(1)=h(1)$.
When $p=1$, dominated convergence in the bound
\eqref{eq:prop2_bound}, restricted to the endpoint intervals, gives $\lim_{\delta\downarrow0}
\left\|\psi_\eta\mathbf 1_{(0,\delta)\cup(1-\delta,1)}\right\|_\infty=0$.

The jump series
$\sum_k[c_k^-\mathbf1_{(1-\xi_k,1]}
+c_k^+\mathbf1_{[1-\xi_k,1]}]$ converges uniformly because
$\sum_k(|c_k^-|+|c_k^+|)<\infty$.
Subtracting this series from $h$ leaves a continuous function of bounded
variation. Stieltjes integration of this function and the two step
functions gives \eqref{eq:quantile_bv}.
By \eqref{eq:q_bound} and \eqref{eq:endpoint_integral}, we have
$\sup_{F\in\mathcal M_p(G_0,\epsilon)}|\rho_h(F)|<\infty$.
Fubini's theorem and \eqref{eq:avg_bound} give
\begin{equation}
\rho_{h_\eta}(F)
=\int_{(0,1)}A_\eta[F^{-1}](u)\,\nu^c(\mathrm d u)
+\sum_k\left[c_k^-A_{k,\eta}^-[F^{-1}]
+c_k^+A_{k,\eta}^+[F^{-1}]\right].
\label{eq:reg_objective}
\end{equation}
Proposition~\ref{prop:isotonic} gives attainment for each $\eta>0$.

(i)
Fix $\tau\ge \eta$. Apply Lemma~\ref{lem:jump_approx} with the
fixed density
\[
\psi(v)=\int_{(0,1)}
\frac{\mathbf1_{(u-\delta_\tau(u),u]}(v)}{\delta_\tau(u)}
\,\nu^c(\mathrm d u),\qquad 0<v<1.
\]
The estimate \eqref{eq:prop2_bound} with $\eta=\tau$ and $c_k^\pm=0$
gives $\psi\in L^{p'}(0,1)$.
Lemma~\ref{lem:jump_approx} gives
$S_{0,\eta}(\epsilon)\to S_0(\epsilon)$ as $\eta\downarrow0$.
By \eqref{eq:quantile_bv}, \eqref{eq:reg_objective}, and Fubini's theorem, we obtain
\[
\begin{aligned}
&|S_h(\epsilon)-S_0(\epsilon)|
\le\sup_{F\in\mathcal M_p(G_0,\epsilon)}|\rho_h(F)-L_0(F)|\le\sup_{F\in\mathcal M_p(G_0,\epsilon)}
\int_{(0,1)}|F^{-1}-A_\tau[F^{-1}]|\,\mathrm d|\nu^c|,\\
&|S_{h_\eta}(\epsilon)-S_{0,\eta}(\epsilon)|
\le\sup_{F\in\mathcal M_p(G_0,\epsilon)}
|\rho_{h_\eta}(F)-L_{0,\eta}(F)|\le\sup_{F\in\mathcal M_p(G_0,\epsilon)}
\int_{(0,1)}|A_\eta[F^{-1}]-A_\tau[F^{-1}]|\,\mathrm d|\nu^c|.
\end{aligned}
\]
Hence, by Lemma \ref{lem:jump_approx} and Lemma~\ref{lem:nonatomic_approx},
\[\begin{aligned}
\limsup_{\eta\downarrow0}|S_{h_\eta}(\epsilon)-S_h(\epsilon)|
&\le \limsup_{\eta\downarrow0}\left(|S_{h_\eta}(\epsilon)-S_{0,\eta}(\epsilon)|+|S_{0,\eta}(\epsilon)-S_0(\epsilon)|+|S_{h}(\epsilon)-S_{0}(\epsilon)|\right)\\
&\le
2\sup_{F\in\mathcal M_p(G_0,\epsilon)}
\int_{(0,1)}|F^{-1}-A_\tau[F^{-1}]|\,\mathrm d|\nu^c|
\xrightarrow{\tau\downarrow0}0.
\end{aligned}
\]
This proves \eqref{eq:reg_value}.

(ii)
Suppose that $h=\widehat h$, and take any sequence $\eta\downarrow0$.
By \eqref{eq:q_bound} and Helly's selection theorem, there exists a
subsequence $\eta_n\downarrow0$ for which $(P_n^*)^{-1}$ converges to
$(F^*)^{-1}$ at every continuity point of $(F^*)^{-1}$.
Fatou's lemma gives
\[
W_p^p(F^*,G_0)
\le\liminf_{n\to\infty}W_p^p(P_n^*,G_0)\le\epsilon^p.
\]
Thus $F^*\in\mathcal M_p(G_0,\epsilon)$ and
$P_n^*\xrightarrow{\mathrm w}F^*$.
Because $h=\widehat h$, we have $c_k^-\le0\le c_k^+$.
For continuity points $v<\xi_k<w$ of $(F^*)^{-1}$, we eventually have
\[
(P_n^*)^{-1}(v)
\le A_{k,\eta_n}^-[(P_n^*)^{-1}]
\le A_{k,\eta_n}^+[(P_n^*)^{-1}]
\le(P_n^*)^{-1}(w).
\]
Letting $n\to\infty$, then $v\uparrow\xi_k$ and
$w\downarrow\xi_k$, gives
\[
\liminf_n A_{k,\eta_n}^-[(P_n^*)^{-1}]\ge(F^*)^{-1}(\xi_k),
\qquad
\limsup_n A_{k,\eta_n}^+[(P_n^*)^{-1}]\le(F^*)^{-1+}(\xi_k).
\]
Because $c_k^-\le0\le c_k^+$, we have, for every $m$,
\begin{equation}
\begin{aligned}
&\limsup_{n\to\infty}\sum_{k=1}^m
\left[c_k^-A_{k,\eta_n}^-[(P_n^*)^{-1}]
+c_k^+A_{k,\eta_n}^+[(P_n^*)^{-1}]\right]\\
&\le\sum_{k=1}^m\left[
 c_k^-\liminf_n A_{k,\eta_n}^-[(P_n^*)^{-1}]
+c_k^+\limsup_n A_{k,\eta_n}^+[(P_n^*)^{-1}]\right]\le\sum_{k=1}^m
\left[c_k^-(F^*)^{-1}(\xi_k)
+c_k^+(F^*)^{-1+}(\xi_k)\right].
\end{aligned}
\label{eq:jump_limit}
\end{equation}
For the continuous term, Lemma~\ref{lem:nonatomic_approx} and dominated
convergence give
\begin{equation}
\int_{(0,1)}A_{\eta_n}[(P_n^*)^{-1}]\,\mathrm d\nu^c
\to\int_{(0,1)}(F^*)^{-1}\,\mathrm d\nu^c
\qquad\mbox{as }n\to\infty.
\label{eq:cont_limit}
\end{equation}
For the terms with $k>m$, \eqref{eq:avg_bound} gives
\begin{equation}
\begin{aligned}
&\sup_n\left|\sum_{k>m}
\left[c_k^-A_{k,\eta_n}^-[(P_n^*)^{-1}]
+c_k^+A_{k,\eta_n}^+[(P_n^*)^{-1}]\right]\right|\\
&\quad\le 2^{1/p}(\|G_0^{-1}\|_p+\epsilon)
\sum_{k>m}(|c_k^-|+|c_k^+|)
(\xi_k\wedge(1-\xi_k))^{-1/p}\to0
\qquad\mbox{as }m\to\infty.
\end{aligned}
\label{eq:jump_tail}
\end{equation}
Combining \eqref{eq:reg_objective}--\eqref{eq:jump_tail} with
\eqref{eq:q_bound}, \eqref{eq:quantile_bv}, and \eqref{eq:endpoint_integral} gives
\[
\begin{aligned}
\limsup_{n\to\infty}\rho_{h_{\eta_n}}(P_n^*)
\le{}&\int_{(0,1)}(F^*)^{-1}\,\mathrm d\nu^c
+\sum_{k=1}^m\left[c_k^-(F^*)^{-1}(\xi_k)
+c_k^+(F^*)^{-1+}(\xi_k)\right]\\
&+2^{1/p}(\|G_0^{-1}\|_p+\epsilon)
\sum_{k>m}(|c_k^-|+|c_k^+|)
(\xi_k\wedge(1-\xi_k))^{-1/p}\to\rho_h(F^*)~~\mbox{as}~m\to\infty.
\end{aligned}
\]
Because $\rho_{h_{\eta_n}}(P_n^*)=S_{h_{\eta_n}}(\epsilon)$ and
$F^*\in\mathcal M_p(G_0,\epsilon)$, \eqref{eq:reg_value} now gives
\[
S_h(\epsilon)=\lim_{n\to\infty}S_{h_{\eta_n}}(\epsilon)
=\lim_{n\to\infty}\rho_{h_{\eta_n}}(P_n^*)
\le\rho_h(F^*)\le S_h(\epsilon).
\]
Thus $F^*$ is the optimizer for $S_h(\epsilon)$.

(iii)
If $h(1)\ne 0$, suppose for contradiction that $W_p(F^*,G_0)<\epsilon$. Choose $c$ with
$c\,h(1)>0$ and $|c|<\epsilon-W_p(F^*,G_0)$.
The distribution with $F^{-1}=(F^*)^{-1}+c$ satisfies
\[
W_p(F,G_0)\le W_p(F^*,G_0)+|c|<\epsilon,
\qquad
\rho_h(F)=S_h(\epsilon)+c\,h(1)>S_h(\epsilon),
\]
which leads to a contradiction. Hence every optimizer satisfies $W_p(F^*,G_0)=\epsilon$.
Fatou's lemma gives
\begin{equation}\label{eq:thm2_fatou}
\epsilon=W_p(F^*,G_0)
\le\liminf_n W_p(P_n^*,G_0)
\le\limsup_n W_p(P_n^*,G_0)\le\epsilon.
\end{equation}
Because
\[
|(P_n^*)^{-1}-(F^*)^{-1}|^p\le 2^{p-1}\left(
|(P_n^*)^{-1}-G_0^{-1}|^p+|(F^*)^{-1}-G_0^{-1}|^p\right),
\]
Fatou's lemma and \eqref{eq:thm2_fatou} give
\[
\begin{aligned}
2^pW_p^p(F^*,G_0)
\le\liminf_{n\to\infty}\Bigl\{
2^{p-1}\bigl[W_p^p(P_n^*,G_0)+W_p^p(F^*,G_0)\bigr]
-W_p^p(P_n^*,F^*)\Bigr\}=2^p\epsilon^p-\limsup_{n\to\infty}W_p^p(P_n^*,F^*).
\end{aligned}
\]
Rearranging yields
\[
0\le\limsup_{n\to\infty}W_p^p(P_n^*,F^*)
\le2^p\epsilon^p-2^pW_p^p(F^*,G_0)=0,
\]
and hence $W_p(P_n^*,F^*)\to0$.

(iv)
If also $p>1$, suppose for contradiction that $P_1,P_2$ are distinct optimizers for $S_h(\epsilon)$. Define
$F^{-1}=(P_1^{-1}+P_2^{-1})/2$. Linearity and strict convexity give
\[
\rho_h(F)=\tfrac12\rho_h(P_1)+\tfrac12\rho_h(P_2)=S_h(\epsilon),
\qquad
W_p^p(F,G_0)
<\tfrac12W_p^p(P_1,G_0)+\tfrac12W_p^p(P_2,G_0)=\epsilon^p,
\]
contradicting the equality for every optimizer. Thus $F^*$ is unique.
If $W_p(P_{\lambda_\eta^*},F^*)$ did not tend to zero as $\eta\downarrow0$,
there would exist $c>0$ and a sequence $\eta_n\downarrow0$ such that
$W_p(P^*_n,F^*)\ge c$ for all $n\ge1$.
By (iii) and uniqueness of $F^*$, there exists a subsequence $\eta_{n_j}\downarrow0$ with $W_p(P^*_{n_j},F^*)\to0$ as $j\to\infty$,
leading to a contradiction. Hence,
$W_p(P_{\lambda_\eta^*},F^*)\to0$ as $\eta\downarrow0$.
\end{proof}

\begin{proof}[Proof of Proposition~\ref{prop:reg_rate}]
Because the continuous part of $h$ is Lipschitz, there exists
$C_0\in\R_+$ such that
$|\nu^c|(\mathrm d u)\le C_0\,\mathrm d u$.
All constants below are fixed and independent of $\eta$ and $F$.

(i) For $p>1$, \eqref{eq:endpoint_integral} follows from
\[
\int_0^1(u\wedge(1-u))^{-1/p}\,|\nu^c|(\mathrm d u)
\le 2C_0\int_0^{1/2}u^{-1/p}\,\mathrm d u<\infty
\]
and the fact that $h$ has finitely many discontinuity points.
For $F\in\mathcal M_p$, monotonicity and Fubini's theorem give
\[
\begin{aligned}
\int_{[2\eta,1-2\eta]}|F^{-1}-A_\eta[F^{-1}]|\,\mathrm d|\nu^c|
&\le C_0\int_{2\eta}^{1-2\eta}
\bigl[F^{-1}(u)-F^{-1}(u-\eta)\bigr]\,\mathrm d u\\
&\le C_0\eta\bigl[F^{-1+}(1-2\eta)-F^{-1}(\eta)\bigr]\le C_0(1+2^{-1/p})\|F^{-1}\|_p\eta^{1-1/p},
\end{aligned}
\]
where the last inequality follows from \eqref{eq:q_bound}.
On $(0,2\eta)\cup(1-2\eta,1)$, \eqref{eq:q_bound} and
\eqref{eq:avg_bound} bound the integral by
\[
2C_0(1+2^{1/p})\|F^{-1}\|_p\int_0^{2\eta}u^{-1/p}\,\mathrm d u
=\frac{2C_0(1+2^{1/p})2^{1-1/p}}{1-1/p}
\|F^{-1}\|_p\eta^{1-1/p}.
\]
Consequently, there exists $C_1\in\R_+$ such that
\begin{equation}
\int^1_0|F^{-1}-A_\eta[F^{-1}]|\,\mathrm d|\nu^c|
\le C_1\|F^{-1}\|_p\eta^{1-1/p},
\qquad F\in\mathcal M_p.
\label{eq:avg_rate}
\end{equation}

Choose $a>0$ so that all discontinuity points
$\xi_1,\ldots,\xi_m$ lie in $[2a,1-2a]$.
Take $\eta<a$ sufficiently small that
$(\xi_k-\eta,\xi_k+\eta)$, $1\le k\le m$, are disjoint.
For each feasible $F$, define $\bar P_\eta$ and $\widetilde P_\eta$
by the following quantile functions, for $0<u<1$:
\[
\bar P_\eta^{-1}(u)=
\begin{cases}
A_{k,\eta}^-[F^{-1}],&\xi_k-\eta<u\le\xi_k,\quad k=1,\ldots,m,\\
A_{k,\eta}^+[F^{-1}],&\xi_k<u\le\xi_k+\eta,\quad k=1,\ldots,m,\\
F^{-1}(u),&\text{otherwise},
\end{cases}
\]
\[
\widetilde P_\eta^{-1}(u)=
\begin{cases}
F^{-1}(\xi_k),&\xi_k-\eta<u\le\xi_k,\quad k=1,\ldots,m,\\
F^{-1+}(\xi_k),&\xi_k<u\le\xi_k+\eta,\quad k=1,\ldots,m,\\
F^{-1}(u),&\text{otherwise}.
\end{cases}
\]
These functions are nondecreasing and left-continuous. Their jump
contributions satisfy, respectively,
\begin{equation}\label{eq:prop3_iden}
\begin{aligned}
c_k^-\bar P_\eta^{-1}(\xi_k)+c_k^+\bar P_\eta^{-1+}(\xi_k)
&=c_k^-A_{k,\eta}^-[F^{-1}]+c_k^+A_{k,\eta}^+[F^{-1}],\\
c_k^-A_{k,\eta}^-[\widetilde P_\eta^{-1}]
+c_k^+A_{k,\eta}^+[\widetilde P_\eta^{-1}]
&=c_k^-F^{-1}(\xi_k)+c_k^+F^{-1+}(\xi_k).
\end{aligned}
\end{equation}
By \eqref{eq:q_bound}, the quantiles $F^{-1}$, $\bar P^{-1}_\eta$, $\widetilde P^{-1}_\eta$, and
$G_0^{-1}$ are uniformly bounded on $[a,1-a]$.
Because the disjoint intervals $(\xi_k-\eta,\xi_k+\eta)$, $1\le k\le m$, have total length $2m\eta$, there exists $C_2\in\R_+$ such that, for $P\in\{\bar P_\eta, \widetilde P_\eta\}$,
\begin{equation}
W_p^p(P,G_0)\le\epsilon^p+C_2\eta,
\qquad \int|P^{-1}-F^{-1}|\,\mathrm d|\nu^c|\le C_2\eta.
\label{eq:reg_replacements}
\end{equation}
For sufficiently small $\eta$, this gives
$\|P^{-1}\|_p\le\|G_0^{-1}\|_p+\epsilon+1$.
Combining \eqref{eq:avg_rate}, \eqref{eq:prop3_iden}, and \eqref{eq:reg_replacements}, gives
\[
\begin{aligned}
\max\left\{
|\rho_h(\bar P_\eta)-\rho_{h_\eta}(F)|,
|\rho_{h_\eta}(\widetilde P_\eta)-\rho_h(F)|
\right\}\le\bigl[C_1(\|G_0^{-1}\|_p+\epsilon+1)+C_2\bigr]\eta^{1-1/p}.
\end{aligned}
\]
Define a feasible distribution $P^\circ$
by
\[
(P^\circ)^{-1}
=G_0^{-1}
+\frac{\epsilon}{\max\{\epsilon,W_p(P,G_0)\}}
(P^{-1}-G_0^{-1}).
\]
Then
\[
\begin{aligned}
W_p(P^\circ,G_0)\le\epsilon,\qquad
W_p(P^\circ,P)=(W_p(P,G_0)-\epsilon)_+
\le\frac{C_2}{p\epsilon^{p-1}}\eta,\qquad 0\le1-\frac{\epsilon}{\max\{\epsilon,W_p(P,G_0)\}}
\le\frac{C_2}{p\epsilon^p}\eta.
\end{aligned}
\]
By \eqref{eq:q_bound}, \eqref{eq:avg_bound}, and
\eqref{eq:endpoint_integral}, both $\rho_h$ and $\rho_{h_\eta}$ are uniformly bounded
at $P$ and $G_0$. Their quantile representations are linear in the
factor $\epsilon/\max\{\epsilon,W_p(P,G_0)\}$, so there exists
$C_3\in\R_+$ such that
\[
|\rho_h(P^\circ)-\rho_h(P)|
+|\rho_{h_\eta}(P^\circ)-\rho_{h_\eta}(P)|\le C_3\eta.
\]
It follows that, for every feasible $F$,
\[
\begin{aligned}
\max\left\{\rho_{h_\eta}(F)-S_h(\epsilon),
\rho_h(F)-S_{h_\eta}(\epsilon)\right\}\le\bigl[C_1(\|G_0^{-1}\|_p+\epsilon+1)+C_2+C_3\bigr]\eta^{1-1/p}.
\end{aligned}
\]
Taking suprema proves \eqref{eq:reg_rate}, where $C$ can be taken as $C=C_1(\|G_0^{-1}\|_p+\epsilon+1)+C_2+C_3$.

(ii) Choose $\delta\in(0,1/2)$ so that $h'$
is Lipschitz on $(0,\delta)$ and $(1-\delta,1)$.
For $p=1$, \eqref{eq:endpoint_integral} gives
\[
\int_0^\delta\frac{|h'(t)|}{t}\,\mathrm d t
+\int_{1-\delta}^1\frac{|h'(t)|}{1-t}\,\mathrm d t<\infty.
\]
Continuity of $h'$ at the endpoints therefore implies
$h'(0+)=h'(1-)=0$. For $p>1$, \eqref{eq:endpoint_integral} can be proved by part (i), and the conditions $h'(0+)=h'(1-)=0$ are assumed.
Choose $C_4\ge C_0/\delta$ such that 
$|h'(s)-h'(t)|\le C_4|s-t|$ for all $s,t\in(0,\delta)$ and $s,t\in(1-\delta,1)$.
Together with
$|\nu^c|(\mathrm d u)\le C_0\,\mathrm d u$ on $[\delta,1-\delta]$, this gives
\[
|\nu^c|(\mathrm d u)\le C_4(u\wedge(1-u))\,\mathrm d u.
\]
For $F\in\mathcal M_p$, Fubini's theorem gives
\[
\begin{aligned}
\int|F^{-1}-A_\eta[F^{-1}]|\,\mathrm d|\nu^c|
&\le C_4\int_0^1(u\wedge(1-u))
\bigl[F^{-1}(u)-F^{-1}(u-\delta_\eta(u))\bigr]\,\mathrm d u\\
&\le C_4\int_{(0,1)}
\left[\int_v^{(v+\eta)\wedge (2v) \wedge 1}(u\wedge(1-u))\,\mathrm d u\right]
\mathrm d F^{-1}(v)\\
&\le 2C_4\eta\int_{(0,1)}(v\wedge(1-v))\,\mathrm d F^{-1}(v)
\le 2C_4\eta\|F^{-1}\|_1.
\end{aligned}
\]
For $F\in\mathcal M_p(G_0,\epsilon)$, we have
$\|F^{-1}\|_1\le\|G_0^{-1}\|_p+\epsilon$, and hence
\[
\sup_{F\in\mathcal M_p(G_0,\epsilon)}
\int|F^{-1}-A_\eta[F^{-1}]|\,\mathrm d|\nu^c|\le 2C_4(\|G_0^{-1}\|_p+\epsilon)\eta.
\]
The remaining arguments to prove \eqref{eq:reg_linear} follow those in part (i), where $C$ can be taken as $C=2C_4(\|G_0^{-1}\|_p+\epsilon+1)+C_2+C_3$ for part (ii).
\end{proof}

\phantomsection
\label{app:grid}
\begin{proof}[Proof of Proposition~\ref{prop:step_approx}]
The vector $x_i=G_N^{-1}(t_i)$ is feasible for \eqref{eq:grid_problem},
and every feasible vector satisfies
\[
\Delta_i|x_i|^p\le(\|G_N^{-1}\|_p+\epsilon)^p,
\qquad i=1,\ldots,N.
\]
Thus the feasible set is nonempty and compact, and the linear objective
attains its maximum.
For $F\in\mathcal M_p$, define $P_N$ by
\[
P_N^{-1}(u)=\frac1{\Delta_i}\int_{I_i}F^{-1}(v)\,\mathrm d v,
\qquad u\in I_i.
\]
Because $\psi_{h_N}$ and $G_N^{-1}$ are constant on each $I_i$, we have
$\rho_{h_N}(P_N)=\rho_{h_N}(F)$ and $W_p(P_N,G_N)\le W_p(F,G_N)$, where the inequality follows from Jensen's inequality.
Hence $S_{h,N}(\epsilon)=\max_{F\in\M_p(G_N,\epsilon)}\rho_{h_N}(F)$.

We next establish convergence as the mesh tends to zero.
For every continuous function $\varphi:[0,1]\to\mathbb R$ and $u\in I_i$,
\[
\begin{aligned}
P_N^{-1}(u)-F^{-1}(u)
=\frac1{\Delta_i}\int_{I_i}[F^{-1}(v)-\varphi(v)]\,\mathrm d v+\frac1{\Delta_i}\int_{I_i}[\varphi(v)-\varphi(u)]\,\mathrm d v
+\varphi(u)-F^{-1}(u).
\end{aligned}
\]
By Minkowski's inequality and Jensen's inequality,
\[
\begin{aligned}
W_p(P_N,F)
\le{}&\left(\sum_i\Delta_i\left|
\frac1{\Delta_i}\int_{I_i}[F^{-1}(v)-\varphi(v)]\,\mathrm d v
\right|^p\right)^{1/p}+\sup_{|u-v|\le\max_i\Delta_i}
|\varphi(u)-\varphi(v)|+\|\varphi-F^{-1}\|_p\\
\le{}&2\|F^{-1}-\varphi\|_p
+\sup_{|u-v|\le\max_i\Delta_i}
|\varphi(u)-\varphi(v)|.
\end{aligned}
\]
The supremum tends to zero by uniform continuity of $\varphi$.
Because continuous functions on $[0,1]$ are dense in $L^p(0,1)$,
$\|F^{-1}-\varphi\|_p$ can be made arbitrarily small.
Applying the same argument with $F=G_0$ gives, for $1\le p<\infty$,
\begin{equation}
W_p(P_N,F)\to0,\qquad W_p(G_N,G_0)\to0
\qquad\mbox{as }\max_i\Delta_i\to0.
\label{eq:cell_approx}
\end{equation}
For $F\in\mathcal M_p(G_0,\epsilon)$, Jensen's inequality gives
$W_p(P_N,G_N)\le\epsilon$. Hence, we have
\[
\liminf_{N\to\infty}S_{h,N}(\epsilon)
\ge\lim_{N\to\infty}\rho_h(P_N)=\rho_h(F),
\]
where $|\rho_h(P_N)-\rho_h(F)|\le\|\psi_h\|_{p'}W_p(P_N,F)\to0$.
Taking the supremum over $F\in\mathcal M_p(G_0,\epsilon)$ gives
$\liminf_N S_{h,N}(\epsilon)\ge S_h(\epsilon)$.

For a grid optimizer $P_{\mathbf x_N^*}$, define
\[
\widehat P_N^{-1}
=G_0^{-1}
+\frac{\epsilon}{\epsilon+W_p(G_N,G_0)}
(P_{\mathbf x_N^*}^{-1}-G_0^{-1}).
\]
Then $W_p(\widehat P_N,G_0)\le\epsilon$ and
$W_p(\widehat P_N,P_{\mathbf x_N^*})\le W_p(G_N,G_0)$.
Therefore,
\[
S_{h,N}(\epsilon)
=\rho_h(P_{\mathbf x_N^*})
\le S_h(\epsilon)+\|\psi_h\|_{p'}W_p(G_N,G_0).
\]
Together with \eqref{eq:cell_approx}, this proves the desired limit.
\end{proof}

\subsection{Proofs for Section~\ref{sec:gap_bounds}}
\label{app:gap}

\begin{proof}[Proof of Proposition~\ref{prop:bench_gap}]
Take either $r=r^*$ or
$r=P_n^{-1}-G_0^{-1}$. Each shift is nondecreasing and constant on
every $C\in\mathcal I_h$. For $U\sim\U[0,1]$, the distribution
$F_{r(U)}$ therefore equals its $\mathcal I_h$-concentration.
Lemma~\ref{lem:concentration} and comonotonic additivity give
\[
\rho_h(G_0^{-1}(U)+r(U))
=\rho_h(G_0)+\rho_h(r(U))
=\rho_h(G_0)+\int_0^1r(u)\psi_{h^*}(u)\,\mathrm d u.
\]
The last integral equals $\epsilon\|\psi_{h^*}\|_{p'}$ for $r=r^*$
and converges to that value for the sequence of shifts.
Thus \eqref{eq:candidate_value} holds, with its stated limiting
interpretation, and feasibility together with $h\le h^*$ yields
\[
\rho_h(G_0)+\epsilon\|\psi_{h^*}\|_{p'}
\le S_h(\epsilon)\le
\rho_{h^*}(G_0)+\epsilon\|\psi_{h^*}\|_{p'}.
\]
This proves both benchmark bounds.
\end{proof}

\begin{proof}[Proof of Theorem~\ref{thm:radius_bound}]
(i) For every $C\in\mathcal I_h$, $\psi_{h^*}$ and $r^*$ are constant
Lebesgue-a.e.~on $C$, and
\[
\int_C\bigl[(G_0^{\mathcal I_h})^{-1}(u)-G_0^{-1}(u)\bigr]\,\mathrm d u=0.
\]
Outside $\bigcup_{C\in\mathcal I_h}C$, the two benchmark quantiles agree. It follows that
\begin{equation}
\int_0^1\bigl[(G_0^{\mathcal I_h})^{-1}(u)-G_0^{-1}(u)\bigr]
\psi_{h^*}(u)\,\mathrm d u=0;
\label{eq:radial_identities}
\end{equation}
and by Jensen's inequality, we have, for each $C$,
\[
\begin{aligned}
&\int_C\bigl|(G_0^{\mathcal I_h})^{-1}(u)-G_0^{-1}(u)+r^*(u)\bigr|^p
\,\mathrm d u\\
&\quad\ge\lambda(C)\left|
\frac1{\lambda(C)}\int_C\bigl[(G_0^{\mathcal I_h})^{-1}(u)-G_0^{-1}(u)+r^*(u)\bigr]
\,\mathrm d u\right|^p
=\int_C|r^*(u)|^p\,\mathrm d u.
\end{aligned}
\]
Because $((F^*)^{\mathcal I_h})^{-1}=(G_0^{\mathcal I_h})^{-1}+r^*$, we have
\begin{equation*}
\begin{aligned}
W_p((F^*)^{\mathcal I_h},G_0)=\|(G_0^{\mathcal I_h})^{-1}-G_0^{-1}+r^*\|_p\ge\epsilon.
\end{aligned}
\end{equation*}
Thus \eqref{eq:radial_candidate} gives
\[
\begin{aligned}
&P_\epsilon^{-1}
=\left(1-\frac{\epsilon}{W_p((F^*)^{\mathcal I_h},G_0)}\right)G_0^{-1}
+\frac{\epsilon}{W_p((F^*)^{\mathcal I_h},G_0)}
((F^*)^{\mathcal I_h})^{-1},\\
&W_p(P_\epsilon,G_0)
=\frac{\epsilon}{W_p((F^*)^{\mathcal I_h},G_0)}
\|((F^*)^{\mathcal I_h})^{-1}-G_0^{-1}\|_p=\epsilon.
\end{aligned}
\]
Both coefficients in the first line are nonnegative and sum to one,
so $P_\epsilon^{-1}$ is nondecreasing. Since $\bigl((F^*)^{\mathcal I_h}\bigr)^{\mathcal I_h}=(F^*)^{\mathcal I_h}$,
applying Lemma~\ref{lem:concentration} to $\{(F^*)^{\mathcal I_h}\}$ gives
\[
\rho_h((F^*)^{\mathcal I_h})
=\rho_{h^*}((F^*)^{\mathcal I_h})
=\rho_{h^*}(F^*)=S_{h^*}(\epsilon).
\]
Comonotonic additivity and
\eqref{eq:bench_hstar} give \eqref{eq:radial_value}.
Because $\rho_h(P_\epsilon)\le S_h(\epsilon)\le S_{h^*}(\epsilon)$,
\eqref{eq:radial_value} proves \eqref{eq:radial_gap}.
For $p=1$, $P_n$ in \eqref{eq:bench_sequence} has
$P_n^{-1}-G_0^{-1}$ constant on each $C\in\mathcal I_h$.
The preceding Jensen inequality therefore gives
$W_1(P_n^{\mathcal I_h},G_0)\ge W_1(P_n,G_0)=\epsilon$ and
$W_1(P_{\epsilon,n},G_0)=\epsilon$.
Hence, we have
\[
\rho_h(P_{\epsilon,n})\le S_h(\epsilon)\le S_{h^*}(\epsilon)
=\rho_{h^*}(G_0)+\epsilon\|\psi_{h^*}\|_\infty.
\]

For $p>1$, \eqref{eq:bench_shift} gives
$\operatorname{sgn}(r^*)|r^*|^{p-1}
=\epsilon^{p-1}\psi_{h^*}/\|\psi_{h^*}\|_{p'}$.
By \eqref{eq:radial_identities}, we obtain
\[
\int_0^1\operatorname{sgn}(r^*(u))|r^*(u)|^{p-1}
\bigl[(G_0^{\mathcal I_h})^{-1}(u)-G_0^{-1}(u)\bigr]\,\mathrm d u=0.
\]
There exists $C_p<\infty$ such that, for all
$x,y\in\mathbb R$,
\[
0\le |x+y|^p-|x|^p-p\operatorname{sgn}(x)|x|^{p-1}y
\le C_p
\begin{cases}
|y|^p,&1<p\le2,\\
|x|^{p-2}|y|^2+|y|^p,&p>2.
\end{cases}
\]
Indeed, for $1<p\le2$, this follows by integrating
$\bigl|\operatorname{sgn}(x+ty)|x+ty|^{p-1}
-\operatorname{sgn}(x)|x|^{p-1}\bigr|\le2^{2-p}|ty|^{p-1}$
over $0\le t\le1$. For $p>2$, apply Taylor's integral remainder and
$|x+ty|^{p-2}\le2^{(p-3)_+}(|x|^{p-2}+|y|^{p-2})$.
Apply these inequalities with $x=r^*(u)$ and
$y=(G_0^{\mathcal I_h})^{-1}(u)-G_0^{-1}(u)$.
Integration, \eqref{eq:radial_identities}, and H\"older's inequality yield
\[
0\le W_p^p((F^*)^{\mathcal I_h},G_0)-\epsilon^p
\le C_p
\begin{cases}
W_p^p(G_0^{\mathcal I_h},G_0),&1<p\le2,\\
\epsilon^{p-2}W_p^2(G_0^{\mathcal I_h},G_0)
+W_p^p(G_0^{\mathcal I_h},G_0),&p>2.
\end{cases}
\]
Consequently, for $\epsilon\ge1$, we have
\[
0\le W_p((F^*)^{\mathcal I_h},G_0)-\epsilon
\le\frac{W_p^p((F^*)^{\mathcal I_h},G_0)-\epsilon^p}{p\epsilon^{p-1}}
=O\!\left(\epsilon^{-\min\{p-1,1\}}\right).
\]
Because $W_p((F^*)^{\mathcal I_h},G_0)\ge\epsilon$, we have
\[
\begin{aligned}
&\left(1-\frac{\epsilon}{W_p((F^*)^{\mathcal I_h},G_0)}\right)
\bigl[\rho_{h^*}(G_0)-\rho_h(G_0)+\epsilon\|\psi_{h^*}\|_{p'}\bigr]\\
&\quad\le\bigl[W_p((F^*)^{\mathcal I_h},G_0)-\epsilon\bigr]
\left[\frac{\rho_{h^*}(G_0)-\rho_h(G_0)}{\epsilon}
+\|\psi_{h^*}\|_{p'}\right]
=O\!\left(\epsilon^{-\min\{p-1,1\}}\right),
\end{aligned}
\]
which proves the $O(\cdot)$-bound in \eqref{eq:radial_gap} for
$\epsilon\ge1$. For $0<\epsilon\le1$, we have
\[
\begin{aligned}
&\left(1-\frac{\epsilon}{W_p((F^*)^{\mathcal I_h},G_0)}\right)
\bigl[\rho_{h^*}(G_0)-\rho_h(G_0)+\epsilon\|\psi_{h^*}\|_{p'}\bigr]\\
&\quad\le\rho_{h^*}(G_0)-\rho_h(G_0)+\|\psi_{h^*}\|_{p'}\\
&\quad\le\bigl[\rho_{h^*}(G_0)-\rho_h(G_0)+\|\psi_{h^*}\|_{p'}\bigr]
\epsilon^{-\min\{p-1,1\}}.
\end{aligned}
\]
Thus this bound holds for every $\epsilon>0$.

Finally, let $\epsilon_n\downarrow0$ and choose
$\widetilde P_n\in\mathcal M_p(G_0,\epsilon_n)$ such that
$S_h(\epsilon_n)-1/n\le\rho_h(\widetilde P_n)\le S_h(\epsilon_n)$.
H\"older's inequality gives
\[
|\rho_{h^*}(\widetilde P_n)-\rho_{h^*}(G_0)|
\le\epsilon_n\|\psi_{h^*}\|_{p'}\to 0.
\]
Because $W_p(\widetilde P_n,G_0)\le\epsilon_n\to0$, we have
$\widetilde P_n\xrightarrow{\mathrm{w}} G_0$.
The function $h^*-h$ is nonnegative and lower semicontinuous.
Fatou's lemma therefore gives
\[
\begin{aligned}
\rho_{h^*}(G_0)-\rho_h(G_0)
&=\int_{\mathbb R}\bigl[h^*(1-G_0(x))-h(1-G_0(x))\bigr]\,\mathrm d x\\
&\le\liminf_{n\to\infty}\int_{\mathbb R}
\bigl[h^*(1-\widetilde P_n(x))-h(1-\widetilde P_n(x))\bigr]\,\mathrm d x=\liminf_{n\to\infty}
\bigl[\rho_{h^*}(\widetilde P_n)-\rho_h(\widetilde P_n)\bigr].
\end{aligned}
\]
Consequently,
\[
\begin{aligned}
\limsup_{n\to\infty}S_h(\epsilon_n)=\limsup_{n\to\infty}\rho_h(\widetilde P_n)=\rho_{h^*}(G_0)-\liminf_{n\to\infty}
\bigl[\rho_{h^*}(\widetilde P_n)-\rho_h(\widetilde P_n)\bigr]
\le\rho_h(G_0).
\end{aligned}
\]
Because $S_h(\epsilon)\ge\rho_h(G_0)$, we conclude that
$S_h(\epsilon)\to\rho_h(G_0)$ as $\epsilon\downarrow0$.
By \eqref{eq:radial_value},
\[
0\le S_h(\epsilon)-\rho_h(P_\epsilon)
\le S_h(\epsilon)-\rho_h(G_0)\to 0,
\qquad\epsilon\downarrow0.
\]

(ii) For absolutely continuous $h$ with $\psi_h\in L^{p'}(0,1)$,
H\"older's inequality and \eqref{eq:radial_value} give
\[
0\le S_h(\epsilon)-\rho_h(P_\epsilon)
\le\rho_h(G_0)+\epsilon\|\psi_h\|_{p'}-\rho_h(P_\epsilon)
\le\epsilon\|\psi_h\|_{p'},
\]
which proves \eqref{eq:radial_holder}.
\end{proof}

\subsection{Proofs for Section~\ref{sec:portfolio}}
\label{app:portfolio}

For $m\in\mathbb R$, $s>0$, and $X,Y\in L^p$, the quantile
representation gives
\[
W_p(F_{m+sX},F_{m+sY})=sW_p(F_X,F_Y),\qquad
\rho_h(m+sX)=mh(1)+s\rho_h(X).
\]
Take $m=\bm\omega^\top\bm\mu$, $s=s_{\bm\omega}>0$, and
$F_Y=P_\varphi$. Then
\[
F_{m+sX}\in
\mathcal M_p(G_{0,\bm\omega},\epsilon\|\bm\omega\|_{r'})
\quad\Longleftrightarrow\quad
F_X\in\mathcal M_p\!\left(P_\varphi,
\frac{\epsilon\|\bm\omega\|_{r'}}{s_{\bm\omega}}\right).
\]

\begin{proof}[Proof of Proposition~\ref{prop:outer_flatness}]
For $\bm\omega\in\mathcal W$, $\bm\Sigma\succ0$ and
$\bm\omega\ne0$ imply $s_{\bm\omega}>0$. Because
$G_{0,\bm\omega}^{-1}
=\bm\omega^\top\bm\mu+s_{\bm\omega}P_\varphi^{-1}$,
it follows that $G_{0,\bm\omega}^{-1}$ is flat on $\mathcal I_h$ if and only if
$P_\varphi^{-1}$ is flat on $\mathcal I_h$. Under flatness,
Theorem~\ref{thm:flatness} gives
$V_h(\bm\omega)=V_{h^*}(\bm\omega)$ for every
$\bm\omega\in\mathcal W$, so the minima and minimizers coincide.

Otherwise, continuity of \eqref{eq:elliptical_value} and compactness
of $\mathcal W$ give a minimizer $\bm\omega^{\mathrm{env}}$ of $V_{h^*}$.
For $p>1$, Theorem~\ref{thm:flatness}(i) yields
\[
\inf_{\bm\omega\in\mathcal W}V_h(\bm\omega)
\le V_h(\bm\omega^{\mathrm{env}})
<V_{h^*}(\bm\omega^{\mathrm{env}})
=\min_{\bm\omega\in\mathcal W}V_{h^*}(\bm\omega).
\]
For $p=1$, Theorem~\ref{thm:flatness}(ii) gives $\delta_0>0$ such that
$S_h(\delta)<S_{h^*}(\delta)$ for $0<\delta<\delta_0$.
If
\[
0<\epsilon<
\frac{\delta_0}{\displaystyle
\max_{\bm\omega\in\mathcal W}\|\bm\omega\|_{r'}/s_{\bm\omega}},
\]
then $0<\epsilon\|\bm\omega\|_{r'}/s_{\bm\omega}<\delta_0$ for every
$\bm\omega\in\mathcal W$. By \eqref{eq:elliptical_inner},
$V_h(\bm\omega^{\mathrm{env}})<V_{h^*}(\bm\omega^{\mathrm{env}})$, and the
same chain of inequalities proves (ii).
\end{proof}

\begin{proof}[Proof of Proposition~\ref{prop:outer_reg}]
For $P_i\in\mathcal M_p(P_\varphi,\delta_i)$, $i=1,2$, and
$\theta\in[0,1]$, define
$P_\theta^{-1}=(1-\theta)P_1^{-1}+\theta P_2^{-1}$.
Minkowski's inequality and comonotonic additivity give
\[
W_p(P_\theta,P_\varphi)\le(1-\theta)\delta_1+\theta\delta_2,
\qquad
\rho_h(P_\theta)=(1-\theta)\rho_h(P_1)+\theta\rho_h(P_2).
\]
Taking suprema yields
\[
S_h((1-\theta)\delta_1+\theta\delta_2)
\ge(1-\theta)S_h(\delta_1)+\theta S_h(\delta_2).
\]
The same argument applies to $h_\eta$. Both value functions are
finite and nondecreasing, and concavity makes them continuous on
$(0,\infty)$.
Because $\bm\Sigma\succ0$, compactness of $\{\bm\omega:\|\bm\omega\|_{r'}=1\}$
gives
\[
0<a:=\min_{\|\bm\omega\|_{r'}=1}\frac{\epsilon}{s_{\bm\omega}}
\le\frac{\epsilon\|\bm\omega\|_{r'}}{s_{\bm\omega}}
\le \max_{\|\bm\omega\|_{r'}=1}\frac{\epsilon}{s_{\bm\omega}}=:b<\infty.
\]
By \eqref{eq:elliptical_inner}, $V_h$ is continuous on $\bm\omega\ne 0$, and
\[
|V_h(\bm\omega)|
\le |h(1)|\,|\bm\omega^\top\bm\mu|
+s_{\bm\omega}\max_{\delta\in[a,b]}|S_h(\delta)|
\to0\qquad\mbox{as }\bm\omega\to0.
\]
The same holds for $V_{h_\eta}$. Hence both minima on compact
$\mathcal W$ are attained.

For (i), concavity gives
\begin{equation}\label{eq:prop4_concave}
S_{h_\eta}(t)\le
S_{h_\eta}(\delta)+(t-\delta)\partial_+S_{h_\eta}(\delta),
\qquad t,\delta>0,
\end{equation}
with equality at $\delta=t$.
For each $\bm\omega\ne0$, take
$t=\epsilon\|\bm\omega\|_{r'}/s_{\bm\omega}$.
Because $s_{\bm\omega}>0$, \eqref{eq:elliptical_inner} and \eqref{eq:prop4_concave} give
\[
\begin{aligned}
V_{h_\eta}(\bm\omega)
=h(1)\bm\omega^\top\bm\mu+s_{\bm\omega}S_{h_\eta}(t)=\min_{\delta>0}\Bigl\{
 h(1)\bm\omega^\top\bm\mu
+[S_{h_\eta}(\delta)-\delta\partial_+S_{h_\eta}(\delta)]s_{\bm\omega}
+\epsilon\partial_+S_{h_\eta}(\delta)\|\bm\omega\|_{r'}\Bigr\}.
\end{aligned}
\]
The minimum identity also holds for $\bm\omega=0$, with both sides being zero.
Taking the minimum over $\bm\omega\in\mathcal W$ gives
\eqref{eq:reg_outer_solution}.
For a minimizing pair $(\delta_\eta^*,\bm\omega_\eta^*)$,
\[
\begin{aligned}
V_{h_\eta}(\bm\omega_\eta^*)
\le h(1)(\bm\omega_\eta^*)^\top\bm\mu
+[S_{h_\eta}(\delta_\eta^*)-\delta_\eta^*
\partial_+S_{h_\eta}(\delta_\eta^*)]s_{\bm\omega_\eta^*}+\epsilon\,\partial_+S_{h_\eta}(\delta_\eta^*)
\|\bm\omega_\eta^*\|_{r'}
=\min_{\bm\omega\in\mathcal W}V_{h_\eta}(\bm\omega).
\end{aligned}
\]
For (ii), Theorem~\ref{thm:regularization} gives
$S_{h_\eta}(\delta)\to S_h(\delta)$ for every $\delta>0$.
Because $S_h$ is finite and
concave on $(0,\infty)$, it is uniformly continuous on $[a,b]$.
Thus, for $a<b$ and $\zeta>0$, we can choose a partition
$a=t_0<\cdots<t_m=b$ such that
$\max_i[S_h(t_i)-S_h(t_{i-1})]<\zeta$.
Monotonicity gives
\[
\sup_{\delta\in[a,b]}|S_{h_\eta}(\delta)-S_h(\delta)|
\le\max_{0\le i\le m}|S_{h_\eta}(t_i)-S_h(t_i)|+\zeta.
\]
Letting $\zeta\downarrow0$ proves uniform convergence of $S_{h_\eta}$
to $S_h$ on $[a,b]$. If $a=b$, this follows directly from pointwise
convergence.
Because $h_\eta(1)=h(1)$, \eqref{eq:elliptical_inner} gives
\[
\sup_{\bm\omega\in\mathcal W}|V_{h_\eta}(\bm\omega)-V_h(\bm\omega)|
\le\max_{\bm\omega\in\mathcal W}s_{\bm\omega}
\sup_{\delta\in[a,b]}|S_{h_\eta}(\delta)-S_h(\delta)|\to0~~\mbox{as}~\eta\downarrow 0.
\]
Consequently,
\[
\begin{aligned}
\left|\min_{\mathcal W}V_{h_\eta}-\min_{\mathcal W}V_h\right|\le\sup_{\mathcal W}|V_{h_\eta}-V_h|\to0,\qquad
0\le V_h(\bm\omega_\eta^*)-\min_{\mathcal W}V_h
\le2\sup_{\mathcal W}|V_{h_\eta}-V_h|\to0.
\end{aligned}
\]
If $\bm\omega_{\eta_n}^*\to\bm\omega^*$, continuity gives
$V_h(\bm\omega^*)=\lim_nV_h(\bm\omega_{\eta_n}^*)
=\min_{\mathcal W}V_h$.
If the minimizer is unique, compactness makes every convergent
subsequence have this same limit, so $\bm\omega_\eta^*\to\bm\omega^*$.
\end{proof}

\Needspace{6\baselineskip}
\begin{proof}[Proof of Proposition~\ref{prop:outer_gap}]
The continuous objective \eqref{eq:elliptical_value} attains its
minimum on compact $\mathcal W$.
For $s_{\bm\omega}>0$, apply \eqref{eq:gap_h_bound} at radius
$\epsilon\|\bm\omega\|_{r'}/s_{\bm\omega}$ and use
\eqref{eq:elliptical_inner} to obtain
\[
V_{h^*}(\bm\omega)
-[\rho_{h^*}(P_\varphi)-\rho_h(P_\varphi)]s_{\bm\omega}
\le V_h(\bm\omega)\le V_{h^*}(\bm\omega).
\]
If $s_{\bm\omega}=0$, the benchmark is a point mass and
$\rho_{h^*}(G_{0,\bm\omega})-\rho_h(G_{0,\bm\omega})=0$;
the same bound follows directly from \eqref{eq:gap_h_bound}.
Taking infima and evaluating the upper bound at
$\bm\omega^{\mathrm{env}}$ gives
\[
\begin{aligned}
0\le V_h(\bm\omega^{\mathrm{env}})-\inf_{\mathcal W}V_h
\le V_{h^*}(\bm\omega^{\mathrm{env}})
-\min_{\bm\omega\in\mathcal W}
\{V_{h^*}(\bm\omega)
-[\rho_{h^*}(P_\varphi)-\rho_h(P_\varphi)]s_{\bm\omega}\}\le[\rho_{h^*}(P_\varphi)-\rho_h(P_\varphi)]
\max_{\bm\omega\in\mathcal W}s_{\bm\omega},
\end{aligned}
\]
where the last inequality uses
$V_{h^*}(\bm\omega^{\mathrm{env}})=\min_{\mathcal W}V_{h^*}$.
This proves \eqref{eq:decision_bound}.
\end{proof}

\begin{proof}[Proof of Proposition~\ref{prop:outer_radial}]
The shift in \eqref{eq:bench_shift} is proportional to the radius.
Thus \eqref{eq:radial_value} makes $\delta\mapsto\rho_h(P_\delta)$
continuous on $(0,\infty)$. Because $s_{\bm\omega}>0$ on $\mathcal W$,
\eqref{eq:radial_outer} is continuous and attains its minimum on
compact $\mathcal W$.
For $\delta=\epsilon\|\bm\omega\|_{r'}/s_{\bm\omega}$, feasibility of
$P_\delta$ and \eqref{eq:elliptical_inner} give
\[
\widehat V_h(\bm\omega)\le V_h(\bm\omega)\le V_{h^*}(\bm\omega),
\qquad
V_h(\bm\omega)-\widehat V_h(\bm\omega)
=s_{\bm\omega}[S_h(\delta)-\rho_h(P_\delta)].
\]
Consequently,
\[
\begin{aligned}
0\le V_h(\bm\omega^{\mathrm{app}})-\inf_{\bm\omega\in\mathcal W}V_h(\bm\omega)
\le V_h(\bm\omega^{\mathrm{app}})-\min_{\bm\omega\in\mathcal W}\widehat V_h(\bm\omega)=V_h(\bm\omega^{\mathrm{app}})-\widehat V_h(\bm\omega^{\mathrm{app}})\le V_{h^*}(\bm\omega^{\mathrm{app}})-\widehat V_h(\bm\omega^{\mathrm{app}}),
\end{aligned}
\]
which proves \eqref{eq:radial_decision}.
Compactness and $0\notin\mathcal W$ give
\[
0<\min_{\bm\omega\in\mathcal W}
\frac{\|\bm\omega\|_{r'}}{s_{\bm\omega}}
\le\frac{\|\bm\omega\|_{r'}}{s_{\bm\omega}}
\le\max_{\bm\omega\in\mathcal W}
\frac{\|\bm\omega\|_{r'}}{s_{\bm\omega}}<\infty,
\qquad \max_{\bm\omega\in\mathcal W}s_{\bm\omega}<\infty.
\]
Theorem~\ref{thm:radius_bound}(i) yields, for $p>1$,
$\sup_{\bm\omega\in\mathcal W}
[V_{h^*}(\bm\omega)-\widehat V_h(\bm\omega)]
=O\!\left(\epsilon^{-\min\{p-1,1\}}\right)$ as $\epsilon\to\infty.
$
Also, \eqref{eq:radial_value} gives
$\rho_h(P_\delta)\ge\rho_h(P_\varphi)$ and
$\rho_h(P_\delta)\to\rho_h(P_\varphi)$ as $\delta\downarrow0$.
Together with Theorem~\ref{thm:radius_bound}(i), this implies
$S_h(\delta)\to\rho_h(P_\varphi)$.
Monotonicity of $S_h$ therefore gives
\[
\begin{aligned}
0\le\sup_{\bm\omega\in\mathcal W}
[V_h(\bm\omega)-\widehat V_h(\bm\omega)]\le\max_{\bm\omega\in\mathcal W}s_{\bm\omega}
\left[S_h\!\left(\epsilon\max_{\bm\omega\in\mathcal W}
\frac{\|\bm\omega\|_{r'}}{s_{\bm\omega}}\right)
-\rho_h(P_\varphi)\right]\to0,
\qquad\epsilon\downarrow0.
\end{aligned}
\]
This proves the remaining assertion in (i).
For (ii), Theorem~\ref{thm:radius_bound}(ii) and
$\rho_h(P_\delta)\ge\rho_h(P_\varphi)$ give
\[
\begin{aligned}
V_h(\bm\omega)-\widehat V_h(\bm\omega)
\le h(1)\bm\omega^\top\bm\mu+\rho_h(P_\varphi)s_{\bm\omega}
+\epsilon\|\psi_h\|_{p'}\|\bm\omega\|_{r'}-\widehat V_h(\bm\omega)
\le\epsilon\|\psi_h\|_{p'}\|\bm\omega\|_{r'}
\le\epsilon\|\psi_h\|_{p'}\max_{\bm\omega\in\mathcal W}\|\bm\omega\|_{r'}.
\end{aligned}
\]
Evaluating at $\bm\omega^{\mathrm{app}}$ proves \eqref{eq:radial_decision_ac}.
\end{proof}

\end{document}